\documentclass{article} %
\usepackage{preprint}
\usepackage{times}
\iclrfinalcopy %

\usepackage{microtype}
\usepackage{amsmath,amssymb,amsfonts,amsthm}
\usepackage{booktabs}
\usepackage{tabularx}
\usepackage{hyperref}
\usepackage{xcolor}
\usepackage{xspace}
\usepackage{enumitem}
\usepackage{transparent}
\usepackage{listings}
\usepackage{dsfont}

\usepackage{algorithm}       %

\usepackage[noend]{algpseudocode}
\newcommand{\aim}{\texttt{AIM}\xspace}
\newcommand{\aimgdp}{\texttt{AIM-GDP}\xspace}
\newcommand{\pgm}{\textsf{PGM}}
\newcommand{\mixcal}{\texttt{MixCal}\xspace}
\newcommand{\dpsgd}{\texttt{DP-SGD}\xspace}

\newcommand{\set}[1]{\left\{#1\right\}}
\newcommand{\norm}[1]{\left\lVert#1\right\rVert}

\DeclareMathOperator*{\argmin}{arg\,min}

\usepackage{thmtools}
\usepackage{nicefrac}
\usepackage{inconsolata}
\usepackage[bb=dsserif]{mathalpha}
\usepackage[group-separator={,},group-minimum-digits={3}]{siunitx}
\usepackage{csquotes}
\usepackage{multirow}
\renewcommand{\cite}{\citep}

\usepackage{pgfplots}
\usetikzlibrary{fadings}
\usetikzlibrary{patterns}
\usetikzlibrary{shadows.blur}
\usetikzlibrary{shapes}
\usetikzlibrary{positioning}
\usetikzlibrary{arrows.meta}
\usetikzlibrary{fit}
\usetikzlibrary{calc}
\usepackage{subcaption}
\usepackage{wrapfig}

\usepackage{appendix}
\AtBeginEnvironment{appendices}{\crefalias{section}{appendix}}
\usepackage[capitalize]{cleveref}

\hypersetup{
    colorlinks=true,
    linkcolor=blue!70!black,
    citecolor=blue!70!black,
    urlcolor=blue!70!black
}

\DeclareMathOperator{\E}{\mathbb{E}}

\newcommand{\sD}{\mathbb{D}}

\newcommand{\sR}{\mathbb{R}}

\newcommand{\sY}{\mathbb{Y}}

\newcommand{\dd}{\mathrm{d}}

\newcommand{\bern}{\mathsf{Bern}}

\newcommand{\define}{~\smash{\triangleq}~}

\newcommand{\indic}{\mathds{1}}
\setitemize{itemsep=-.2em}

\definecolor{seabornblue}{HTML}{1f77b4}
\definecolor{seabornorange}{HTML}{ff7f0e}
\definecolor{seaborngreen}{HTML}{2ca02c}
\definecolor{seabornred}{HTML}{d62728}

\newtheorem{theorem}{Theorem}[section]
\newtheorem{proposition}{Proposition}[section]
\newtheorem{lemma}{Lemma}[section]
\newtheorem{corollary}{Corollary}[section]
\theoremstyle{definition}

\title{\centering Tight End-to-End Accounting for \\ Differentially Private Synthetic Data}

\usepackage{authblk}

\usepackage{authblk}

\author[1,2]{\bf Georgi Ganev}
\author[3]{\bf Vadym Doroshenko}
\author[4]{\bf Arun Ganesh}
\author[5]{\bf Juan Felipe Gomez}
\author[4]{\authorcr \bf Ryan McKenna}
\author[6]{\bf Bogdan Kulynych}

\affil[1]{\footnotesize UCL}
\affil[2]{\footnotesize SAS}
\affil[3]{\footnotesize Google}
\affil[4]{\footnotesize Google Research}
\affil[5]{\footnotesize Hasso Plattner Institute}
\affil[6]{\footnotesize Lausanne University Hospital (CHUV)}

\begin{document}

\maketitle

\begin{abstract}
Differentially private (DP) pipelines often combine heterogeneous mechanisms, yet their guarantees are best analyzed using different notions of privacy (e.g., bounded range, zCDP, ADP). When composing them, we have to analyze the guarantees through a reduced common representation that loses accounting precision.
This issue is particularly relevant in DP synthetic data generation, where data preprocessing, generation algorithms, and \dpsgd could appear within the same pipeline.
We develop novel tools for tight end-to-end accounting of such compositions using privacy loss distributions (PLD), $f$-DP, and Gaussian DP (GDP). In particular, we derive PLD and GDP analyses of bounded-range mechanisms such as the Exponential mechanism.
Building on these theoretical results, we introduce \emph{\aimgdp}, a GDP-native variant of \aim (a state-of-the-art DP synthetic data mechanism) that enables easy composability with complex pipelines via PLD, attack-aware risk calibration, and tight end-to-end auditing while consistently achieving better utility; and \emph{\mixcal}, an algorithm for calibrating adaptive mixed compositions using PLD accounting.
Overall, our tools enable tighter end-to-end accounting and thus less noise at the same privacy level in DP synthetic data pipelines.
\end{abstract}

\section{Introduction}
\label{sec:intro}

A core strength of differential privacy (DP)~\cite{dwork2006calibrating,dwork2014algorithmic} is its composability: we can construct highly complex randomized algorithms by combining multiple simpler primitives. At the same time, in modern data-analytic workflows, the components are highly heterogeneous, e.g., including different data preprocessing algorithms, and using combinations of learning algorithms.

Despite the composability of DP, there is a bottleneck when analyzing the privacy guarantees of such heterogeneous workflows. DP primitives have been developed using disparate analytical approaches, e.g., zero-concentrated DP (zCDP)~\cite{bun2016concentrated,dwork2016concentrated} for compositions of private selection and simple noise addition algorithms~\cite{dong2020optimal}, bounded range (BR)~\cite{durfee2019practical} for private selection, R\'enyi DP (RDP)~\cite{mironov2017renyi} for data-dependent compositions~\cite{papernot2018scalable} and early analyses of \dpsgd~\cite{abadi2016deep}, and privacy loss distributions (PLDs)~\cite{meiser2018tight} for more recent analyses of \dpsgd~\cite{koskela2020computing,gopi2021numerical,doroshenko2022connect}. Because of this, when composing multiple primitives analyzed using different approaches, we have to reduce the potentially rich representations to ``lowest common denominators,'' incurring loss in accounting precision.

As an example, consider \aim~\cite{mckenna2022aim}, a state-of-the-art algorithm for generating synthetic tabular data with DP. 
Internally, this algorithm is a fully adaptive composition~\cite{rogers2016privacy} of Exponential and Gaussian mechanisms~\cite{dwork2014algorithmic}.
The standard way to perform accounting for such compositions is using zCDP. This notion, although it is convenient and tightly characterizes the Exponential mechanism, does \emph{not} tightly characterize the trade-off curve~\cite{dong2019gaussian} of the Gaussian mechanism~\cite{zhu2022optimal,riess2026optimal}.
Because of this, the resulting privacy accounting is lossy both (1) within \aim itself, and (2) when \aim is further composed with \dpsgd for multimodal synthetic data generation~\cite{hu2026actg}.
In contrast, Gaussian DP (GDP)~\cite{dong2019gaussian} optimally characterizes the Gaussian mechanism, and would compose well with \dpsgd or other pipelines using PLD accounting, but there are no general analyses of the Exponential mechanism in terms of GDP that do not require a special structure~\cite{gopi2022private}.

In this work, we improve the accounting precision by developing methods that are more suited for end-to-end accounting of realistic synthetic data workloads.
Our core contributions are:

\begin{table*}[t!]
\vspace{-10pt}
\centering
\scriptsize
\setlength{\tabcolsep}{5pt}

\begin{minipage}[c]{0.4\textwidth}
\centering
\caption{Novel theoretical results}
\label{tab:theory}
\vspace{-7pt}

\begin{tabular}{clll}
\toprule
& \textbf{Mechanism} & \textbf{Analysis} & \textbf{Ref.} \\
\midrule
\textbf{i}
& Bounded range
& $f$-DP/GDP
& Sec.~\ref{sec:technical-results}
\\

\textbf{ii}
& Bounded range
& PLD
& Sec.~\ref{sec:technical-results}
\\

\textbf{iii}
& Partition selection
& PLD
& App.~\ref{app:extra-methods}
\\
\textbf{iv}
& Poisson subsampling
& PLD
& App.~\ref{app:extra-methods}
\\
\textbf{v}
& Group privacy
& Exponential
& App.~\ref{app:extra-methods}
\\
\bottomrule
\end{tabular}
\end{minipage}
\hfill
\begin{minipage}[c]{0.56\textwidth}
\centering
\caption{Specific applications and algorithms/pipelines}
\label{tab:applications}
\vspace{-7pt}

\begin{tabular}{llc}
\toprule
\textbf{Application}
& \textbf{Algorithm/pipeline}
& \textbf{Can use} \\
\midrule

Discrete synthesis
& \aimgdp
& i
\\

PLD calibration
& \mixcal
& ii--iii
\\

End-to-end synthesis
& \aimgdp + DP preproc. + \mixcal
& i--v
\\

Multimodal synthesis
& \aimgdp + DP-SGD + \mixcal
& i--v
\\

\bottomrule
\end{tabular}
\end{minipage}

\end{table*}

\begin{figure*}[t]
    \vspace{-10pt}
    \centering
    \begin{subfigure}[b]{0.32\linewidth}
        \centering
        \raisebox{3pt}{
        \includegraphics[width=0.99\linewidth]{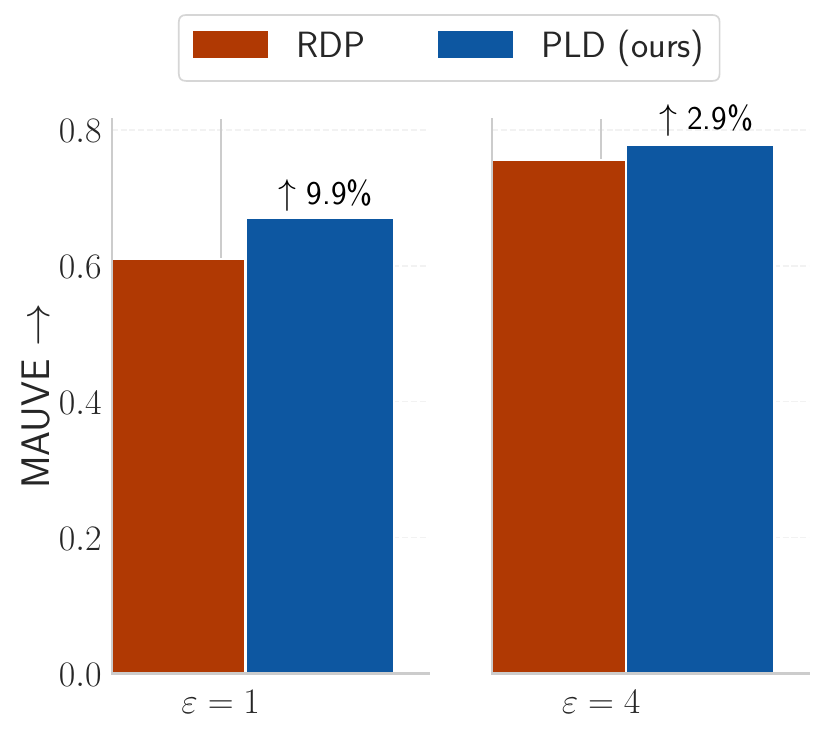}
        }
        \caption{Improved utility}
        \label{fig:utility}
    \end{subfigure}
    \hfill
    \begin{subfigure}[b]{0.32\linewidth}
        \centering
        \raisebox{5pt}{
        \includegraphics[width=0.99\linewidth]{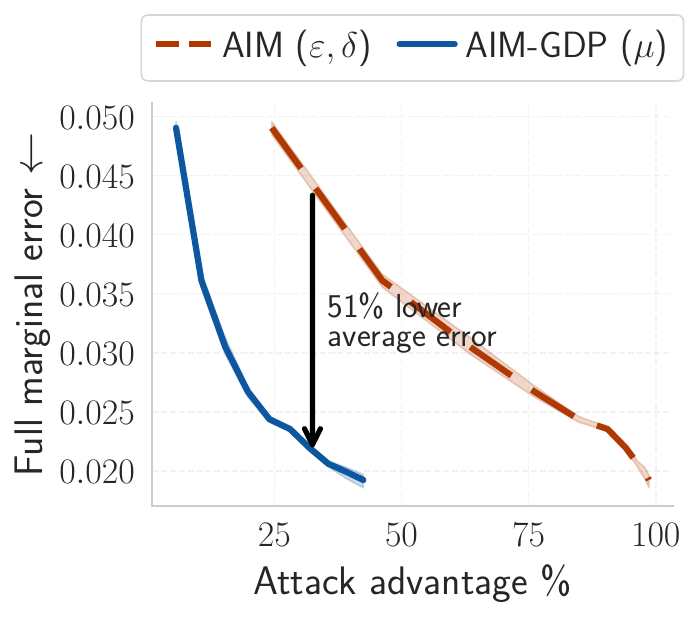}
        }
        \caption{Attack-aware calibration}
        \label{fig:attack}
    \end{subfigure}
    \hfill
    \begin{subfigure}[b]{0.32\linewidth}
        \includegraphics[width=0.99\linewidth]{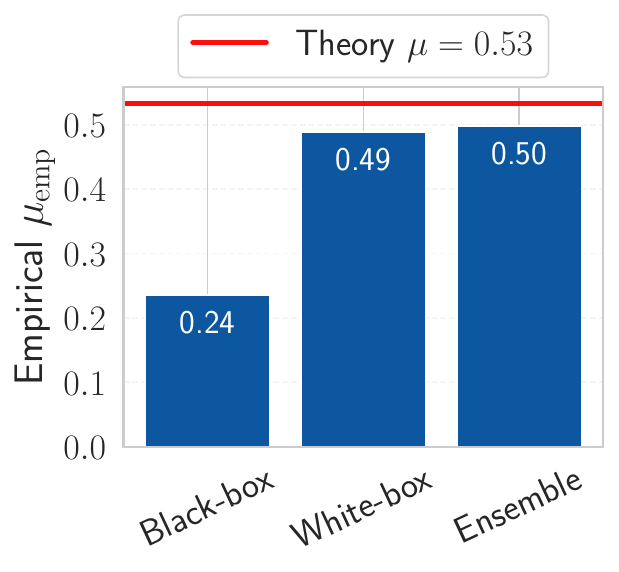}
        \caption{First end-to-end audit}
        \label{fig:audit}
    \end{subfigure}
    \caption{\textbf{Our methods provide improved utility and enable new capabilities in DP synthetic data workflows.} \emph{Top:} Summary of theoretical results and algorithmic contributions. \emph{Bottom:} (a) Our methods improve utility in synthetic text generation using LLMs. (b) Our methods enable to efficiently calibrate DP workflows to attack risk, with \emph{the first application of attack-aware calibration to synthetic data.} (c) Our methods support tight end-to-end privacy audits, with \emph{the first end-to-end audit of the synthetic data generator,} \aimgdp, our version of \aim.}
    \label{fig:fig1}
    \vspace{-10pt}
\end{figure*}

\begin{enumerate}[leftmargin=*]
    \item \textbf{PLD, $f$-DP, and GDP analyses of BR mechanisms for private selection.} Using these results, we show that GDP enables more precise analyses than zCDP in fully adaptive compositions where most of the privacy budget is spent on Gaussian mechanisms (as in \aim).
    We provide additional results that enable the support of more workloads, e.g., involving private partition selection, data subsampling, and multiple user contributions in \cref{app:extra-methods}. See a summary in \cref{tab:theory}.
    
    \item \textbf{\aimgdp.} Building on the GDP analysis of the Exponential mechanism, we propose \emph{\aimgdp}, a variant of \aim that natively satisfies GDP. It provides comparable/better utility (\cref{fig:utility}) with additional benefits. First, \aimgdp enables the \textbf{first efficient attack-aware calibration for DP synthetic data}~\cite{kulynych2024attack,kulynych2025unifying}, enabling practitioners to set privacy budgets based on target risk levels (\cref{fig:attack}). Second, it supports tight audits~\cite{nasr2023tight,annamalai2024what}, making it the \textbf{first end-to-end audited} implementation of \aim (\cref{fig:audit}). Third, it is \textbf{easily composable} in complex pipelines, as GDP is compatible with tight PLD composition analyses.

    \item \textbf{\mixcal.} We introduce \emph{\mixcal}, a new algorithm for calibrating parameters in mixed-mechanism adaptive compositions that combines the advantages of zCDP or GDP for simple calibration of parameters across multiple components, and of PLD accounting for precision.
    Building on this, we extend \aimgdp to more complex pipelines by combining it with DP preprocessing for end-to-end synthetic data generation and with \dpsgd for multimodal synthesis. See a summary in \cref{tab:applications}.
    
    \item We demonstrate the combinations of these algorithms in realistic synthetic data workflows, showing both performance improvements at the same privacy level and new qualitative capabilities. 
    Additionally, we release open-source implementations of our main methods (e.g., \aimgdp, \mixcal) in \url{https://github.com/google/dpsynth} under Apache 2.0 License.

\end{enumerate}

\section{Problem Statement}
\label{sec:background}

\subsection{Motivation and Objective}

Our main motivating setting is private tabular or multimodal synthetic data generation, even though, as we show next, our technical results are relevant to any mixed-mechanism compositions.

\begin{wrapfigure}[11]{r}{0.39\textwidth}
  \centering
  \vspace{-15pt}
  \includegraphics[width=0.99\linewidth]{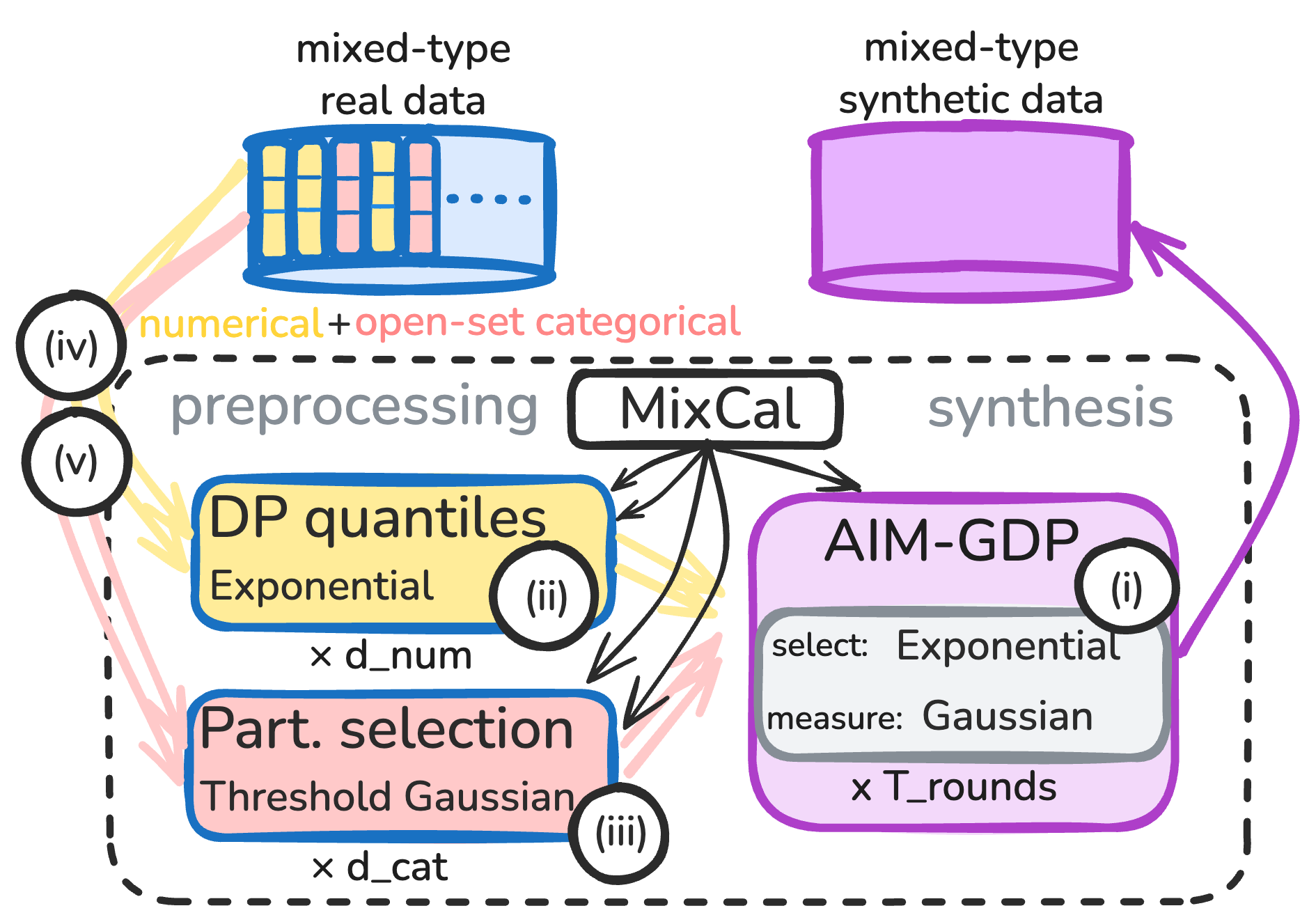}
  \vspace{-15pt}
  \caption{End-to-end synthesis.}
  \label{fig:c_pipeline}
\end{wrapfigure}

\emph{Discrete data.}
For synthetic tabular data, the best-performing DP mechanisms are marginal-based~\cite{tao2022benchmarking,chen2025benchmarking}; which typically operate on discrete domains, and consist of a combination of selection steps (e.g., using the Exponential mechanism), and measurement steps (e.g., using the Gaussian mechanism). Some mechanisms (like \aim) further rely on \emph{fully-adaptive} composition, where privacy budget assigned to each Exponential and Gaussian mechanism is a data-dependent random variable.

\emph{End-to-end data.}
Beyond discrete data, real-world datasets often contain a mix of categorical attributes with potentially unknown domains (open-set attributes) and numerical attributes. To leverage marginal-based mechanisms, we must map our data to a discrete domain.
For numerical attributes, data-dependent discretization, e.g., via DP quantiles~\cite{kaplan2022differentially}, is a preferred approach~\cite{ganev2025importance}.
For open-set categorical attributes, private domain discovery~\cite{gopi20setunion} poses an additional challenge, and, to the best of our knowledge, existing synthetic data implementations lack such support. \cref{fig:c_pipeline} shows an example pipeline, which we discuss in detail in \cref{sec:applications}. 

\emph{Multimodal data.} Beyond tabular data, generating synthetic \emph{unstructured} data usually involves using a \dpsgd-style mechanism (e.g., finetuning an LLM) \cite{ponomareva2026dp}.  ACTG, a state-of-the-art mechanism for this task consists of a synthetic tabular generation step plus \emph{conditional} finetuning step \cite{hu2026actg}. This work relies on RDP accounting due to the reliance on \aim that only provides a zCDP guarantee.

We provide a review of related mechanisms for data preprocessing and synthetization used in DP synthetic data pipelines in \cref{tab:mechanism_taxonomy} in the Appendix.

\textbf{Our goal.} The objective of our work is to provide methods for tight end-to-end privacy analyses for such heterogeneous pipelines. 
We observe two key challenges: (1) we lack precise analyses of mixed-mechanism compositions involving the Exponential mechanism for both fully adaptive and adaptive settings, and (2) it is unclear how to calibrate the parameters of the component mechanisms in such composed pipelines. In the remainder, we introduce several fundamental results and new algorithms to solve these challenges.

\subsection{Technical Background}
We provide a concise overview of the technical notions we use. For a more in-depth treatment, we refer the reader to \citet{zhu2022optimal}, \citet{gomez2026gaussian} or any of the cited works in this section.
Throughout, $P$ and $Q$ are probability distributions on a measurable space $\sY$. In an abuse of notation, we also write $P(y)$ and $Q(y)$ to denote 
their densities with respect to a common $\sigma$-finite measure, e.g., $P + Q$.
We write $(x)_+ \define \max\{x, 0\}$ and use the conventions $\log(\nicefrac{a}{0}) = +\infty$ and $\log(\nicefrac{0}{a}) = -\infty$ for $a > 0$, and $e^{-\infty} = 0$. 

We begin by defining two divergences: the hockeystick divergence of order $e^\varepsilon > 0$ and the R\'enyi divergence of order $\tau > 1$, respectively:
$$H_{e^\varepsilon}(P \parallel Q) \define \sup_{E \subseteq \sY}(P(E) - e^\varepsilon Q(E)), \quad D_\tau(P \parallel Q) \define \tfrac{1}{\tau-1}\log \E_{y \sim Q}\left(\tfrac{P(y)}{Q(y)}\right)^\tau.$$
With these divergences, we can define two notions of differential privacy (DP). A mechanism $M(\cdot)$ satisfies $(\varepsilon, \delta)$-DP if $H_{e^\varepsilon}(M(S) \parallel M(S')) \leq \delta$ for any \emph{neighbouring dataset} pair $S \simeq S'$. 
A mechanism satisfies $(\tau, \varepsilon_\text{RDP})$-R\'enyi DP (RDP) if $D_\tau(M(S) \parallel M(S'))\leq \varepsilon_\text{RDP}$. 
Zero-concentrated DP (zCDP)~\cite{dwork2016concentrated,bun2016concentrated} is a special case of RDP: a mechanism is $\rho$-zCDP if it is $(\tau, \tau \cdot \rho)$-RDP for any $\tau > 1$. 

The function $\delta: \sR \to [0,1]$ such that a mechanism satisfies $(\varepsilon, \delta(\varepsilon))$-DP for any $\varepsilon \in \sR$ is called a \emph{privacy profile}~\cite{balle2018privacy}, and is closely related to the operational formulation of DP known as $f$-DP~\cite{dong2019gaussian}. 
Let $T(P \parallel Q)$ denote the trade-off function which maps the Type I error (denoted by $\alpha$) of the most powerful test distinguishing $H_0 \sim P$ and $H_1 \sim Q$ to its Type II error (denoted by $\beta$). We say a mechanism satisfies $f$-DP if $f = T(P \parallel Q)$ for some pair $(P,Q)$ and $f \leq T(M(S) \parallel M(S'))$ for all $S \simeq S'$. 
An important special case of $f$-DP is Gaussian DP (GDP). We say a mechanism is $\mu$-GDP if it satisfies $f_\mu$-DP, with $f_\mu(\alpha) = \Phi(\Phi^{-1}(1 - \alpha) - \mu)$, where $\Phi(t)$ is the CDF of the standard normal distribution~\cite{dong2019gaussian}. 

The celebrated Gaussian mechanism~\citep{dwork2014algorithmic}, which adds Gaussian noise $\mathcal{N}(0, \sigma^2 \cdot I_d)$ to a vector query $q$ with $\| q(S) - q(S') \|_2 \leq 1$ for any pair $S \simeq S'$, is $\mu $-GDP with $\mu = \nicefrac{1}{\sigma}$. The Gaussian mechanism also satisfies $\rho$-zCDP with $\rho = \nicefrac{\mu^2}{2}$. 

One of the most fundamental properties about differential privacy is \emph{composition}, where the outputs of one mechanism are used as the input of another. For example, the composition of $M_1$ and $M_2$ is $M_\text{composed} = M_1(D, M_2(D))$. 
We refer to $M_1$ and $M_2$ as \emph{primitive} mechanisms.
We distinguish between two types of composition: the first, which we call \emph{adaptive composition} or simply \emph{composition}, is when the number of primitive mechanisms and the choice of all $(\varepsilon,\delta)$ is fixed ahead of time. The second type, which we call \emph{fully adaptive composition}, occurs when the number of primitive mechanisms is not known, and only a fixed budget $(\varepsilon, \delta)$ is known ahead of time~\cite{rogers2016privacy}. 

RDP admits a simple composition rule: if $M_1$ satisfies $(\tau, \varepsilon_\text{RDP})$-RDP and $M_2$ satisfies $(\tau, \varepsilon_\text{RDP}')$, then the composed mechanism satisfies $(\tau, \varepsilon_\text{RDP}+\varepsilon_\text{RDP}')$-RDP. For $\rho$-zCDP, the composition rule is particularly simple: sum the $\rho$ of the components. This RDP composition result holds for \emph{both} adaptive and fully-adaptive composition~\citep{feldman2021individual}. GDP composition also admits a simple form $\smash{\sqrt{\mu_1^2 + \mu_2^2}}$, and this  holds in fully-adaptive setting~\cite{smith2022fully,koskela2023individual}. Tight composition results for $(\varepsilon,\delta)$-DP and its generalizations (i.e. privacy profiles, $f$-DP) are not as simple, and there are different results for adaptive and fully-adaptive composition. 

For adaptive composition, tight composition results can be derived via dominating pairs~\cite{zhu2022optimal}. A pair of distributions $P, Q$ is a \emph{dominating pair} for a mechanism $M(\cdot)$ if for any $\varepsilon \in \sR$, we have $H_{e^\varepsilon}(P\parallel Q) \geq \sup_{S \simeq S'}H_{e^\varepsilon}(M(S) \parallel M(S'))$. We say $(P,Q)$ is a tightly dominating pair if the inequality is equality. 
Given a dominating pair $(P,Q)$, composition becomes simple: if $M_1$ has dominating pair $(P_1, Q_1)$ and $M_2$ has dominating pair $(P_2, Q_2)$, then the composed mechanism has dominating pair $(P_1 \otimes P_2, Q_1 \otimes Q_2)$. 

\emph{Privacy Loss Distributions} (PLDs) and \emph{Privacy Loss Random Variables} (PLRVs)~\citep{sommer2019privacy} make this analysis even simpler. Given a dominating pair $(P,Q)$, define the random variable $\omega_{P/Q} \define \log \nicefrac{P(y)}{Q(y)}$ with $y \sim P$ as the privacy loss random variable, and, its distribution as the PLD. If $M_1$ has PLRV $\omega_{P_1/Q_1}$ and $M_2$ has PLRV $\omega_{P_2/Q_2}$, then the composed mechanism has PLRV $\omega_{P_1/Q_1} + \omega_{P_2/Q_2}$. Because this addition is over independent random variables, the composed mechanism's PLD is a convolution of the individual PLDs. Given a PLD, we can compute the privacy profile as $\delta(\varepsilon) = \E[\smash{(1 - e^{\varepsilon - \omega_{P/Q}})_+}]$. 
Analogously to dominating pairs, we say that a PLD $\omega$ \emph{dominates} a mechanism $M(\cdot)$ if its privacy profile is an upper bound of the actual privacy profile of the mechanism for all $\varepsilon \in \sR$.
PLDs provide a recipe for tight composition of mechanisms: first, compute the PLD of each primitive, then convolve their PLDs and compute the composed profile or the trade-off function from the convolved PLD.

For fully-adaptive composition, the PLD approach generally does not provide valid bounds~\cite{regehr2026privacy, tran2026f}, and, as far as we are aware, the tightest known bounds for general heterogeneous fully-adaptive composition of mechanisms are given by $\rho$-zCDP or $\mu$-GDP.

We conclude this section with one last privacy representation: \emph{bounded range} (BR)~\citep{durfee2019practical}.
We say a mechanism $M:2^\sD \to \sY$ satisfies $\nu$-bounded range (BR) if for any $S \simeq S'$ there exists $0 \leq t \leq \nu$ such that for all measurable $E \subseteq \sY$, the range of privacy loss is bounded:
\begin{equation}\label{eq_def_br}
        \log\left( \frac{\Pr[M(S) \in E]}{\Pr[M(S') \in E]} \right) \in [-t, -t + \nu].
\end{equation}

Bounded range is especially suitable to characterizing the standard \emph{Exponential mechanism}. An Exponential mechanism $M(\cdot)$ with a discrete output space $\sY$ is a mechanism with probability distribution  
$
    \smash{\Pr[M(S) = y] \propto \exp\left(-\nicefrac{\nu}{2} \cdot q(S, y) \right)},
$
where $q: 2^\sD \to \sR$ is a query such that for any $S \simeq S'$ and $y \in \sY$, we have $|q(S, y) - q(S', y)| \leq 1$. Exponential mechanism satisfies $\nu$-BR. Notably, it also satisfies zCDP with $\rho = \nicefrac{1}{8} \cdot \nu^2$~\cite{durfee2019practical}, with a tighter expression recently given by \citet{harrison2025exact}.

\section{New Building Blocks for Heterogeneous Accounting}
\label{sec:technical-results}

In this section, we provide several new theoretical results that we later use as building blocks.

\paragraph{Novel analyses of Bounded Range mechanisms.}\label{sec:br}
We introduce a tight characterization of BR mechanisms by deriving their privacy profile, trade-off function, and PLD. From the trade-off function, we also derive a tight GDP bound. The latter is useful for analyzing \emph{fully adaptive compositions} of mechanisms involving BR, such as the \aim algorithm for synthetic data generation.

\begin{restatable}{theorem}{brprivprof}\label{stmt:br-to-pld}
    Let $M(\cdot)$ satisfy $\nu$-BR. Then $M(\cdot)$ satisfies $f$-DP and
    $(\varepsilon, \delta(\varepsilon))$-DP for all $\varepsilon \in \sR$, with
    \begin{equation}
        f(\alpha) = \frac{1 - \alpha}{\alpha e^\nu + 1 - \alpha}, \quad
        \delta(\varepsilon) = \begin{cases}
        1 - e^{\varepsilon}, & \varepsilon \le -\nu, \\
        \frac{\left(e^{\nu/2} - e^{\varepsilon/2}\right)^2}{e^{\nu} - 1}, & -\nu < \varepsilon < \nu, \\
        0, & \varepsilon \ge \nu.
        \end{cases}
    \end{equation}
    Moreover, $M(\cdot)$ is dominated by the PLD with density:
    \begin{equation}
        g_\nu(x) = D_\nu\, e^{x/2}\, \indic_{[-\nu, \nu]}(x), \quad \text{where } D_\nu = \tfrac{1}{2}\left(e^{\nu/2} - e^{-\nu/2}\right)^{-1}.
    \end{equation}
    These bounds are tight over the class of $\nu$-BR mechanisms: for every $\alpha \in [0, 1]$
    (respectively, every $\varepsilon \in \sR$) there exist a $\nu$-BR mechanism and a pair
    $S \simeq S'$ that attain $f(\alpha)$ (respectively, $\delta(\varepsilon)$). 

\end{restatable}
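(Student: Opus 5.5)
The plan is to reduce an arbitrary $\nu$-BR pair to a one-parameter family of binary (randomized-response-like) pairs. I then take the worst case over that parameter, and finally verify that $g_\nu$ reproduces the resulting profile.

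\emph{Step 1 (reduction to two-point pairs).} Fix $S \simeq S'$ and write $P = M(S)$, $Q = M(S')$. Let $t \in [0,\nu]$ be the offset given by \eqref{eq_def_br}. Taking singleton or small sets, the pointwise loss $L(y) = \log \nicefrac{P(y)}{Q(y)}$ lies in $[-t, -t+\nu]$ $Q$-a.e. Define the two-point pair $P_t = (p, 1-p)$, $Q_t = (q, 1-q)$ by $\log \nicefrac{p}{q} = \nu - t$ and $\log \nicefrac{1-p}{1-q} = -t$. This gives $q = \nicefrac{(e^t - 1)}{(e^\nu - 1)}$. I will show that $(P,Q)$ is a post-processing of $(P_t, Q_t)$. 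Writing $\lambda(y) = \nicefrac{P(y)}{Q(y)} \in [e^{-t}, e^{\nu - t}]$, I solve for the densities $A, B \ge 0$ in
\[
Q = qA + (1-q)B, \qquad P = pA + (1-p)B .
\]
The solution has the form $A(y) \propto Q(y)(\lambda(y) - e^{-t})$ and $B(y) \propto Q(y)(e^{\nu-t} - \lambda(y))$. Both are nonnegative precisely because $\lambda$ lies in that interval, and the normalizations follow from $\E_Q[\lambda] = 1$. The Markov kernel $b \mapsto A$ or $B$ then maps $(P_t,Q_t)$ to $(P,Q)$. By the data-processing (Blackwell) property of trade-off functions, $T(P \parallel Q) \ge T(P_t \parallel Q_t)$.

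\emph{Step 2 (lower envelope).} Each $T(P_t \parallel Q_t)$ is piecewise linear with a single kink. I will show $f(\alpha) = \nicefrac{(1-\alpha)}{(\alpha e^\nu + 1 - \alpha)}$ is convex and equals $\inf_{t \in [0,\nu]} T(P_t \parallel Q_t)(\alpha)$. The idea is that, as $t$ varies, the kink point of the two-point curve traces exactly the graph of $f$: set $\alpha = 1-p$ (or $p$, depending on the test convention), substitute $q$, and eliminate $t$. Since $f$ is convex, the chords of each two-point curve lie above it, so each two-point curve dominates $f$. Conversely, every point on the graph of $f$ is attained by the kink of some $t$, which gives the envelope. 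This gives $f$-DP.

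Tightness follows from this construction. The mechanism that outputs a bit with $M(S) \sim P_t$ and $M(S') \sim Q_t$ is $\nu$-BR, since its losses are exactly $-t$ and $\nu - t$. It attains $f(\alpha)$ at the corresponding $\alpha$. This envelope computation is the step I expect to be the main obstacle, mainly in getting the Type I/II conventions consistent.

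\emph{Step 3 (profile and PLD).} I obtain $\delta(\varepsilon)$ from $f$ by the standard duality $\delta(\varepsilon) = \sup_\alpha \bigl(1 - f(\alpha) - e^\varepsilon \alpha\bigr)$. Rather than optimizing directly, I verify the formula via $g_\nu$. First, $g_\nu$ is a probability density. Second, $\int e^{-x} g_\nu(x)\,\dd x = 1$, because $g_\nu(-x) = e^{-x} g_\nu(x)$, which is the PLD symmetry. Then
\[
\E\bigl[(1 - e^{\varepsilon - \omega})_+\bigr] = \int_{\max(\varepsilon,-\nu)}^{\nu} D_\nu e^{x/2}(1 - e^{\varepsilon - x})\,\dd x .
\]
For $\varepsilon \le -\nu$ this equals $1 - e^\varepsilon$. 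For $\varepsilon \ge \nu$ it equals $0$. In between it equals
\[
2D_\nu e^{-\nu/2}\bigl(e^{\nu/2} - e^{\varepsilon/2}\bigr)^2 = \frac{\bigl(e^{\nu/2} - e^{\varepsilon/2}\bigr)^2}{e^\nu - 1} .
\]
To show that this is the profile of $f$, I check that for each $\varepsilon$ the supremum is attained at a single two-point pair $(P_t,Q_t)$. This gives tightness in $\varepsilon$ via the same binary mechanisms. Its value equals the integral above, for instance by matching derivatives in $\varepsilon$. Hence $g_\nu$ has exactly the privacy profile $\delta(\varepsilon)$. Since Steps 1 and 2 show that $\delta(\varepsilon)$ upper-bounds the profile of every $\nu$-BR mechanism, $g_\nu$ dominates every such mechanism.
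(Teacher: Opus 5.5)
Your proposal is correct and shares the paper's overall skeleton (two-point bound, envelope over $t$, profile, PLD), but it establishes two of the steps differently.

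\textbf{Step 1.} The paper imports Lemma~4.1 of \citet{dong2020optimal} as a black box, namely $\beta_\phi \ge \max\{1-e^t\alpha_\phi,\, e^{t-\nu}(1-\alpha_\phi)\}$, which is exactly $T(P_t\parallel Q_t)$. Your Step~1 proves it constructively: the explicit kernel $A\propto Q(\lambda-e^{-t})$, $B\propto Q(e^{\nu-t}-\lambda)$ exhibits $(M(S),M(S'))$ as a garbling of $(P_t,Q_t)$. This also gives $H_{e^\varepsilon}(M(S)\parallel M(S'))\le\sup_t H_{e^\varepsilon}(P_t\parallel Q_t)$ directly by data processing. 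You thereby replace the paper's optimization of $1-e^\varepsilon\alpha-f(\alpha)$ over $\alpha$ with an optimization over $t$.

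\textbf{Envelope.} Your argument is equivalent to the paper's fixed-$\alpha$ intersection argument: the kinks $(1-p,q)$ lie on the convex graph of $f$, and the two segments are chords.

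\textbf{PLD.} The paper differentiates the profile ($\delta''-\delta'$, following \citet{sommer2019privacy}). That tacitly needs a regularity check, because $\delta$ is only $C^1$ at $\pm\nu$. You instead integrate $g_\nu$ and verify that it reproduces $\delta$. This is elementary and automatically rules out atoms.

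\textbf{What to make concrete.} Replace your ``matching derivatives'' remark with a direct maximization of
\[
H_{e^\varepsilon}(P_t\parallel Q_t)=\frac{(e^t-1)(e^{\nu-t}-e^\varepsilon)}{e^\nu-1},
\]
valid for $-t\le\varepsilon\le\nu-t$ and constant in $t$ outside that range. It is concave in $u=e^t$ with maximizer $u=e^{(\nu-\varepsilon)/2}$, which gives both $\delta(\varepsilon)$ and its tightness.

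\textbf{Conventions.} The profile dual to $T(P\parallel Q)$ is $H_{e^\varepsilon}(Q\parallel P)$. The family $\{(P_t,Q_t)\}_{t\in[0,\nu]}$ is closed under swapping, via $t\mapsto\nu-t$ after relabeling the two outcomes. This resolves the convention issue you anticipated.
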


\begin{wrapfigure}{r}{0.4\linewidth}
    \vspace{-1.5em}
    \centering
    \resizebox{\linewidth}{!}{\input{figures/tradeoff_curves.pgf}}
    \caption{Trade-off curves of a BR mechanism (higher is more precise) with $\nu = 4$, given by our $f$-DP  (\cref{stmt:br-to-pld}) and GDP analyses (\cref{stmt:br-to-gdp}), and zCDP using the recent optimal conversion~\cite{harrison2025exact}. To show the trade-off curves given by zCDP we use the conversion from RDP to $f$-DP~\cite{riess2026optimal}, for orders $> 1$.}
    \label{fig:trade-off-curves}
    \vspace{-2em}
\end{wrapfigure}

\begin{restatable}{theorem}{brtogdp}\label{stmt:br-to-gdp}
    Suppose that a mechanism $M(\cdot)$ satisfies $\nu$-BR. Then, it also satisfies $\mu$-GDP with:
    \begin{equation}\label{eq:br-to-gdp}
        \mu = -2 \Phi^{-1}\left(\frac{1}{e^{\nu/2} + 1}\right).
    \end{equation}
\end{restatable}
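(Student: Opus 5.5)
By \cref{stmt:br-to-pld}, every $\nu$-BR mechanism satisfies $f$-DP with $f(\alpha) = \frac{1-\alpha}{\alpha e^\nu + 1 - \alpha}$. So it suffices to show the pointwise inequality $f(\alpha) \ge f_\mu(\alpha) = \Phi(\Phi^{-1}(1-\alpha) - \mu)$ for all $\alpha\in[0,1]$, with $\mu$ as in \eqref{eq:br-to-gdp}. I also expect this $\mu$ to be the smallest valid one.

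\textbf{Step 1: the critical point.} Both $f$ and $f_\mu$ are symmetric trade-off functions, since each is its own inverse. For $f$ this is checked directly: solving $\beta = f(\alpha)$ for $\alpha$ gives the same formula. Their graphs therefore both pass through the diagonal $\alpha = \beta$. For $f$ the fixed point satisfies $\alpha(\alpha e^\nu + 1-\alpha) = 1-\alpha$, whose solution is $\alpha^* = 1/(e^{\nu/2}+1)$. For $f_\mu$ the fixed point is $\alpha = \Phi(-\mu/2)$. Equating the two gives exactly $\mu = -2\Phi^{-1}(1/(e^{\nu/2}+1))$. So $f$ and $f_\mu$ touch at the symmetric point, and a larger $\mu$ would be needed only if they crossed elsewhere. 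This also yields tightness: any smaller $\mu$ would put the GDP curve above $f$ at $\alpha^*$, and $f$ is attained by \cref{stmt:br-to-pld}.

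\textbf{Step 2: global dominance, the main obstacle.} I will change variables to $t = \Phi^{-1}(1-\alpha)$, so $\alpha = \Phi(-t)$ and $f_\mu(\alpha) = \Phi(t-\mu)$. In log-odds form, $f(\alpha) \ge f_\mu(\alpha)$ becomes
\[
h(t) \define \operatorname{logit}\Phi(t) - \operatorname{logit}\Phi(t-\mu) \le \nu \quad \text{for all } t \in \sR,
\]
because $f$ corresponds to $\operatorname{logit}(1-\beta) = \operatorname{logit}(1-\alpha) - \nu$, that is, logit$(\beta)$ shifted by $\nu$. By the symmetry $t \mapsto \mu - t$, which follows from $\operatorname{logit}\Phi(-s) = -\operatorname{logit}\Phi(s)$, $h$ is symmetric about $t = \mu/2$. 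At that point $h(\mu/2) = 2\operatorname{logit}\Phi(\mu/2) = \nu$ by the choice of $\mu$.

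It then remains to show that $\mu/2$ is the global maximum of $h$. My plan is to use concavity of $\psi(s) = \operatorname{logit}\Phi(s)$ for $s \ge 0$, together with its oddness. The derivative $\psi'(s) = \phi(s)/(\Phi(s)(1-\Phi(s)))$ is even and, I expect, decreasing on $[0,\infty)$. Then $h'(t) = \psi'(t) - \psi'(t-\mu)$ is positive for $t < \mu/2$, since $|t| < |t-\mu|$ there, and negative for $t > \mu/2$. Verifying that $\psi'$ is decreasing in $|s|$ is the one genuinely analytic step. It reduces to a Mills-ratio-type inequality, and I will try to derive it from log-concavity of $\Phi$ and $1-\Phi$, or cite the known fact that the Gaussian log-odds has its maximal slope at $0$. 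With this, $h \le \nu$ everywhere, so $f \ge f_\mu$ and $M$ is $\mu$-GDP.
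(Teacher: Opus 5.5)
\textbf{Genuine gap.} Your Step 1 is correct, and it even shows that this $\mu$ is optimal. Step 2, however, contains a sign error and rests on a false analytic claim; the two errors happen to cancel.

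First, the reformulation has the wrong direction. Since $\operatorname{logit} f(\alpha) = \operatorname{logit}\Phi(t) - \nu$ and $\operatorname{logit} f_\mu(\alpha) = \operatorname{logit}\Phi(t-\mu)$, the inequality $f(\alpha)\ge f_\mu(\alpha)$ is equivalent to $h(t)\ge\nu$, not $h(t)\le\nu$.

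Second, $\psi'(s)=\varphi(s)/(\Phi(s)(1-\Phi(s)))$ is not decreasing on $[0,\infty)$. Numerically, $\psi'(0)=4\varphi(0)\approx 1.60$, $\psi'(1)\approx 1.81$ and $\psi'(2)\approx 2.43$. Moreover, $\psi'(s)\sim s$ as $s\to\infty$ by the Mills-ratio asymptotics. So $\psi$ grows like $s^2/2$ and is convex, not concave, on $[0,\infty)$; the Gaussian log-odds has its \emph{minimal} slope at $0$. The step you flag as ``the one genuinely analytic step'' therefore cannot be verified. If it were true, it would give $h\le\nu$, i.e.\ $f\le f_\mu$, which is the opposite of what you need.

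\textbf{The repair.} Flip both errors. With $\psi'$ even and nondecreasing on $[0,\infty)$, your comparison $|t|<|t-\mu|$ for $t<\mu/2$ gives $h'(t)\le 0$ there and $h'(t)\ge 0$ for $t>\mu/2$. Hence $t=\mu/2$ is the global \emph{minimum} of $h$, and $h\ge h(\mu/2)=\nu$.

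Log-concavity of $\Phi$ and $1-\Phi$ does not by itself settle the sign of $\psi''$: it only makes $\psi=\log\Phi-\log(1-\Phi)$ a sum of a concave and a convex function. What is needed is a second-order fact. Write $\psi'(s)=\lambda(s)+\lambda(-s)$, where $\lambda=\varphi/\Phi$ is the inverse Mills ratio. Then $\psi''(s)=\lambda'(s)-\lambda'(-s)\ge 0$ for $s\ge 0$, by convexity of $\lambda$ (\cref{prop:mills-convex}).

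With these corrections, your argument is exactly the paper's proof in shifted coordinates. The paper's $L(t)$ is $\psi(t/2)$. Its reduction $D(u)=L(\mu+u)+L(\mu-u)-2L(\mu)\ge 0$ is your $h\bigl((\mu+u)/2\bigr)-\nu\ge 0$. Its \cref{stmt:L-derivative} is the statement that $\psi'$ is even and nondecreasing on $[0,\infty)$.
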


We provide the proofs of both statements and some discussion of them in \cref{app:proofs}.
There are various ways of comparing our new bounds to the previous tightest known zCDP bounds. We choose to compare the trade-off functions due to their interpretability.
We show the optimal BR trade-off function, our derived GDP trade-off function, and the tightest known trade-off function implied by zCDP in \cref{fig:trade-off-curves}. We observe that our optimal trade-off function provides a tighter bound than zCDP in all regimes, and our GDP bound provides a tighter bound than zCDP in some, but not all regimes. 

As we consider compositions of BR mechanisms, it is natural to ask if our analyses actually yield better results than zCDP. In practice, we care about compositions of mixed-mechanisms, often compositions of Gaussians and Exponential mechanisms. \cref{fig:gdp-demo} illustrates that our GDP analysis is tighter for mixed-mechanism compositions containing BR mechanisms where the majority of the budget is spent in Gaussian mechanisms, which is the setting of \aim, and we leverage this property next; and the PLD analysis is tighter for non-fully adaptive compositions, which is the setting of preprocessing workloads.

\begin{figure}
    \centering
    \resizebox{0.4\linewidth}{!}{\input{figures/gdp_benefit.pgf}}~
    \resizebox{0.4\linewidth}{!}{\input{figures/pld_benefit.pgf}}
    \caption{\textbf{Our methods enable tighter accounting in the regimes of interest.} On the y axis, we show the $\varepsilon$ decrease ($> 0$ means our methods are better) at $\delta = 10^{-5}$. \textit{Left:} Using GDP, as a function of the budget proportion of Gaussian mechanisms (x axis) and number of iterations (hue) in \aim-like workloads containing fully adaptive compositions of Exponential and Gaussians. \textit{Right:} Using PLD, as a function of the number of features (x axis) and the per-feature privacy parameter $\mu$ in data preprocessing-like workloads containing compositions of Exponential and Threshold Gaussian mechanisms. Details in \cref{app:extra-exp}.
    } 
    \vspace{-1em}
    \label{fig:gdp-demo}
\end{figure}

\paragraph{Subsampling, group privacy, tight partition selection.}
We provide a number of additional analyses for mechanisms of interest in various realistic data generation pipelines. All of these results are deliberately designed to be compatible with open-source software PLD libraries such as \texttt{dp\_accounting}. Due to space constraints, we defer the details to \cref{app:extra-methods}.

\emph{Poisson subsampling of arbitrary mechanisms.} We provide a method for converting a PLD of an arbitrary mechanism to the PLD for the mechanism wrapped with Poisson subsampling~\cite{abadi2016deep}. Previously, only additive mechanisms like Gaussian and Laplace mechanisms were supported. By enabling Poisson subsampling for arbitrary mechanisms, it becomes possible to scale synthetic data workflows to large datasets.

\emph{Group privacy for Exponential and other mechanisms on partitioned datasets.} We show that partitioning a dataset and then running, in particular, a $\nu$-BR mechanism on each subset of the partition satisfies $k\nu$-bounded range for groups of $k$ records. This enables a privacy analysis of mechanism such as DP quantiles~\cite{kaplan2022differentially} even when users can contribute multiple records. 

\emph{Characterization of Threshold Gaussian and Threshold Laplace.} The Threshold Gaussian and Laplace mechanisms~\cite{gopi20setunion} are useful mechanisms for DP partition selection. We provide a tighter characterization of their privacy guarantees in terms of PLD and privacy profiles.

\section{Applications}
\label{sec:applications}
Next, we use the results in \cref{sec:technical-results} to develop better algorithms for heterogeneous data analysis.
\paragraph{Discrete synthetic data with \aimgdp.}
\label{sec:gdpaim}
The developments in \cref{sec:br} can be used to improve the privacy analysis of any mechanism that uses the Exponential mechanism as a primitive. One such example is \aim~\cite{mckenna2022aim}, a state-of-the-art mechanism for synthetic data generation that operates within the select-measure-generate paradigm. \aim selects the worst-approximated marginal using Exponential mechanism, ``measures'' the selected marginal query using Gaussian mechanism, and repeats these two steps until the budget is depleted. Afterwards, a synthetic dataset is generated based on the measured marginals about real data. \aim is designed to spend the majority of the budget on the marginal measurements (Gaussian mechanism) because the signal is divided among the histogram bins and is hence more sensitive to noise. Whereas the original algorithm uses zCDP, our new GDP analysis in \cref{sec:br} provides improvements in exactly the regime where the majority of the budget in a mixed-mechanism composition is spent in Gaussian mechanisms. As a result, a GDP analysis of \aim is tighter than the original zCDP analysis. We present algorithms for both the original \aim, and our version, \aimgdp, in \cref{app:extra-details}. 

\paragraph{End-to-end synthetic data with \aimgdp, DP preprocessing, and \mixcal.}
\label{sec:mixed_type_synthesis}
One downside of \aim is that it can only be run on datasets with closed-domain or discrete attributes. In contrast, real-world tabular datasets typically contain a mix of numerical attributes and open-set categorical attributes with unbounded vocabularies (\cref{fig:c_pipeline}). Applying \aim to real datasets thus requires preprocessing these attributes with DP. As this is done on top of \aim without referencing the privacy budget, preprocessing workloads can be thought of as primitives that we compose \emph{non-}fully adaptively with \aim. For numerical attributes, 
we focus on using DP quantiles via recursive bisection~\cite{kaplan2022differentially} (see \cref{alg:quantiles} in the Appendix). Because this algorithm is a composition of standard Exponential mechanisms, it directly benefits from our PLD characterization of BR. For open-set categorical attributes, existing implementations either perform data-dependent domain construction without accounting for it~\cite{qian2023synthcity}, or do not support it at all~\cite{opendp2021smartnoise, mahiou2025dpmm}. We introduce the first principled integration of private partition selection, e.g., the Threshold Gaussian mechanism~\cite{gopi20setunion} (see \cref{alg:open_set} in the Appendix), into synthetic data pipelines, using our new PLD characterization to tightly account for domain discovery. Concretely, given a tabular dataset with numerical and open-set categorical attributes, first, we apply DP Quantiles and the Threshold Gaussian mechanism to map these attributes to discrete ones, and then, apply \aimgdp to generate synthetic data. The Gaussian PLD from \aimgdp, the PLDs of private set partition via Threshold Gaussian, and of discretization via DP quantiles (we introduce both in \cref{app:extra-methods}), are composable in a straightforward way using standard PLD accountants. Taken together, these results extend the usage of \aim to a larger class of tabular datasets. Moreover, in such workflows, different components might measure the same queries. In \cref{app:extra-details}, we also describe how to enable the re-use of such query outputs.

\emph{Calibrating the components.} In practice, however, there is a salient challenge when using PLD accounting in such workflows. We normally have a target guarantee, e.g., $(\varepsilon^\star, \delta^\star)$, or $(\alpha^\star, \beta^\star)$ for $f$-DP based attack-aware calibration (see \cref{sec:exp}). Therefore, we require a method to assign noise parameters (e.g., Gaussian parameter $\sigma$, Exponential Mechanism parameter $\nu$) to the primitive mechanisms such that the composed PLD satisfies the target guarantee.
A key challenge of this approach is that there is no clear way to decompose our target such that $\varepsilon_{\text{preprocess}} + \varepsilon_{\text{AIM}} = \varepsilon^\star$. This makes it difficult to ``assign'' more privacy budget to \aim and less to preprocessing, which is often desirable in practice. 

To solve this problem, we introduce a ``dummy'' privacy budget, denoted as $b$. As a simple example, suppose we are handling a composition of Gaussians, Exponential Mechanisms, and \aimgdp, all of which have a known PLD and $\rho$-zCDP guarantee. Starting from, e.g., the target $(\varepsilon^\star, \delta^\star)$, we use a conversion~\cite[e.g.,][]{bun2016concentrated} to zCDP to obtain the dummy privacy budget $b = \rho(\varepsilon^\star, \delta^\star)$. If we have a weight $w_1$ for preprocessing and weight $w_2$ for \aimgdp such that $w_1 + w_2 = 1$, we calibrate the preprocessing noise parameters to satisfy $w_1 \cdot b$-zCDP budget and calibrate \aimgdp to satisfy $w_2 \cdot b$-zCDP budget. We then use PLD accounting to analyze the composed mechanism. 
As our calibration itself is not precise, we would add more noise than necessary.
We thus perform binary search over $b$: if the resulting $\varepsilon$ is larger than the target $\varepsilon^\star$, we decrease $b$. Otherwise, we increase it. This allows us to meaningfully assign more or less privacy budget to preprocessing vs. \aimgdp, while making use of the tighter analysis of PLD over zCDP. 
This algorithm can be generalized to use any ``dummy'' budget that is additive under composition, e.g., GDP, and to variable numbers of compositions. We call this algorithm \mixcal and present it in \cref{alg:mixcal} in the Appendix.

\begin{wrapfigure}[9]{r}{0.33\textwidth}
  \centering
  \vspace{-20pt}
  \includegraphics[width=0.99\linewidth]{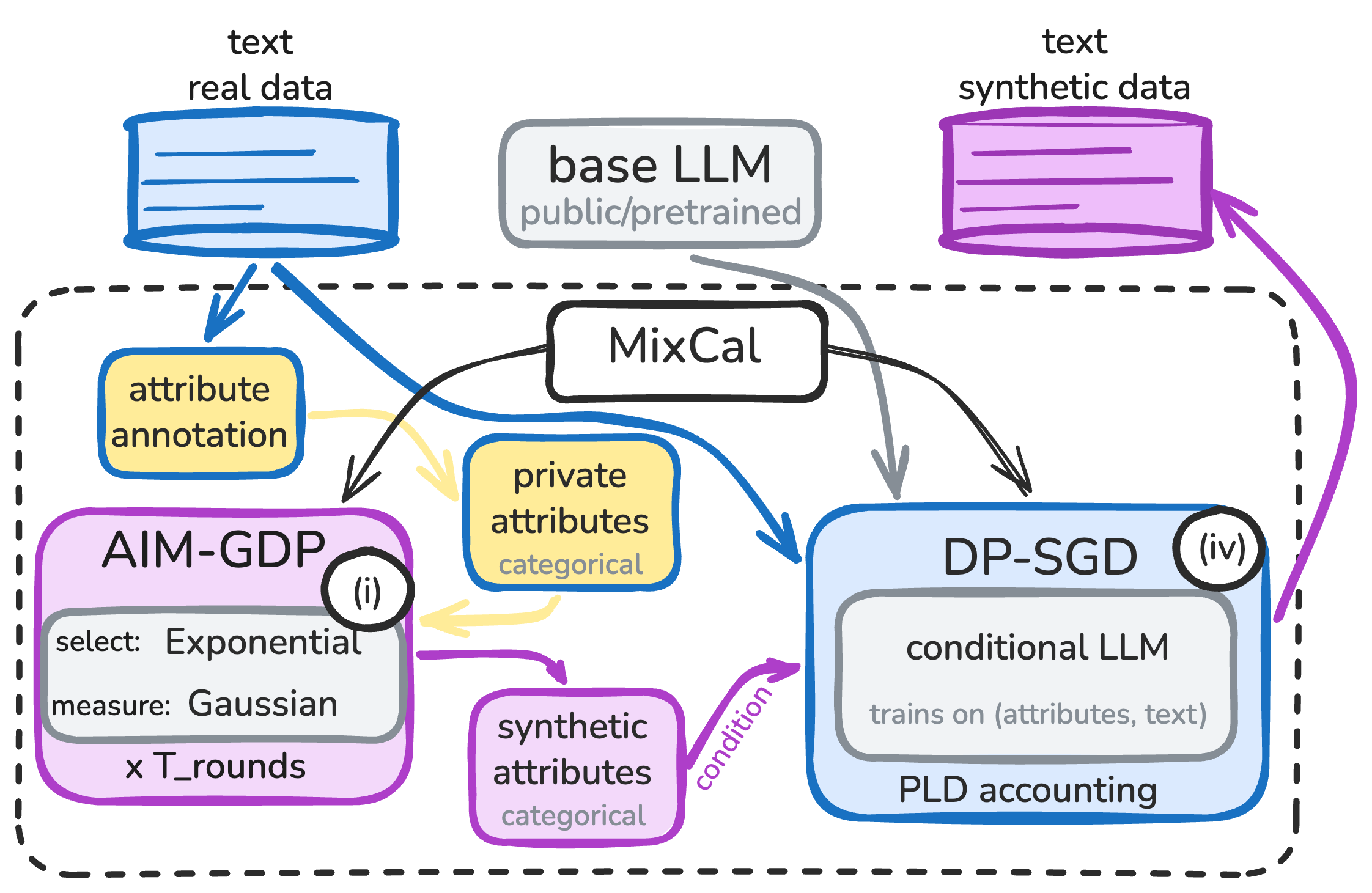}
  \vspace{-15pt}
  \caption{Multimodal synthesis.}
  \label{fig:d_pipeline}
\end{wrapfigure}
\paragraph{Multimodal synthetic data with \aimgdp, \dpsgd, and \mixcal.}
Recent work in synthetic text data generation with LLMs used \aim as a building block. For instance, the ACTG mechanism~\cite{hu2026actg} consists of several steps: (1) annotate input text data using a predefined tabular schema, (2) run \aim on the generated tabular data with a fraction of the budget, and (3) run \dpsgd to finetune a language model to predict the text given the tabular metadata (see \cref{fig:d_pipeline}).  The privacy analysis is hence a composition of \aim with \dpsgd, and, because \aim satisfies zCDP, \citet{hu2026actg} used RDP instead of the tighter PLD accounting. By replacing \aim with \aimgdp, we can analyze the algorithm end-to-end with PLD, and use \mixcal to allocate the budget. Beyond precision, PLD accounting enables more capabilities, such as using random allocation~\cite{feldman2025privacy} instead of Poisson subsampling in \dpsgd, which provides better utility in most regimes and is only analyzable using PLD, and subsampling large datasets using our new analyses in \cref{app:extra-methods}.

\section{Experimental Evaluation}
\label{sec:exp}

We now evaluate the algorithms introduced in \cref{sec:applications} across three synthetic data generation applications: discrete, end-to-end, and multimodal.

\subsection{Discrete Synthetic Data with \aimgdp}
\label{sec:exp-aimgdp}

We evaluate \aimgdp on four standard discrete datasets (with known categorical features; see ~\cref{tab:datasets} in the Appendix) across four privacy budgets, $\varepsilon \in \{0.5, 1, 2, 5\}$, $\delta = 10^{-5}$, with five runs per setting.
We assess utility using the mean TV distance between the marginal distributions of real data and model projections, computed over 1/2/3-way and full workloads.
We also demonstrate two additional capabilities enabled by \aimgdp: attack-aware risk calibration, and tight end-to-end privacy auditing. Note that these capabilities are also applicable to the entire pipelines analyzable via PLD using \mixcal. We focus on their evaluation for a standalone \aimgdp instance for simplicity.

\begin{table}[t]
\footnotesize
\centering
\caption{\textbf{\aimgdp consistently shows better utility.} We show
1/2/3-way and full marginal errors, number of measured marginals, and runtime, averaged across four datasets and four privacy budgets.
}
\label{tab:aimgdp_utility}
\resizebox{0.85\linewidth}{!}{
\begin{tabular}{ll cccc cc}
\toprule
&
\textbf{Algorithm}
& \textbf{1-way}
& \textbf{2-way}
& \textbf{3-way}
& \textbf{Full}
& \textbf{\#Marginals}
& \textbf{Time (s)} \\
\midrule
\textbf{Discrete} &
\texttt{\aim}
& 0.0035
& 0.0224
& 0.0556
& 0.0451
& 40.33
& \textbf{399.2} \\
&
\texttt{\aimgdp} (ours)
& \textbf{0.0033}
& \textbf{0.0216}
& \textbf{0.0542}
& \textbf{0.0439}
& \textbf{40.68}
& 404.6 \\
\midrule
\textbf{End-to-end} &
\texttt{\aim}
& 0.0038
& 0.0284
& 0.0763
& 0.0613
& 42.20
& \textbf{426.3} \\
&
\texttt{\aimgdp} (ours)
& \textbf{0.0036}
& \textbf{0.0274}
& \textbf{0.0746}
& \textbf{0.0599}
& \textbf{42.38}
& 444.8 \\
\bottomrule

\end{tabular}
}
\end{table}

\paragraph{Utility.}
\aimgdp consistently outperforms the standard \aim analyzed with zCDP, reducing marginal error by 3\% on average across datasets/privacy budgets.
See top of \cref{tab:aimgdp_utility} for a summary. 
In the Appendix, we present detailed comparisons (\cref{fig:aimgdp_full}), and additional diagnostics over 25 runs on the Adult dataset with $\varepsilon=1$, which show that \aimgdp adds less noise per measurement step, enabling it to run for more iterations (\cref{fig:aimgdp_diagnostics}), while selecting a very similar set of marginals (\cref{fig:aimgdp_network}).
Thus, our tighter accounting improves utility without substantially changing \aim's underlying behavior.
These results provide evidence that \textbf{\aimgdp can serve as a drop-in replacement for standard \aim}.

\paragraph{Attack-aware calibration.}
Choosing $\varepsilon$ is challenging in practice because it does not directly quantify an operational privacy risk.
In contrast, the $\mu$-GDP guarantee of \aimgdp, and the PLD representation of \aimgdp after composition with data preprocessing, directly characterize~\cite{kulynych2024attack} the full trade-off of any membership inference attacks (MIAs)~\cite[see, e.g.,][]{shokri2017membership,carlini2022membership}, which is a standard way to quantify the operational privacy risk.
This enables practitioners to specify an operational risk target, e.g., a maximum attack true positive rate (TPR) at a given low false positive rate (FPR), and calibrate \aimgdp directly to it, rather than first selecting $(\varepsilon,\delta)$ and subsequently translating it to attack risk~\cite{kulynych2024attack}. See \cref{app:riskcal} for details. In \cref{fig:attack} we show the results of calibrating \aimgdp to attack advantage~\cite{yeom2018privacy}, a standard risk measure that equals maximum TPR $-$ FPR of MIAs. As the plot shows, attack-aware calibration substantially improves the utility at the same level of risk (e.g., by 51\% on Adult).
To the best of our knowledge, this is the first such application to DP synthetic data.

\paragraph{Auditing.} It is crucial to audit~\cite[see, e.g.][]{nasr2023tight,annamalai2024what,ganev2025elusive} practical implementations of DP algorithms, as they often introduce software bugs that invalidate the theoretical guarantees. Our GDP analysis enables us to tightly audit the privacy guarantees of \aimgdp. 
To do so, we train 10,000 models on worst-case neighboring datasets and use the FPR/TPR of a MIA to statistically estimate a lower bound on the empirical privacy parameter $\mu_{\mathrm{emp}}$. See \cref{app:audit} for details.
As shown in \cref{fig:audit}, our attacks achieve $\mu_{\mathrm{emp}}=0.50$, closely approaching the theoretical bound of $\mu=0.53$.
To the best of our knowledge, this is the first tight end-to-end empirical audit of fully adaptive \aim, i.e., including its data-dependent marginal selection, as prior work restricted \aim to fixed marginal structures (e.g., all 1-way marginals) and did not establish comparable tightness~\cite{ganev2026tight}.

\subsection{End-to-end synthetic data with \aimgdp, DP preprocessing, and \mixcal}
\label{sec:exp-aimgdp-e2e}

We now evaluate \aimgdp in an end-to-end synthetic data pipeline, following the evaluation setup from the previous section, but privately preprocessing every column via either DP quantiles or Threshold Gaussian, as described in \cref{sec:applications}. We use \mixcal with zCDP as ``dummy'' budget to calibrate the pipeline components, allocating 10\% for preprocessing and 90\% for synthetization.

As before, \aimgdp outperforms \aim (with RDP accounting in this setting), reducing marginal error by over 2\% across datasets
(see bottom of \cref{tab:aimgdp_utility} and \cref{fig:aimgdp_full_e2e} in the Appendix).
Compared with the discrete setting in \cref{sec:exp-aimgdp}, errors decrease for predominantly categorical datasets but increase for predominantly numerical ones, showing that there are avenues for improvement in numerical preprocessing mechanisms.

Next, we illustrate how our end-to-end analyses can be used to support large-scale settings (e.g., more than 200K records) using the Poisson subsampling results (see \cref{sec:technical-results}). For this, we simulate a workload with 16 numerical attributes, each discretized into 16 bins using 4 iterations of DP Quantiles.
The resulting pipeline composes 64 Exponential mechanisms. We compare our tight PLD accounting, which applies subsampling directly in the PLD space, against an RDP baseline that bounds the Exponential mechanisms via zCDP. Our analysis yields a lower $\varepsilon$ across all sampling rates, improving over RDP by 5.6\% without subsampling ($q=1$) and by \emph{up to 66.7\%} at $q=2^{-6}$, with the gains increasing as the subsampling probability decreases. We show the results in \cref{fig:subsampled_accounting} in the Appendix.

\subsection{Generating Synthetic Text Data with LLMs and Tabular Conditioning}
To further demonstrate the benefits our methods, we replicate the attribute-conditioned text generation (ACTG) pipeline~\cite{hu2026actg}. 
This work has improved on existing DP synthetic text evaluations~\cite{xie2024differentially, tan2025synthesizing}, and found that ACTG outperformed strong baselines such as vanilla \dpsgd, private evolution~\cite{xie2024differentially}, and CTCL~\cite{tan2025synthesizing}. 
ACTG assumes records that consist of structured tabular attributes and text. ACTG uses \aim to produce a synthetic dataset of structured attributes, then finetunes a language model with \dpsgd to generate text conditioned on these attributes (see \cref{fig:d_pipeline}). We reproduce this pipeline on the \emph{bioRxiv} dataset using the Gemma 3 1B base model. ACTG has a reinforcement learning stage, but, as it is independent of the privacy analysis, we omit it in our demonstration. We compare the original \aim vs. \aimgdp with end-to-end PLD accounting using \mixcal. 
We used budget allocation weights that replicate the allocation used by \citet{hu2026actg}. See further details in \cref{app:extra-exp}.

We report our results in terms of the standard MAUVE scores~\cite{pillutla2023mauve} in \cref{fig:utility}. We observe consistent improvement in using PLD with \mixcal and \aimgdp, over RDP accounting with \aim. This gain is due to a combination of using \aimgdp over \aim and using lower noise parameters for \dpsgd when using PLD accounting compared to RDP accounting. Additionally, we show the improvements in noise multipliers of \dpsgd in \cref{tab:actg_noise_multipliers}, using Poisson subsampling and Random Allocation, which is a new capability enabled by PLD accounting compared to the RDP approach.

\begin{table}[t]
\footnotesize
\centering
\caption{\textbf{Our analyses enable adding substantially less noise in ACTG pipelines (combination of synthetization and \dpsgd).} The table shows \dpsgd noise multipliers $\sigma_{\mathrm{SGD}}$ 
at fixed iterations $T = 512, K = 8$, $\delta = 10^{-6}$ for varying values of $\varepsilon$. Poisson subsampling uses $q = K/T$.}
\label{tab:actg_noise_multipliers}
\resizebox{0.55\linewidth}{!}{
\begin{tabular}{cccc}
\toprule
$\varepsilon$ & \textbf{RDP (Poisson)} & \textbf{PLD (Poisson)} & \textbf{PLD (Random Alloc.)} \\
\midrule
1 & 2.34 & \textbf{2.18} & 2.25 \\
2 & 1.40 & 1.32 & \textbf{1.30} \\
4 & 0.97 & 0.92 & \textbf{0.89} \\
8 & 0.74 & 0.71 & \textbf{0.69} \\
\bottomrule
\end{tabular}
}
\vspace{-1em}
\end{table}

\section{Conclusions}
We unified disparate DP primitives under a single analytical and algorithmic framework, providing both qualitative and quantitative advantages for DP synthetic data pipelines with heterogeneous components. To support this, we presented GDP and PLD analyses for bounded range mechanisms, alongside a series of other results such as the PLD analysis of private partition selection (e.g., Threshold Gaussian). These analyses provide a versatile toolbox for future DP mechanism design. Applying these theoretical results, we introduced \mixcal and \aimgdp to calibrate and tightly account for mixed-mechanism workloads. By reducing the analytical slack in heterogeneous composition, our framework yields immediate utility gains across tabular and multimodal use cases without requiring major mechanism modifications. Ultimately, this work establishes a principled foundation for building and tightly accounting for increasingly complex, multimodal DP mechanisms.

\paragraph{Limitations.} Our GDP and PLD analyses of BR mechanisms do not provide tighter accounting than standard zCDP in \textit{all} cases. E.g., as we show in \cref{sec:technical-results}, GDP is only beneficial in heterogeneous pipelines when the majority of the budget is used for the Gaussian mechanisms, which is the case for \aim, but not other fully adaptive workflows, e.g., interactive statistical queries. 
Our \mixcal algorithm enables allocating the budget according to pre-specified weights, but we do not introduce a way for optimally choosing these weights. Finding optimal ways to do so is an avenue for future work.
Finally, although the GDP analysis used in \aimgdp is more precise than the zCDP analysis, it is not as tight as it would have been had the algorithm used only homogeneous components internally. Future work could compare our version to possible variants of \aim that, e.g., only use Gaussian mechanisms, which would enable even tighter analyses.

\subsection*{AI use statement}
We have used AI-based systems to assist with the implementation of the code for the experiments, visualizations, and LaTeX source code, as well as with proof ideation and verification. All the final text and proofs are written by the authors.

\bibliography{main}
\bibliographystyle{iclr2027_conference}

\clearpage
\begin{appendices}
\section{Proofs and Analysis of Bounded Range Mechanisms}
\label{app:proofs}

\subsection{BR to $f$-DP, privacy profiles and PLD}

The proof of the theorem rests on one result, which \citet{dong2020optimal} establish in the
course of proving their reduction to generalized randomized response (their
Lemma 4.1). 
\begin{lemma}[\citealp{dong2020optimal}]\label{stmt:br-to-beta}
    Suppose that a mechanism $M: 2^\sD \to \sY$ satisfies $\nu$-BR. Then, for any $S \simeq S'$ there exists $0 \leq t \leq \nu$ such that for any test $\phi: \sY \rightarrow \{0, 1\}$, we have:
    \begin{equation}\label{eq:br-to-bound}
        \beta_\phi \geq \max\{1 - e^t \alpha_\phi,\ e^{t - \nu}(1 - \alpha_\phi)\}.
    \end{equation}

    Moreover for each $\alpha_\phi\in[0, 1]$, there exists the mechanism, such that 
    the inequality turns to the equality.
\end{lemma}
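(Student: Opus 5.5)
Fix $S \simeq S'$ and let $P = M(S)$, $Q = M(S')$. By the definition of $\nu$-BR in \eqref{eq_def_br}, there is a $t \in [0,\nu]$ such that $e^{-t} Q(E) \le P(E) \le e^{\nu - t} Q(E)$ for every measurable $E$. I will fix this $t$ throughout and use the convention of the trade-off function $T(P\parallel Q)$: for a test $\phi$, $\alpha_\phi = \E_P[\phi]$ and $\beta_\phi = 1 - \E_Q[\phi]$. Both inequalities on sets extend to $[0,1]$-valued $\phi$ by approximating $\phi$ with simple functions, or simply by integrating pointwise density bounds, since $\log(P(y)/Q(y)) \in [-t, \nu - t]$ holds $(P+Q)$-a.e.

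\textbf{Lower bound.} There are two one-line inequalities.
\begin{itemize}
\item From $Q \le e^{t} P$ applied to $\phi$, we get $\E_Q[\phi] \le e^{t}\alpha_\phi$, hence $\beta_\phi \ge 1 - e^t\alpha_\phi$.
\item From $P \le e^{\nu-t}Q$ applied to $1-\phi$, we get $1-\alpha_\phi \le e^{\nu - t}\beta_\phi$, hence $\beta_\phi \ge e^{t-\nu}(1-\alpha_\phi)$.
\end{itemize}
Taking the maximum gives \eqref{eq:br-to-bound}. For deterministic tests $\phi:\sY\to\{0,1\}$ this is immediate with $E = \phi^{-1}(1)$. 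The argument holds for any $\phi$, so it does not rely on randomization.

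\textbf{Tightness.} I would use a generalized randomized response on $\sY = \{0,1\}$. Choose $P,Q$ with $P(1) = p$ and $Q(1) = q$, where
\[
p = \frac{1 - e^{t-\nu}}{1 - e^{-\nu}}, \qquad q = e^{t}p.
\]
These are the values at which the two lines $1 - e^t\alpha$ and $e^{t-\nu}(1-\alpha)$ intersect: solving $1 - e^t\alpha = e^{t-\nu}(1-\alpha)$ gives $\alpha = p$. The steps are then:
\begin{enumerate}
\item \emph{Check validity.} Verify that $q \le 1$. Then compute the log-likelihood ratios: on $\{1\}$ it is $\log(p/q) = -t$, and on $\{0\}$ it is $\log\frac{1-p}{1-q}$, which should come out to exactly $\nu - t$. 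All event ratios are mixtures of these two, so the pair is $\nu$-BR with this $t$.
\item \emph{Realize as a mechanism.} Take a two-point dataset domain with $M(S) = P$ and $M(S') = Q$, and constant on other datasets. The BR condition then holds trivially for all neighbours.
\item \emph{Compute the trade-off curve.} By Neyman--Pearson, the optimal (randomized) tests reject on $\{1\}$ with probability $\lambda$, plus on $\{0\}$ with additional probability once $\lambda = 1$.
\begin{itemize}
\item For $\alpha \le p$, rejecting only on $\{1\}$ with probability $\alpha/p$ gives $\beta = 1 - e^t \alpha$.
\item For $\alpha > p$, the type II error is linear in $\alpha$ and reaches $0$ at $\alpha = 1$, giving $\beta = e^{t-\nu}(1-\alpha)$.
\end{itemize}
So equality holds for every $\alpha_\phi$.
\end{enumerate}

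\textbf{Main obstacle.} The only delicate point is the phrase ``for each $\alpha_\phi$ there exists a mechanism.'' Equality at non-vertex values of $\alpha$ requires randomized tests. To stay within deterministic $\phi$, I would instead build a separate mechanism for each target $\alpha$ on a three-point space, splitting the mass of $\{1\}$ so that a deterministic set has $P$-mass exactly $\alpha$ while keeping the same two likelihood ratios. The degenerate cases $t=0$ or $t=\nu$, where one line is inactive, would be checked separately, as would $\nu = 0$, where $P = Q$.
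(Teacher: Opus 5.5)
Your lower bound is correct and standard: the BR condition gives $e^{-t}Q(E)\le P(E)\le e^{\nu-t}Q(E)$, and applying the two sides to $\phi$ and to $1-\phi$ gives the two lines. The paper does not reprove this lemma. It imports it from \citet{dong2020optimal}, where it comes from a reduction to a two-point generalized randomized response, so your tightness strategy is the right one.

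The gap is in the explicit parameters of the tightness construction. Solving $1-e^t\alpha=e^{t-\nu}(1-\alpha)$ gives $\alpha\,e^t(1-e^{-\nu})=1-e^{t-\nu}$, so the lines cross at $\alpha=\frac{e^{-t}-e^{-\nu}}{1-e^{-\nu}}$. Your $p=\frac{1-e^{t-\nu}}{1-e^{-\nu}}$ is $e^t$ times too large; it is the correct value of $Q(1)$, not of $P(1)$. As a result, your ``check validity'' step fails:
\begin{itemize}
\item For $\nu=2$, $t=1$ you get $p\approx 0.73$ and $q=e^tp\approx 1.99>1$.
\item Even where $q\le 1$, e.g.\ $\nu=1/2$, $t=1/4$, you get $\log\frac{1-p}{1-q}\approx 0.45\neq\nu-t$. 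The log-ratio range is then about $0.70>\nu$, so the pair is not $\nu$-BR.
\end{itemize}

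The fix is to determine $(p,q)$ from the two likelihood-ratio constraints $p/q=e^{-t}$ and $(1-p)/(1-q)=e^{\nu-t}$, rather than from a hand-computed crossing point. This gives
\[
p=\frac{e^{\nu-t}-1}{e^{\nu}-1},\qquad q=e^{t}p=\frac{e^{\nu}-e^{t}}{e^{\nu}-1},\qquad 1-q=\frac{e^{t}-1}{e^{\nu}-1}=e^{t-\nu}(1-p).
\]
With these values:
\begin{itemize}
\item $0\le p\le q\le 1$ for every $t\in[0,\nu]$, and $p$ is exactly the crossing point.
\item Your Neyman--Pearson computation goes through unchanged: $\beta=1-e^t\alpha$ for $\alpha\le p$, and $\beta=(1-q)\frac{1-\alpha}{1-p}=e^{t-\nu}(1-\alpha)$ for $\alpha>p$.
\item Your three-point splitting for deterministic tests works as you describe.
\item The degenerate cases are automatic. $t=0$ gives $p=q=1$ and $t=\nu$ gives $p=q=0$, i.e.\ $P=Q$, and both sides equal $1-\alpha$. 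This is consistent with the fact that BR with $t\in\{0,\nu\}$ forces $P=Q$: applying the one-sided inequality to $E$ and to its complement gives equality.
\end{itemize}
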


We show the result next.
\brprivprof*

\begin{proof}

We prove the statements in three main steps.

\paragraph{The $f$-DP trade-off function and its tightness.}

    To show the result, for a given $\alpha$ it suffices to find the smallest \cref{stmt:br-to-beta} bound for all $t \in [0, \nu]$:
    \begin{equation}\label{eq:proof-br-f-obj}
        f(\alpha) = \min_{t \in [0, \nu]} \max \{1 - e^t \alpha,~ e^{t- \nu} (1 - \alpha)\}.
    \end{equation}
    As the first term is decreasing in $t$, and the second one is increasing, the optimal value must occur at their intersection. Equating the two terms and solving for $t$  we get:
    \begin{equation}
        1 - e^t \alpha = e^{t - \nu} (1 - \alpha) \Rightarrow e^{t} = \frac{1}{\alpha + (1 - \alpha) e^{-\nu}}.
    \end{equation}
    Substituting $e^t$ into \cref{eq:proof-br-f-obj}, we have:
    \begin{equation}
        f(\alpha) = 1 - \frac{\alpha}{\alpha + (1 - \alpha) e^{-\nu}} = \frac{(1 - \alpha) e^{-\nu}}{\alpha + (1 - \alpha) e^{-\nu}} = \frac{1 - \alpha}{\alpha e^\nu + 1 - \alpha}.
    \end{equation}

    The tightness follows from the fact that each point $(\alpha, f(\alpha))$ is attainable by \cref{stmt:br-to-beta}.

\paragraph{From $f$-DP to privacy profile.}

Let $a = e^\nu - 1$, so $f(\alpha) = \frac{1-\alpha}{1+a\alpha}$, and write

$$\delta(\varepsilon) = \sup_{\alpha\in[0,1]} g(\alpha), \text{ where } g(\alpha) = 1 - e^\varepsilon\alpha - f(\alpha).$$

\textit{Concavity}. As $f'(\alpha) = -\frac{e^\nu}{(1+a\alpha)^2}$ and $f'' > 0$, the function $g$ is concave. Its derivative $g'(\alpha) = \frac{e^\nu}{(1+a\alpha)^2} - e^\varepsilon$ decreases from $g'(0) = e^\nu - e^\varepsilon$ to $g'(1) = e^{-\nu} - e^\varepsilon$.

\textit{Boundary cases.}

If $\varepsilon \ge \nu$, then $g' \le 0$ throughout, so $\delta = g(0) = 0$.
If $\varepsilon \le -\nu$, then $g' \ge 0$ throughout, so $\delta = g(1) = 1 - e^\varepsilon$.

\textit{Interior case: $|\varepsilon| < \nu$}. In this case, $g'(\alpha^*) = 0$ at $1 + a\alpha^* = s \define e^{(\nu-\varepsilon)/2} \in (1, e^\nu)$. At this point, $f(\alpha^*) = \frac{e^\nu - s}{as}$, and $e^\varepsilon s = \frac{e^\nu}{s} = e^{(\nu+\varepsilon)/2}$. Therefore:

$$\delta(\varepsilon) = \frac{a - e^\varepsilon(s-1) - (e^\nu - s)/s}{a} = \frac{e^\nu + e^\varepsilon - 2e^{(\nu+\varepsilon)/2}}{e^\nu - 1} = \frac{\big(e^{\nu/2} - e^{\varepsilon/2}\big)^2}{e^\nu - 1}.$$

At $\varepsilon = \pm\nu$ this equals $0$ and $1 - e^{-\nu}$ respectively, which matches the boundary cases.
The tightness follows from the tightness of the tradeoff function.

\paragraph{From the privacy profile to PLD.}

If $\delta(\varepsilon)$ is twice-differentiable profile function, then by \citet{sommer2019privacy}, we have that $g = \delta''-\delta'$ is the density of PLD.

On $(-\nu,\nu)$, differentiating
$\delta(\varepsilon) = \big(e^{\nu} - 2e^{(\nu+\varepsilon)/2}
+ e^{\varepsilon}\big)/(e^{\nu}-1)$ gives:
\begin{equation}
    \delta_\nu'' - \delta_\nu'
    = \frac{\left(e^{\varepsilon}-\tfrac12e^{(\nu+\varepsilon)/2}\right)
          - \left(e^{\varepsilon}-e^{(\nu+\varepsilon)/2}\right)}
           {e^{\nu}-1}
    = \frac{e^{\nu/2}}{2\left(e^{\nu}-1\right)}e^{\varepsilon/2}
    = D_\nu\,e^{\varepsilon/2},
\end{equation}
where $2\big(e^{\nu/2}-e^{-\nu/2}\big)
= 2e^{-\nu/2}\big(e^{\nu}-1\big)$ gives
\begin{equation}\label{eq:Dnu}
    D_\nu = \frac{1}{2\left(e^{\nu/2}-e^{-\nu/2}\right)}
          = \frac{e^{\nu/2}}{2\left(e^{\nu}-1\right)} .
\end{equation}
Observe that outside of the interval $[-\nu,\nu]$, the density vanishes: $\delta(\varepsilon) \equiv 0$ on $(\nu,\infty)$,
and on $(-\infty,-\nu)$ we have
$\delta_\nu''-\delta_\nu' = -e^{\varepsilon}+e^{\varepsilon} = 0$.

\end{proof}

Next, we provide several observations about these results.

\paragraph{Dominating PLD of a $\nu$-BR mechanism corresponds to $2\nu$-BR.}
The dominating PLD $g_\nu$ in \cref{stmt:br-to-pld} is supported on $[-\nu, \nu]$, so the
corresponding mechanism is, surpisingly, $2\nu$-BR and not $\nu$-BR, as one would expect. This is because a pair dominating the whole class, for all neighbouring datasets, must
have privacy loss reaching up to $\nu$, and, applying the same argument to the
reversed pair, reaching down to $-\nu$. Consequently, no dominating pair with privacy loss having bounded range $\nu$, and thus no single
$\nu$-BR mechanism, dominates the entire class of BR mechanisms: every point of the bounds in
\cref{stmt:br-to-pld} is attained by some $\nu$-BR mechanism, yet no single mechanism attains
all of them simultaneously. This contrasts with, e.g., the Laplace and Gaussian mechanisms, including
their subsampled versions, whose tightly dominating pairs are realized by the mechanism itself
on a worst-case pair of inputs.

\paragraph{Suboptimality of the PLD analysis.}
Theorem 3 of \citet{dong2020optimal} provides the exact optimal privacy profile for the
homogeneous non-adaptive composition of $k$ mechanisms that are $\nu$-BR. 
At $k=1$, the two results coincide. For any $k>1$, however, the PLD approach of the $k$-fold convolution of $g_\nu$ should be larger (i.e., worse guarantees), as the result from \citeauthor{{dong2020optimal}} is optimal for the homogeneous composition of BR mechanisms.

At the same time, the PLD analysis has an important advantages over the optimal expression: It enables simple composition with any mechanisms that also support PLD analyses, unlike the optimal expression that only works for homogenous compositions of BR mechanisms. The looseness is thus the price of that flexibility.

\paragraph{Despite the suboptimality, PLD analysis is tighter than zCDP in low-composition regimes.}
In \cref{fig:br-to-pld} we show empirical evaluations illustrating that, under homogeneous composition of BR mechanisms, for some number of compositions, zCDP provides a better $(\varepsilon, \delta)$-DP bound. The explanation is that both are upper bounds and the tight PLD of BR is not bounded-range itself, thus it might have larger zCDP.

\begin{figure}
    \centering
    \resizebox{\linewidth}{!}{\input{figures/br_compositions.pgf}}
    \caption{The DP $\varepsilon$ at $\delta = 10^{-5}$ reported using the PLD (ours), zCDP~\cite{harrison2025exact}, and optimal~\cite{dong2020optimal} analyses for the compositions of homogeneous BR mechanisms.}
    \label{fig:br-to-pld}
\end{figure}

\paragraph{Simple form of a TV privacy guarantee.}
We observe an interesting form in terms of \emph{total-variation (TV) privacy}~\cite[see, e.g.,][]{geng2015optimal,barber2014privacy,kulynych2022what}:
An algorithm satisfies $\eta$-TV privacy if it satisfies $(0, \eta)$-DP.
TV is not a useful privacy notion on its own, but it provides simple forms of risk and generalization bounds~\cite{cherubin2024closed,kulynych2025unifying}.
For the classical notion of $(\varepsilon, 0)$-DP, \citet{kairouz2015composition} showed that an $(\varepsilon, 0)$-DP mechanism satisfies $\eta$-TV privacy with:
\begin{equation}\label{eq:dp-to-tv}
    \eta = \frac{e^{\varepsilon} - 1}{e^{\varepsilon} + 1} = \mathrm{tanh}(\nicefrac{\varepsilon}{2}).
\end{equation}

The conversion from BR to TV privacy is as follows:
\begin{corollary}
    Suppose that a mechanism $M(\cdot)$ satisfies $\nu$-BR. Then, it satisfies $\eta$-TV privacy with:
    $$\eta = \frac{e^{\frac{\nu}{2}} - 1}{e^{\frac{\nu}{2}} + 1} = \mathrm{tanh}(\nicefrac{\nu}{4}).$$
\end{corollary}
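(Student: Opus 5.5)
The plan is to read off the TV guarantee directly from the privacy profile in \cref{stmt:br-to-pld}, since $\eta$-TV privacy is by definition $(0,\eta)$-DP. Because $\nu > 0$, the point $\varepsilon = 0$ lies in the interior case $-\nu < \varepsilon < \nu$. We therefore evaluate the middle branch of $\delta(\varepsilon)$ at $\varepsilon = 0$, which gives
\begin{equation*}
    \eta = \delta(0) = \frac{\left(e^{\nu/2} - 1\right)^2}{e^{\nu} - 1}.
\end{equation*}

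The only algebraic step is the factorization $e^{\nu} - 1 = (e^{\nu/2} - 1)(e^{\nu/2} + 1)$. Cancelling one factor of $e^{\nu/2} - 1$, which is nonzero for $\nu > 0$, gives $\eta = \frac{e^{\nu/2} - 1}{e^{\nu/2} + 1}$. This has exactly the shape of \cref{eq:dp-to-tv} with $\varepsilon$ replaced by $\nu/2$. Hence $\eta = \tanh(\nu/4)$, by the identity $\frac{e^{x}-1}{e^{x}+1} = \tanh(x/2)$ applied with $x = \nu/2$.

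For the degenerate case $\nu = 0$, the mechanism's output distribution does not depend on the dataset, so $\eta = 0 = \tanh(0)$ directly. Tightness over the class of $\nu$-BR mechanisms is inherited from the tightness statement of \cref{stmt:br-to-pld} at $\varepsilon = 0$.

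There is no real obstacle here. The only points needing care are confirming that $\varepsilon = 0$ falls in the interior branch rather than on a boundary, and handling the trivial $\nu = 0$ case. As a cross-check, one can instead maximize $1 - \alpha - f(\alpha)$ directly from the trade-off function. The maximizer satisfies $1 + (e^\nu - 1)\alpha^* = e^{\nu/2}$, and substituting it yields the same value.
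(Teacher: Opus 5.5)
Your proof is correct and follows the route the paper intends: the paper gives no separate argument and presents the corollary as an immediate consequence of \cref{stmt:br-to-pld}, i.e., evaluating the interior branch of $\delta(\varepsilon)$ at $\varepsilon = 0$ and simplifying via $e^{\nu}-1 = (e^{\nu/2}-1)(e^{\nu/2}+1)$. Your separate treatment of $\nu = 0$ and your cross-check through the trade-off function are sound additions; the cross-check reproduces the paper's stationary point $1 + (e^\nu - 1)\alpha^* = e^{(\nu-\varepsilon)/2}$ at $\varepsilon = 0$.
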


Curiously, this is exactly the conversion from $\varepsilon$-DP to TV privacy from \cref{eq:dp-to-tv} with a difference of a factor of two.

\subsection{PLD of Laplace mechanism as a mixture of RR and BR}

Next, we discuss an interesting relationship of the result in \cref{stmt:br-to-pld} to the PLD of the Laplace mechanism.
We write $\mathrm{Lap}_\nu$ for the Laplace mechanism of scale $1/\nu$ on a
statistic of sensitivity $1$, which is $\nu$-DP, and let
$\omega^{\mathrm{Lap}}_\nu$ denote its PLD. Define the following distribution:
\begin{equation}
    \omega^{\mathrm{RR}}_\nu :=
    \begin{cases}
        \nu, & \text{w.p. } \dfrac{e^{\nu}}{1+e^{\nu}}, \\[6pt]
        -\nu, & \text{w.p. } \dfrac{1}{1+e^{\nu}},
    \end{cases}
\end{equation}
which is the PLD of the standard randomized response (RR) mechanism with parameter $\nu$, which tightly dominates
the class of $\nu$-DP mechanisms \cite{kairouz2015composition}, and let
$\omega^{\mathrm{BR}}_\nu$ be the PLD with density $g_\nu$ of Theorem
\ref{stmt:br-to-pld}. All three are supported on $[-\nu,\nu]$.

The Laplace PLD is well-known:
\begin{lemma}[{\citealp[App.~A.3]{sommer2019privacy}}]\label{lem:laplace-pld}
$\omega^{\mathrm{Lap}}_\nu$ is the measure on $[-\nu,\nu]$ consisting of two
atoms,
\begin{equation}
    \Pr\left[\omega^{\mathrm{Lap}}_\nu = \nu\right] = \tfrac12,
    \qquad
    \Pr\left[\omega^{\mathrm{Lap}}_\nu = -\nu\right] = \tfrac12 e^{-\nu},
\end{equation}
together with an absolutely continuous part on $(-\nu,\nu)$ of density
\begin{equation}
    \tfrac14\,e^{-\nu/2}\,e^{w/2} .
\end{equation}
\end{lemma}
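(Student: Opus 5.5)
We would prove the lemma by direct computation from an explicit dominating pair, as in \citet[App.~A.3]{sommer2019privacy}. For a sensitivity-$1$ statistic, the outputs on $S \simeq S'$ are $\mathrm{Lap}(q(S), 1/\nu)$ and $\mathrm{Lap}(q(S'), 1/\nu)$. By translation invariance and the symmetry $y \mapsto -y$, we may take $P = \mathrm{Lap}(0, 1/\nu)$ and $Q = \mathrm{Lap}(\Delta, 1/\nu)$ with $\Delta = |q(S) - q(S')| \in [0,1]$.

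\emph{Reduction to $\Delta = 1$.} This is the only non-computational step. We first argue that $H_{e^\varepsilon}(P \parallel Q)$ is nondecreasing in $\Delta$. One route: $\mathrm{Lap}(\Delta,1/\nu)$ is obtained from $\mathrm{Lap}(1,1/\nu)$ versus $\mathrm{Lap}(0,1/\nu)$ by a common Markov kernel. A simpler route: the privacy loss $\nu(|y-\Delta| - |y|)$ is a clipped linear function whose range $[-\nu\Delta, \nu\Delta]$ grows with $\Delta$, and one can check the hockey-stick monotonicity directly. Alternatively, we simply cite the standard fact that $(\mathrm{Lap}(0,1/\nu), \mathrm{Lap}(1,1/\nu))$ tightly dominates the Laplace mechanism. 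We therefore set $\Delta = 1$ and identify $\omega^{\mathrm{Lap}}_\nu$ with the law of $L(y) = \log\frac{P(y)}{Q(y)} = \nu(|y-1| - |y|)$ under $y \sim P$.

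\emph{Computing the law of $L(y)$.} We split the real line into three regions.
\begin{itemize}
\item For $y \le 0$, $L = \nu$. Under $P$ this has mass $\tfrac12$, which gives the atom at $\nu$.
\item For $y \ge 1$, $L = -\nu$. This has mass $\tfrac12 e^{-\nu}$, which gives the atom at $-\nu$.
\item For $0 < y < 1$, $L = \nu(1-2y) =: w \in (-\nu,\nu)$. This map is a strictly decreasing bijection with $y = (1 - w/\nu)/2$ and $|dy/dw| = 1/(2\nu)$. Changing variables in the density $\tfrac{\nu}{2}e^{-\nu y}$ gives
\[
\tfrac{\nu}{2}\, e^{-(\nu - w)/2}\cdot\tfrac{1}{2\nu} = \tfrac14 e^{-\nu/2} e^{w/2},
\]
which is exactly the claimed absolutely continuous part.
\end{itemize}

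\emph{Sanity check.} The total mass is $\tfrac12 + \tfrac12 e^{-\nu} + \tfrac14 e^{-\nu/2}\cdot 2\left(e^{\nu/2} - e^{-\nu/2}\right) = 1$, as required.

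\emph{Main obstacle.} The change of variables and the mass check are routine. The only point needing care is the reduction to the worst-case shift $\Delta = 1$. We would handle it by citing tight dominance from \citet{sommer2019privacy}, or else by the monotonicity argument above.
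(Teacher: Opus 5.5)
Your proposal is correct. The two atoms, the change of variables on $(0,1)$ giving density $\tfrac14 e^{-\nu/2}e^{w/2}$, and the mass check all reproduce the stated law, with the paper's orientation $\omega=\log\frac{P(y)}{Q(y)}$, $y\sim P$. The paper offers no argument of its own; it imports the lemma from \citet[App.~A.3]{sommer2019privacy}, and your computation on the worst-case pair $(\mathrm{Lap}(0,1/\nu),\mathrm{Lap}(1,1/\nu))$ is the standard derivation of that law. For the reduction to $\Delta=1$, citing tight dominance or the Blackwell/monotone-likelihood-ratio argument is sound. The ``range grows with $\Delta$'' remark alone does not establish hockey-stick monotonicity.
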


Next, we show the that the PLD of Laplace mechanism is a mixture of the PLD of randomized response and bounded range:
\begin{proposition}\label{prop:laplace-mixture}
For every $\nu > 0$,
\begin{equation}\label{eq:mixture}
    \omega^{\mathrm{Lap}}_\nu
    = \lambda_\nu\, \omega^{\mathrm{RR}}_\nu
      + \left(1-\lambda_\nu\right) \omega^{\mathrm{BR}}_\nu ,
    \qquad
    \lambda_\nu = \frac{1+e^{-\nu}}{2},
\end{equation}
as an identity of distributions on $\sR$.
\end{proposition}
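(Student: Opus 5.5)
Since all three PLDs live on $[-\nu,\nu]$ and are fully explicit, the identity in \cref{eq:mixture} is checked by comparing measures piece by piece. Each side is a sum of an atomic part at $\{-\nu,\nu\}$ and an absolutely continuous part on $(-\nu,\nu)$. By the uniqueness of the Lebesgue decomposition, two such measures are equal iff the atoms carry the same masses and the continuous parts have the same densities. The plan is therefore to match the atoms of $\omega^{\mathrm{Lap}}_\nu$ from \cref{lem:laplace-pld} against the $\lambda_\nu$-weighted randomized-response atoms, and to match the Laplace density against the $(1-\lambda_\nu)$-weighted density $g_\nu$ from \cref{stmt:br-to-pld}.

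\emph{Atoms.} $\omega^{\mathrm{BR}}_\nu$ has no atoms, so the atoms of the right-hand side are exactly $\lambda_\nu$ times the randomized-response atoms. At $\nu$:
\[
\lambda_\nu \cdot \frac{e^\nu}{1+e^\nu}=\frac{(1+e^{-\nu})e^\nu}{2(1+e^\nu)}=\frac12 .
\]
At $-\nu$:
\[
\lambda_\nu \cdot \frac{1}{1+e^\nu}=\frac{1+e^{-\nu}}{2(1+e^\nu)}=\frac{e^{-\nu}}{2} .
\]
Both agree with \cref{lem:laplace-pld}.

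\emph{Continuous part.} First note
\[
1-\lambda_\nu=\frac{1-e^{-\nu}}{2}.
\]
Using \cref{eq:Dnu}, $D_\nu=\frac{e^{\nu/2}}{2(e^\nu-1)}$, we get
\[
(1-\lambda_\nu)D_\nu=\frac{(1-e^{-\nu})e^{\nu/2}}{4(e^\nu-1)}=\frac{e^{-\nu/2}}{4}.
\]
Hence $(1-\lambda_\nu)g_\nu(w)=\tfrac14 e^{-\nu/2}e^{w/2}$ on $(-\nu,\nu)$, which is exactly the Laplace density. The endpoints are Lebesgue-null, so the indicator $\indic_{[-\nu,\nu]}$ versus the open interval causes no issue.

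\emph{Sanity check.} As a consistency check (not logically needed), verify the total masses. The atoms give $\tfrac12(1+e^{-\nu})=\lambda_\nu$. The density integrates to $\tfrac12 e^{-\nu/2}(e^{\nu/2}-e^{-\nu/2})=1-\lambda_\nu$. These sum to one, confirming that $\lambda_\nu\in(0,1)$ is a legitimate mixture weight.

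\emph{Main obstacle.} There is no conceptual difficulty here; the only real step is the algebraic simplification of $(1-\lambda_\nu)D_\nu$. The one point needing care is convention: \cref{lem:laplace-pld} must be stated for the privacy loss under the same orientation $y\sim P$ used for $\omega^{\mathrm{RR}}_\nu$ and $g_\nu$, with the larger atom at $+\nu$. This is consistent with $\Pr[\omega^{\mathrm{RR}}_\nu=\nu]=e^\nu/(1+e^\nu)$.
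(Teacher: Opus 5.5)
Your proof is correct and follows essentially the same route as the paper: both compare the atoms at $\pm\nu$ with $\lambda_\nu$ times the randomized-response atoms, and the absolutely continuous part with $(1-\lambda_\nu)g_\nu$. The only difference is direction: the paper normalizes the Laplace pieces by $\lambda_\nu$ and $1-\lambda_\nu$, whereas you scale the mixture components and explicitly invoke uniqueness of the atomic/continuous decomposition, which is a harmless and slightly more complete presentation.
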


\begin{proof}
By Lemma \ref{lem:laplace-pld} the atoms carry total mass
$\tfrac12\left(1+e^{-\nu}\right) = \lambda_\nu$, and after
normalisation their masses are those of $\omega^{\mathrm{RR}}_\nu$, since
$\frac{1/2}{\lambda_\nu} = \frac{1}{1+e^{-\nu}}
= \frac{e^{\nu}}{1+e^{\nu}}$. The continuous part carries the
remaining mass:
\begin{equation}
    \tfrac14e^{-\nu/2}\!\int_{-\nu}^{\nu}\!e^{w/2}\mathrm{d}w
    = \tfrac12\left(1-e^{-\nu}\right) = 1 - \lambda_\nu ,
\end{equation}
and normalising its density gives
$\frac{\tfrac14e^{-\nu/2}}{1-\lambda_\nu}e^{w/2}
= \frac{e^{w/2}}{2\left(e^{\nu/2}-e^{-\nu/2}\right)}
= D_\nu e^{w/2}$, i.e.\ exactly $g_\nu$.
\end{proof}

Because $\delta(\varepsilon)
= \mathbb{E}\big[(1-e^{\varepsilon-\omega})_{+}\big]$ is \emph{linear}
in the law of $\omega$, the mixture form of \eqref{eq:mixture} also extends to privacy profiles.

\begin{corollary}\label{cor:laplace}
For every $\varepsilon \in \sR$,
\begin{equation}\label{eq:delta-mixture}
    \delta_{\mathrm{Lap}_\nu}(\varepsilon)
    = \lambda_\nu\,\delta_{\omega^{\mathrm{RR}}_\nu}(\varepsilon)
      + \left(1-\lambda_\nu\right)\delta_\nu(\varepsilon)
    \;\ge\; \delta_\nu(\varepsilon),
\end{equation}
that is, $\omega^{\mathrm{BR}}_\nu \preceq \omega^{\mathrm{Lap}}_\nu$. In
particular the Laplace PLD is itself a valid, but strictly loose, dominating PLD
for the $\nu$-BR class.
\end{corollary}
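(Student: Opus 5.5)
The corollary has two parts: an identity of privacy profiles, and an inequality asserting $\delta_{\mathrm{Lap}_\nu}\ge\delta_\nu$. I would prove them in that order.

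\textbf{The identity.} It follows from \cref{prop:laplace-mixture} together with linearity of the profile in the law of the PLRV. For any PLRV $\omega$, the profile is $\delta_\omega(\varepsilon)=\E[(1-e^{\varepsilon-\omega})_+]$. This is the integral of a fixed, bounded, nonnegative measurable function against the law of $\omega$, so it is affine under mixtures of laws. Applying this to \eqref{eq:mixture} gives
$$\delta_{\mathrm{Lap}_\nu}=\lambda_\nu\,\delta_{\omega^{\mathrm{RR}}_\nu}+(1-\lambda_\nu)\,\delta_\nu.$$
Here $\delta_\nu$ is the profile of $g_\nu$, which by \cref{stmt:br-to-pld} equals the closed-form profile given there. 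I would note in passing that $\delta_{\mathrm{Lap}_\nu}$ is the profile induced by the Laplace PLD $\omega^{\mathrm{Lap}}_\nu$; this is standard.

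\textbf{The inequality.} Since $\lambda_\nu\in(0,1)$, the identity writes $\delta_{\mathrm{Lap}_\nu}$ as a convex combination of $\delta_{\omega^{\mathrm{RR}}_\nu}$ and $\delta_\nu$. It therefore suffices to show $\delta_{\omega^{\mathrm{RR}}_\nu}(\varepsilon)\ge\delta_\nu(\varepsilon)$ for every $\varepsilon$. I would avoid direct computation and argue structurally instead. Every $\nu$-BR mechanism is $\nu$-DP, because the privacy loss lies in $[-t,-t+\nu]\subseteq[-\nu,\nu]$. By \citet{kairouz2015composition}, $\omega^{\mathrm{RR}}_\nu$ dominates every $\nu$-DP mechanism, so $\delta_{\omega^{\mathrm{RR}}_\nu}(\varepsilon)$ is at least the profile of any $\nu$-BR mechanism. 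By the tightness clause of \cref{stmt:br-to-pld}, for each $\varepsilon$ some $\nu$-BR mechanism attains $\delta_\nu(\varepsilon)$, which gives $\delta_{\omega^{\mathrm{RR}}_\nu}(\varepsilon)\ge\delta_\nu(\varepsilon)$.

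As a sanity check, I would verify the same inequality directly in the three regimes. The RR profile is $\frac{e^\nu-e^\varepsilon}{1+e^\nu}$ on $|\varepsilon|<\nu$, it equals $1-e^\varepsilon$ for $\varepsilon\le-\nu$, and it is $0$ for $\varepsilon\ge\nu$. On $|\varepsilon|<\nu$, set $x=e^{\varepsilon/2}$ and $c=e^{\nu/2}$. The target inequality $\frac{(c-x)^2}{c^2-1}\le\frac{c^2-x^2}{1+c^2}$ reduces, after dividing by $c-x>0$, to $(c-x)(1+c^2)\le(c+x)(c^2-1)$. This simplifies to $2c\le 2xc^2$, i.e.\ $xc\ge1$, which holds because $\varepsilon>-\nu$. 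Outside $(-\nu,\nu)$ the two profiles coincide.

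\textbf{Strictness and the remaining claims.} Strictness on $(-\nu,\nu)$ follows because $xc>1$ there and $\lambda_\nu>0$. Then $\omega^{\mathrm{BR}}_\nu\preceq\omega^{\mathrm{Lap}}_\nu$ means exactly that the Laplace profile pointwise dominates the BR profile. Because $g_\nu$ dominates the $\nu$-BR class, so does the Laplace PLD, and it is loose since the inequality is strict on an interval.

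The only step needing care is justifying the RR-versus-BR comparison. I would present the structural argument and keep the algebraic check as a confirmation.
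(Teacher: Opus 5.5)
Your proposal is correct and follows essentially the same approach as the paper: linearity of the profile in the law of the PLRV gives the identity, and the fact that $\omega^{\mathrm{RR}}_\nu$ dominates every $\nu$-DP mechanism gives $\delta_{\omega^{\mathrm{RR}}_\nu}\ge\delta_\nu$, which the convex combination then transfers. You go further than the paper in two ways: you spell out the step it leaves implicit ($\nu$-BR implies $\nu$-DP, and $\delta_\nu(\varepsilon)$ is attained pointwise by some $\nu$-BR mechanism), and your direct check on $(-\nu,\nu)$ proves the strictness (``strictly loose'') that the paper asserts without proof.
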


\begin{proof}
The equality is linearity of $\delta$ in the mixture. Since $\omega^{\mathrm{RR}}_\nu$ dominates every $\nu$-DP mechanism
\cite{kairouz2015composition}, we have $\delta_{\omega^{\mathrm{RR}}_\nu} \ge \delta_\nu$
pointwise. Substituting this into the equality and using
$\lambda_\nu + (1-\lambda_\nu) = 1$ gives the inequality.
\end{proof}

\subsection{BR to GDP}

In this section, we prove the GDP guarantee for BR mechanisms:
\brtogdp*

First, we make use of the properties of the \emph{inverse Mills ratio}.

\begin{proposition}[\citealp{sampford1953some}]\label{prop:mills-convex}
The inverse Mills ratio $\lambda(t) \define \nicefrac{\varphi(t)}{\Phi(t)}$, where $\varphi(t)$ is the density of the standard Normal distribution, is convex on $\sR$.
In particular, $\lambda'$ is nondecreasing.
\end{proposition}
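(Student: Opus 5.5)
The plan is to reduce convexity of $\lambda$ to a moment inequality for a truncated normal, and then to a classical fact about log-concave (IFR) distributions. First compute derivatives. Since $\varphi'(t) = -t\varphi(t)$ and $\Phi' = \varphi$, we get
\begin{equation*}
\lambda'(t) = -\lambda(t)\bigl(t+\lambda(t)\bigr).
\end{equation*}
Differentiating once more and substituting $\lambda'$ gives
\begin{equation*}
\lambda''(t) = \lambda(t)\Bigl[\bigl(t+\lambda(t)\bigr)^2 + \lambda(t)\bigl(t+\lambda(t)\bigr) - 1\Bigr].
\end{equation*}
Because $\lambda > 0$, convexity is equivalent to
\begin{equation*}
m(t)\bigl(m(t)+\lambda(t)\bigr) \ge 1, \qquad m(t) \define t+\lambda(t).
\end{equation*}
The last claim of the proposition, that $\lambda'$ is nondecreasing, then follows immediately from $\lambda''\ge 0$.

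Second, interpret these quantities probabilistically. Let $Z \sim \mathcal{N}(0,1)$ conditioned on $Z \le t$. The standard truncated-normal formulas give
\begin{equation*}
\E[Z] = -\lambda(t), \qquad \mathrm{Var}(Z) = 1 - t\lambda(t) - \lambda(t)^2 = 1 - \lambda(t)\, m(t).
\end{equation*}
Now set $Y \define t - Z \ge 0$. Then $\E[Y] = m(t) > 0$ and $\mathrm{Var}(Y) = 1-\lambda m$. Rearranging, the target inequality $m(m+\lambda)\ge 1$ is exactly
\begin{equation*}
\mathrm{Var}(Y) \le \bigl(\E[Y]\bigr)^2,
\end{equation*}
that is, the coefficient of variation of $Y$ is at most $1$.

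Third, establish this coefficient-of-variation bound, which is the main step. The variable $Y$ has density proportional to $\exp(-(t-y)^2/2)$ on $[0,\infty)$. This density is log-concave, so $Y$ has an increasing failure rate. A classical result (Barlow--Proschan) states that IFR distributions on $[0,\infty)$ satisfy $\E[Y^2] \le 2(\E[Y])^2$, with the exponential distribution as the extremal case; this is equivalent to the desired bound.

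If I want to avoid the citation, I would prove the bound directly. Write $S$ for the survival function of $Y$, so that
\begin{equation*}
\E[Y] = \int_0^\infty S(y)\,\dd y, \qquad \E[Y^2] = 2\int_0^\infty y\,S(y)\,\dd y.
\end{equation*}
Log-concavity of $S$ (inherited from the density) makes the conditional mean residual life $\E[Y - y \mid Y > y]$ nonincreasing, and integrating this monotonicity yields $\int_0^\infty yS \le \bigl(\int_0^\infty S\bigr)^2$. I expect this step, making the IFR/mean-residual-life argument rigorous, to be the only nontrivial part. The derivative computations and the truncated-moment identities are routine.
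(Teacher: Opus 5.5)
Your proof is correct. It differs from the paper, which gives no argument of its own and imports the convexity from Sampford's 1953 note.

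You derive $\lambda'=-\lambda m$ and $\lambda''=\lambda\,(m^2+\lambda m-1)$ with $m=t+\lambda$. You then recognize $m$ and $1-\lambda m$ as the mean and variance of $Y=t-Z$, where $Z\sim\mathcal N(0,1)$ is conditioned on $Z\le t$. This reduces convexity to $\mathrm{Var}(Y)\le(\E Y)^2$ for a variable with log-concave density on $[0,\infty)$. Each of these identities checks out.

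The step you flag as the nontrivial one is short:
\begin{itemize}
\item The survival function $S$ of $Y$ is log-concave with $S(0)=1$, so $S(y+v)\le S(y)S(v)$.
\item Hence $\int_y^\infty S=\int_0^\infty S(y+v)\,\dd v\le S(y)\,\E[Y]$.
\item Integrate over $y\ge 0$, using $\int_0^\infty\int_y^\infty S(u)\,\dd u\,\dd y=\int_0^\infty yS(y)\,\dd y=\tfrac12\E[Y^2]$. This gives $\E[Y^2]\le 2(\E Y)^2$.
\end{itemize}
So you only need the NBUE-type bound $\E[Y-y\mid Y>y]\le \E[Y]$, not full monotonicity of the mean residual life.

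The citation buys brevity. Classical statements of this kind are often phrased for the normal hazard $\varphi/(1-\Phi)$. The paper's $\lambda$ is its reflection under $t\mapsto -t$, which preserves convexity, but the paper leaves that translation implicit.

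Your route buys a self-contained, elementary proof, and it explains why the inequality holds and where it is sharp. It is the coefficient-of-variation bound for IFR variables, whose extremal case is the exponential distribution. As $t\to-\infty$, the conditioned $Y$ becomes approximately exponential and $m(m+\lambda)\to 1$. So the margin in the convexity condition vanishes in the left tail.
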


We express the statement in \cref{stmt:br-to-gdp} as a condition in terms of an expression $\log \left(\nicefrac{\Phi(\nicefrac{t}{2})}{\Phi(-\nicefrac{t}{2})}\right)$, which has the following properties:
\begin{restatable}{lemma}{lderivative}\label{stmt:L-derivative}
Let $L(t) \define \log \frac{\Phi(\nicefrac{t}{2})}{\Phi(-\nicefrac{t}{2})}$. Then, $L$ is odd and differentiable. Its derivative $L'$ is even, and nondecreasing on $[0, \infty)$.
\end{restatable}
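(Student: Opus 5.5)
Oddness is immediate from the definition. Replacing $t$ by $-t$ swaps numerator and denominator inside the logarithm, so $L(-t) = \log \frac{\Phi(-t/2)}{\Phi(t/2)} = -L(t)$. Differentiability holds because $\Phi$ is smooth and strictly positive on $\sR$, so $L$ is a composition of smooth functions. The derivative of an odd differentiable function is even: differentiating $L(-t) = -L(t)$ gives $-L'(-t) = -L'(t)$. It therefore remains to show that $L'$ is nondecreasing on $[0,\infty)$.

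I will compute $L'$ through the inverse Mills ratio $\lambda = \varphi/\Phi$ of \cref{prop:mills-convex}. Writing $L(t) = \log\Phi(t/2) - \log\Phi(-t/2)$ and differentiating gives
\begin{equation*}
L'(t) = \tfrac12\,\frac{\varphi(t/2)}{\Phi(t/2)} + \tfrac12\,\frac{\varphi(-t/2)}{\Phi(-t/2)} = \tfrac12\bigl(\lambda(t/2) + \lambda(-t/2)\bigr),
\end{equation*}
where I use that $\varphi$ is even. This expression is visibly even, which confirms the previous step.

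For monotonicity I set $h(u) \define \lambda(u) + \lambda(-u)$ for $u \ge 0$, so that $L'(t) = \tfrac12 h(t/2)$. Then
\begin{equation*}
h'(u) = \lambda'(u) - \lambda'(-u).
\end{equation*}
By \cref{prop:mills-convex}, $\lambda'$ is nondecreasing, and since $u \ge -u$ for $u \ge 0$ we get $h'(u) \ge 0$. Hence $h$ is nondecreasing on $[0,\infty)$, and so is $L'$.

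The only point needing care is that $\lambda$ be differentiable, so that $h'$ makes sense. This holds because $\varphi$ and $\Phi$ are smooth and $\Phi > 0$. Alternatively, one can avoid derivatives of $\lambda$ entirely: a convex function $\lambda$ makes $u \mapsto \lambda(u) + \lambda(-u)$ convex and even, hence minimized at $0$ and nondecreasing on $[0,\infty)$. I expect no step to be a real obstacle; the whole argument reduces to \cref{prop:mills-convex}.
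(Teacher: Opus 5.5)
Your proposal is correct and follows essentially the same route as the paper: you write $L'(t) = \tfrac12\bigl(\lambda(t/2) + \lambda(-t/2)\bigr)$ via the inverse Mills ratio and conclude monotonicity on $[0,\infty)$ from $\lambda'$ being nondecreasing (\cref{prop:mills-convex}). Your alternative remark, that $u \mapsto \lambda(u) + \lambda(-u)$ is convex and even and hence nondecreasing on $[0,\infty)$, is also valid and slightly cleaner, since it uses only convexity of $\lambda$ and never differentiates it.
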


\begin{proof}
Oddness is immediate from $L(-t) = \log\frac{\Phi(-t/2)}{\Phi(t/2)} = -L(t)$. Differentiating, we get:
\[
    L'(t) = \frac{\dd}{\dd t}\log\Phi\left(\frac{t}{2}\right) - \frac{\dd}{\dd t}\log\Phi\left(-\frac{t}{2}\right)
    = \frac{\varphi(\nicefrac{t}{2})}{2\,\Phi(\nicefrac{t}{2})} + \frac{\varphi(-\nicefrac{t}{2})}{2\,\Phi(-\nicefrac{t}{2})}
    = \frac{1}{2}\left[\lambda\left(\frac{t}{2}\right) + \lambda\left(-\frac{t}{2}\right)\right],
\]
using $\varphi(-s) = \varphi(s)$. The derivative $L'$ is even as $L$ is odd. Finally, for any $t \geq 0$,
\[
    \frac{\dd}{\dd t} L'(t) = \frac{1}{4}\left[\lambda'\left(\frac{t}{2}\right) - \lambda'\left(-\frac{t}{2}\right)\right] \geq 0,
\]
as $\lambda'$ is nondecreasing by \cref{prop:mills-convex} and $\nicefrac{t}{2} \geq -\nicefrac{t}{2}$ for $t \geq 0$.
\end{proof}

Finally, we can show the proof of \cref{stmt:br-to-gdp}.

\begin{proof}
To prove the statement, we need to show that we have, for any $\alpha \in [0, 1]$:
\begin{equation}\label{eq:proof-gdp-condition}
    \Phi(\Phi^{-1}(1 - \alpha) - \mu) \leq \frac{1 - \alpha}{\alpha e^\nu + 1 - \alpha},
\end{equation}
where $\mu$ is defined as in \cref{eq:br-to-gdp}. To accomplish this, we will simplify this condition to a form that is more amenable to analysis.
For this, first, let us write down $\nu$ in terms of $\mu$ by solving \cref{eq:br-to-gdp}:
\[
    \Phi\left(-\frac{\mu}{2}\right) = \frac{1}{e^{\frac{\nu}{2}} + 1}.
\]
As $\Phi(t) = 1 - \Phi(-t)$, we have that:
\[
    \Phi\left(\frac{\mu}{2}\right) = 1 - \frac{1}{e^{\frac{\nu}{2}} + 1} = \frac{e^{\frac{\nu}{2}}}{e^{\frac{\nu}{2}} + 1}.
\]
Dividing the two expressions, we have:
\[
    \frac{\Phi(\nicefrac{\mu}{2})}{\Phi(-\nicefrac{\mu}{2})} = e^{\frac{\nu}{2}}.
\]
Using the definition of $L(t)$ in \cref{stmt:L-derivative}, we thus have:
\begin{equation}\label{eq:proof-gdp-to-br}
    \nu = 2L\left(\mu\right).
\end{equation}
Next, denoting by $z = \Phi^{-1}(1 - \alpha)$, so that $\alpha = 1 - \Phi(z) = \Phi(-z)$, we have:
\begin{equation}\label{eq:proof-intermediate-frac}
    \Phi(z - \mu) \leq \frac{\Phi(z)}{\Phi(-z) e^\nu + \Phi(z)}.
\end{equation}
Observing that the denominator in r.h.s. is always positive, we can multiply both sides:
\[
    \Phi(z - \mu) \Phi(-z) e^\nu + \Phi(z - \mu) \Phi(z) \leq \Phi(z).
\]
Rewriting to group the $\Phi(z)$ terms:
\begin{align*}
    \Phi(z - \mu) \Phi(-z) e^\nu \leq \Phi(z) \left( 1 - \Phi(z - \mu)\right) = \Phi(z) \Phi(\mu - z).
\end{align*}
Taking logarithms on both sides:
\begin{align*}
    \log \Phi(z - \mu) + \log \Phi(-z) + \nu \leq &\log \Phi(z) + \log \Phi(\mu - z) \\
    \nu + \log \frac{\Phi(z - \mu)}{\Phi(\mu - z)} &\leq \log \frac{\Phi(z)}{\Phi(-z)}.
\end{align*}
Using the definition of $L(t)$ above and the form of $\nu$ in \cref{eq:proof-gdp-to-br}:
\[
    2L\left(\mu\right) \leq L(2z) - L(2z - 2\mu).
\]
Next, to derive a useful form, we write $z = \frac{\mu}{2} + \frac{u}{2}$ for some $u \in \sR$:
\[
    2L\left(\mu\right) \leq L\left(u + \mu\right) - L(u - \mu).
\]

As $L$ is odd by \cref{stmt:L-derivative}, we have $-L(u - \mu) = L(\mu - u)$. Therefore, the condition in \cref{eq:proof-gdp-condition} that we need to show is equivalent to the following:
\begin{equation}\label{eq:proof-gdp-midpoint}
    D(u) \define L(\mu + u) + L(\mu - u) - 2 L(\mu) \geq 0 \qquad \forall u \in \sR.
\end{equation}

This form is easier to analyze. We start by observing that, as $D$ is even, we have $D(u) = D(|u|)$ for any $u \in \sR$, thus it suffices to consider
$u \geq 0$. Moreover, observing that $D(0) = 0$, it suffices to show that $D$ is nondecreasing on $[0, \infty)$. Its derivative is:
\[
    D'(u) = L'(\mu + u) - L'(\mu - u).
\]
By \cref{stmt:L-derivative}, $L'$ is even and nondecreasing on $[0, \infty)$. This implies that $L'(a) \geq L'(b)$ whenever $|a| \geq |b|$. As
$|\mu + u| \geq |\mu - u|$ for $\mu, u \geq 0$, we get $D'(u) \geq 0$, and the
claim follows.
\end{proof}

\section{Additional Accounting Results}
\label{app:extra-methods}

\subsection{Poisson Subsampling for Arbitrary Privacy Loss Distributions}
\label{sec:tech_poisson}

We present a novel method for converting a privacy guarantee for an arbitrary mechanism to a privacy guarantee for a Poisson subsampled version of that mechanism, which is compatible with modern privacy accounting libraries such as \texttt{dp\_accounting}. 

Our method operates by representing the privacy guarantee as a privacy profile.
Although we could then convert the privacy profile to a PLD representation, we found empirically that in this case the PLD approach is less numerically stable to apply the conversion to than the $\delta(\varepsilon)$ representation. 

We recall the hockey-stick divergence has a form $H_\alpha(P, Q) = \int (P(x) - \alpha Q(x))_+ dx$, and
if a dominating pair $(P, Q)$ for $M(\cdot)$ satisfies $H_{e^\varepsilon}(P, Q) \leq \delta$ then $M(\cdot)$ is $(\varepsilon, \delta)$-DP. Finally, Theorem 11 of \citet{zhu2022optimal} shows that if $(P, Q)$ is a dominating pair for the unsampled mechanism in the remove direction, then $(P', Q)$, where $P' = q \cdot P + (1 - q) \cdot Q$, is a dominating pair for Poisson-subsampled mechanism with sampling probability $q$ in the remove direction, and vice-versa in the add direction. 

The following theorem gives a conversion from hockey-stick divergences between $P, Q$ to hockey-stick divergences between $P', Q$, which by the above facts suffices to convert a privacy profile for a non-subsampled mechanism to a privacy profile for its Poisson-subsampled counterpart.

\begin{theorem}\label{thm:poissonarbitrary}
Let $P' = q \cdot P + (1 - q) \cdot Q$. Then:
\[H_{e^\varepsilon}(P' , Q) = 
\begin{cases}
    q \cdot H_{\frac{e^\varepsilon + q - 1}{q}}(P , Q)&\text{ if } e^\varepsilon > 1-q,\\1-e^\varepsilon&\text{ otherwise}.
\end{cases}\]
\[H_{e^\varepsilon}(Q , P') = 
\begin{cases}
    (1 - e^\varepsilon(1-q)) \cdot H_{\frac{e^\varepsilon \cdot q}{1 - e^\varepsilon(1 - q)}}(Q , P)&\text{ if } e^\varepsilon < \frac{1}{1 - q},\\0&\text{ otherwise}.
\end{cases}\]
\end{theorem}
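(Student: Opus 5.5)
We work directly from the integral form $H_\gamma(A, B) = \int (A(x) - \gamma B(x))_+ \,\dd x$. We substitute $P' = qP + (1-q)Q$ and factor the integrand pointwise so that it becomes a scaled hockey-stick integrand between $P$ and $Q$. The two identities are treated separately, and each splits according to the sign of the coefficient that appears after factoring.

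For the first identity, we write
\[
P'(x) - e^\varepsilon Q(x) = qP(x) - (e^\varepsilon + q - 1)Q(x).
\]
If $e^\varepsilon > 1-q$, the coefficient $c \define e^\varepsilon + q - 1$ is positive. Pulling out $q > 0$ gives
\[
\bigl(qP - cQ\bigr)_+ = q\bigl(P - \tfrac{c}{q}Q\bigr)_+ .
\]
Integrating then yields $q\, H_{c/q}(P, Q)$, which is the claimed expression. If $e^\varepsilon \le 1-q$, the coefficient of $Q$ is nonpositive, so the integrand is nonnegative everywhere and the positive part can be dropped. The integral is then $q + (1 - q - e^\varepsilon) = 1 - e^\varepsilon$, using that $P$ and $Q$ both have mass one. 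The boundary value $e^\varepsilon = 1-q$ gives $q$ in both branches, so it is consistent with either case.

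For the second identity, we write
\[
Q(x) - e^\varepsilon P'(x) = (1 - e^\varepsilon(1-q))Q(x) - e^\varepsilon q P(x).
\]
If $e^\varepsilon < 1/(1-q)$, the factor $a \define 1 - e^\varepsilon(1-q)$ is positive. Factoring it out gives $a\,(Q - \tfrac{e^\varepsilon q}{a}P)_+$, and integrating yields $a\, H_{e^\varepsilon q / a}(Q, P)$. If instead $e^\varepsilon \ge 1/(1-q)$, then $a \le 0$, so the integrand is at most $-e^\varepsilon q P(x) \le 0$ pointwise, and the divergence is $0$.

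There is no real obstacle in this argument. The only care needed is to check the sign of the factored coefficients, since positive homogeneity $(\lambda y)_+ = \lambda y_+$ holds only for $\lambda > 0$, and to handle the degenerate endpoints ($q = 0$ or $q = 1$) by convention. If densities are not assumed, the same argument applies to densities with respect to the dominating measure $P + Q$.
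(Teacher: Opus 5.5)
Your proposal is correct and matches the paper's proof: substitute $P' = qP + (1-q)Q$ into the integral form of the hockey-stick divergence, split on the sign of the coefficient ($1-q-e^\varepsilon$, respectively $1-e^\varepsilon(1-q)$), and then either drop the positive part or factor out the positive constant. Your checks of the boundary $e^\varepsilon = 1-q$ and of the degenerate values of $q$ are sound extras that the paper does not spell out.
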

\begin{proof}
    We have:
    \[
        H_{e^\varepsilon}(P' , Q) = \int (P'(y) - e^\varepsilon Q(y))_+ dy = \int (q \cdot P(y) + (1 - q - e^\varepsilon) Q(y))_+ dy. \\
    \]
    If $e^\varepsilon \leq 1 - q$ then $1 - q - e^\varepsilon \geq 0$ and:
    \begin{align*}
        &\int (q \cdot P(y) + (1 - q - e^\varepsilon) Q(y))_+ dy \\
        =& \int q \cdot P(y) + (1 - q - e^\varepsilon) Q(y) dy \\
        =& q \int P(y) dy + (1 - q - e^\varepsilon) \int Q(y) dy \\
        =& q + 1 - q - e^\varepsilon = 1 - e^\varepsilon.
    \end{align*}
    Otherwise: 
    \begin{align*}
        H_{e^\varepsilon}(P' , Q) &= \int (q \cdot P(y) - (e^\varepsilon + q - 1) Q(y))_+ dy \\
        &= q \int (P(y) - \frac{e^\varepsilon + q - 1}{q} Q(y))_+ dy = q \cdot H_{\frac{e^\varepsilon + q - 1}{q}}(P , Q),
    \end{align*}
    Similarly we have:
    \[
        H_{e^\varepsilon}(Q , P') = \int (Q(y) - e^\varepsilon P'(y))_+ dy = \int ((1 - e^\varepsilon(1 - q)) \cdot Q(y) - e^\varepsilon q \cdot P(y))_+ dy.
    \]
    If $e^\varepsilon < \frac{1}{1 - q}$ then:
    \begin{align*}
        &\int ((1 - e^\varepsilon(1 - q)) \cdot Q(y) - e^\varepsilon q \cdot P(y))_+ dy. \\
        =& (1 - e^\varepsilon(1 - q)) \int (Q(y) - \frac{e^\varepsilon q}{1 - e^\varepsilon(1 - q)} \cdot P(y))_+ dy \\
        =& (1 - e^\varepsilon(1 - q)) H_{\frac{e^\varepsilon q}{1 - e^\varepsilon(1 - q)}}(Q , P),
    \end{align*}
    Otherwise, $1 - e^\varepsilon(1 - q) \leq 0$ and so:
    \[\int ((1 - e^\varepsilon(1 - q)) \cdot Q(y) - e^\varepsilon q \cdot P(y))_+ dy = \int 0\ dy = 0.\]
\end{proof}

\subsection{Exact Characterization of Private Partition Selection}\label{sec:privateps}

In this section, we present characterizations of mechanisms such as the Threshold Gaussian and Threshold Laplace~\cite{gopi20setunion} for private partition selection. It rests on the following result:
 
\begin{restatable}{lemma}{deltacomp}\label{stmt:delta-comp}
    Let $M$ be a mechanism with the following property: There exists $P, Q$ such that for any inputs $S \simeq S'$ where $S$ contains an example not in $S'$, there is an event $E$ occurring with probability at least $1 - \tau$ such that:
    \[T(M(S)|E, M(S'))(\alpha) \geq T(P, Q)(\alpha)\] for all $\alpha \in [0, 1]$, where $A|E$ denotes the random variable $A$ conditioned on event $E$. Then $P \times \bern(\tau), Q \times \bern(0)$ is a dominating pair of $M$, where $\times$ denotes a product distribution. 
\end{restatable}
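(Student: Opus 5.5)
The plan is to verify the dominating-pair condition directly through hockey-stick divergences: for every $\varepsilon \in \sR$ and every pair $S \simeq S'$ where $S$ contains an example not in $S'$, show
\[
H_{e^\varepsilon}(M(S) \parallel M(S')) \le H_{e^\varepsilon}\bigl(P \times \bern(\tau) \parallel Q \times \bern(0)\bigr).
\]
This is the remove direction that the hypothesis is phrased for. The add direction would be handled by the analogous argument with the roles swapped, exactly as in the convention used for Theorem~\ref{thm:poissonarbitrary}.

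\textbf{Step 1: compute the right-hand side in closed form.} Under $Q \times \bern(0)$ the second coordinate is always $0$. The atom of mass $\tau$ that $P \times \bern(\tau)$ places on second coordinate $1$ therefore contributes fully. On second coordinate $0$ we compare $(1-\tau)P$ against $e^\varepsilon Q$. This gives
\[
H_{e^\varepsilon}\bigl(P \times \bern(\tau) \parallel Q \times \bern(0)\bigr) = \tau + \int \bigl((1-\tau)P - e^\varepsilon Q\bigr)_+ .
\]

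\textbf{Step 2: bound the left-hand side by conditioning on $E$.} Let $p = \Pr[E] \ge 1-\tau$ and write $A = M(S)$, $B = M(S')$. For any measurable $F$,
\[
\Pr[A \in F] \le p \Pr[A \in F \mid E] + (1-p).
\]
Hence
\[
\Pr[A\in F] - e^\varepsilon \Pr[B\in F] \le (1-p) + \int (p\,A|_E - e^\varepsilon B)_+ = (1-p) + p\, H_{e^\varepsilon/p}(A|E \parallel B).
\]
Next I invoke the standard fact that trade-off domination implies hockey-stick domination at every order $\gamma>0$. Concretely, $T(A|E, B) \ge T(P,Q)$ pointwise gives $H_\gamma(A|E \parallel B) \le H_\gamma(P \parallel Q)$, since $H_\gamma = \sup_\alpha (1 - \gamma\alpha - T(\alpha))$. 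Taking the supremum over $F$ then yields
\[
H_{e^\varepsilon}(A \parallel B) \le h(p), \qquad h(p) \define 1 - p + \int (pP - e^\varepsilon Q)_+ .
\]

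\textbf{Step 3: monotonicity in $p$, the only non-routine step.} The true conditioning probability $p$ is unknown; we only know $p \ge 1-\tau$. So I need $h$ to be nonincreasing on $[1-\tau, 1]$, which lets us replace $p$ by $1-\tau$ and recover exactly the expression from Step 1.

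The function $p \mapsto \int (pP - cQ)_+$ is convex. Its (one-sided) derivative is $\int_{\{pP > cQ\}} P \le 1$. Therefore $h'(p) = -1 + \int_{\{pP>cQ\}} P \le 0$, and $h$ is nonincreasing.

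The degenerate cases need a quick check. If $p = 0$, then $\tau = 1$ and the bound is trivially $1$. A careful statement would use the one-sided derivative, or equivalently the elementary inequality $(p'P - cQ)_+ - (pP - cQ)_+ \le (p'-p)P$ for $p'>p$, integrated over the space. Combining the three steps gives the required inequality for every $\varepsilon$, so $\bigl(P \times \bern(\tau), Q \times \bern(0)\bigr)$ dominates $M$.
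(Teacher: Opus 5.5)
Your proof is correct, but it takes a different route from the paper's.

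\textbf{The paper's route.} The paper works through Blackwell's theorem. It turns the hypothesis $T(M(S)|E, M(S')) \ge T(P,Q)$ into a randomized map $K$ with $K(P) \equiv M(S)|E$ and $K(Q) \equiv M(S')$. It then builds a map on pairs $(x,y)$:
\begin{itemize}
\item if $y=0$, it applies $K(x)$;
\item if $y=1$ (probability $\tau$ under $P\times\bern(\tau)$, never under $Q\times\bern(0)$), it outputs a fresh sample of $M(S)|\bar E$ with probability $\Pr[\bar E]/\tau$, and $K(x)$ otherwise.
\end{itemize}
This reproduces $M(S) = \Pr[E]\, M(S)|E + \Pr[\bar E]\, M(S)|\bar E$ exactly and maps the second distribution to $M(S')$. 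Domination then follows by post-processing.

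\textbf{Your route and the comparison.} You stay entirely at the level of hockey-stick divergences. Conditioning gives $H_{e^\varepsilon}(M(S)\parallel M(S')) \le (1-p) + p\,H_{e^\varepsilon/p}(P\parallel Q)$. Your monotonicity of $h$ on $[1-\tau,1]$ is the analytic counterpart of the paper's randomization by $\Pr[\bar E]/\tau$; both are how the slack $\Pr[E]\ge 1-\tau$ is absorbed.

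The paper's argument is shorter and yields the conclusion directly in simulation (post-processing) form. However, it relies on the nontrivial direction of Blackwell's theorem: trade-off domination implies that a simulating kernel exists, which also needs some regularity of the output spaces. Your argument uses only the elementary direction, that a trade-off ordering implies a hockey-stick ordering at every order. As a by-product, it gives the explicit profile $\delta(\varepsilon) = \tau + (1-\tau)\,H_{e^\varepsilon/(1-\tau)}(P\parallel Q)$, which is the form one actually feeds into a PLD or privacy-profile library.

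\textbf{On the add direction.} Like the paper, you only establish the remove direction, which is all the hypothesis addresses. The add-direction statement with the swapped pair does not need a separate ``analogous'' argument. It follows immediately from the identity $H_\gamma(X\parallel Y) = 1-\gamma+\gamma H_{1/\gamma}(Y\parallel X)$, applied at every order.
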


\begin{proof}
    We use the following fact, a corollary of Blackwell's informativeness theorem \cite{blackwell1953equivalent}: $(P, Q)$ is a dominating pair for $M(\cdot)$ if and only if for any $S \simeq S'$, there is a randomized function $f$ such that $T(P) \equiv M(S), T(Q) \equiv M(S')$ where $\equiv$ denotes distributional equivalence.
   
   Fix a $S \simeq S'$, and let $T(\cdot)$ be the randomized function such that 
   $M(S)|E \equiv T(P), M(S') \equiv T(Q)$. By the preceding fact, it suffices to define a randomized function $T'(\cdot)$ such that $T'(P \times \bern(\tau)) \equiv M(S), T'(Q \times \bern(0)) \equiv M(S')$. Our function is:
    $$
    T'(x, y) = 
    \begin{cases} 
    T(x) & \text{if } y = 0 \\[6pt]
    x' \sim M(S) \mid \bar{E} & \text{if } y = 1 \text{ w.p. } \frac{\Pr[\bar{E}]}{\tau} \\[6pt]
    T(x) & \text{if } y = 1 \text{ w.p. } 1 - \frac{\Pr[\bar{E}]}{\tau}
    \end{cases}
    $$

   We have:
   \begin{align*}
    T'(P \times \bern(\tau)) &\equiv (1 - \tau) \cdot T(P) + \tau \cdot \left(\frac{\Pr[\bar{E}]}{\tau} \cdot M(S) | \bar{E} + \left(1 - \frac{\Pr[\bar{E}]}{\tau}\right) T(P)\right)\\
    &\equiv (1 - \tau + \tau - \Pr[\bar{E}]) \cdot T(P) + \Pr[\bar{E}] \cdot M(S) | \bar{E}\\
    &\equiv \Pr[E] \cdot M(S) | E + \Pr[\bar{E}] \cdot M(S) | \bar{E}\\
   &\equiv M(S).
   \end{align*}

   and more straightforwardly, 
    $T'(Q \times \bern(0)) \equiv T(Q) \equiv M(S')$, completing the proof.
\end{proof}

The proof of \citet[][Theorem 5.1]{gopi20setunion} exactly specifies an event $E$ for their mechanism (namely, $E$ is the event that for $S \simeq S'$ and $S \supset S'$, $M(S)$ is in the support of $M(S')$), which is what implies that their policy Gaussian algorithm (that we refer to as Threshold Gaussian) is a composition of a Gaussian mechanism and a mechanism with dominating pair $(\bern(\tau), \bern(0))$.

\subsection{Group Privacy Guarantees for Basic Preprocessing Mechanisms}
\label{app:group_privacy}

In real-world settings, individual users often contribute multiple records to the dataset.
Practitioners typically handle this by enforcing hard contribution bounding, randomly truncating each user's history to at most one record ($K = 1$).
While conceptually simple, contribution bounding introduces a \emph{truncation bias}.

An alternative is to retain up to $K > 1$ records per user and evaluate the resulting user-level guarantee via group privacy.
A user with $K$ records contributes up to $K \times$ more signal to linear queries, equivalently the sensitivity increases by $K$, requiring $K \times$ more noise to preserve the same signal-to-noise ratio/privacy guarantee.
Below, we formalize how group privacy plumbs through our unified framework across each primitive class.

\begin{itemize}[leftmargin=*]
    \item \textit{Gaussian Mechanism:}
    If a user contributes at most $K$ records, the global $L_2$ sensitivity of a marginal scales by at most $K$.
    Under Gaussian Differential Privacy (GDP), the guarantee scales strictly as $\mu \to K \mu$.
    The dominating pair remains Gaussian ($G_{K\mu}$), retaining exact hypothesis-testing duality.

    \item \textit{Exponential Mechanism:}
    When each record can alter the selection score by at most $\Delta s$, $K$ records alter the score by at most $K \Delta s$, scaling the effective privacy parameter $\varepsilon \to K \varepsilon$.

    \item \textit{Threshold Gaussian:}
    A user with $K$ records can alter the counts of measurements by a total of $K$ across all measurements. After the partition selection primitive, this mechanism is just a Gaussian mechanism and thus we can reduce to the above analysis for group privacy of the Gaussian mechanism.

    For partition selection at cutoff $\theta$, the $(0, \tau)$-DP failure event corresponds to the probability that a measurement with zero count in $S'$ exceeds the threshold $\theta$. In the worst case, all $K$ additional records contribute to measurements with zero count in $S'$. The probability of one of these measurements being released can be formulated as:
    \begin{equation*}
        \tau(K) = \max_{\substack{k_1 \geq \ldots \geq k_j \geq 1 \\ k_1 + \cdots + k_j = K}}
        \left( 1 - \prod_{i = 1}^{j} (1 - p_{k_i}) \right),
    \end{equation*}

    where $p_k$ is the probability that a record with count $k$ is released by the partition selection mechanism:

    \begin{equation*}
        p_k \define = 1 - \Phi\left(\frac{\theta - k}{\sigma}\right).
    \end{equation*}

    When e.g. $K \leq 46$, there are less than $10^5$ distinct terms in the max defining $\tau(K)$ and hence evaluating $\tau(K)$ via brute-force enumeration can be done reasonably efficiently. Calibrating $\theta$ (possibly jointly with $\sigma$) to achieve a desired failure probability $\tau$ can also be done efficiently via binary search.
\end{itemize}

Because every mechanism in the taxonomy admits an exact representation parameterized by $K$, our framework allows us to output precise privacy profiles for multiple group sizes.
With this approach, we can calibrate our mechanism to any group size $K$, then post-hoc report the privacy profiles for any other $K'$ (noting that if there are Threshold Gaussian mechanisms, we must have $K' \leq K$).

One could alternatively consider taking the record-level privacy guarantee of the composed mechanism, and trying to convert this in a black-box manner to a group privacy guarantee. For example, any $f$-DP mechanism satisfies $f'$-DP with respect to add-or-remove-up-to-$k$ where $f'(x) = 1 - (1 - f)^{\circ k}(x)$, $\circ k$ denotes self-composition of a function $k$ times. However, black-box approaches to group privacy are generally loose. Especially for mechanisms like the exponential mechanism, a black-box privacy guarantee might be pointwise tight everywhere, as in at each FPR-FNR pair there is an instantiation of the mechanism and two inputs $S, S'$ that achieve that FPR-FNR pair, but not ``globally'' tight in the sense that a single mechanism and pair of inputs achieves all the FPR-FNR pairs simultaneously. A black-box group privacy analysis might try to combine information at different FPR-FNR points that are not ``globally'' tight, thus giving an upper bound that no instantiation of the mechanism and inputs can realize, whereas a ``white-box'' group privacy analysis looking directly at $M(S), M(S')$ for $S \simeq S'$ would avoid this issue entirely.

\subsection{Group Privacy of Homogenous Compositions of Exponential Mechanisms}
\label{sec:tech_hierarchical_group}
In this section, we give a group privacy guarantee of the homogenous composition of Exponential mechanisms, which is the case of the DP quantiles~\cite{kaplan2022differentially} in \cref{alg:quantiles}. We give a privacy guarantee under ``add-or-remove-up-to-$k$'' adjacency $\simeq_{AR \leq k}$, i.e. $S \simeq_{AR \leq k} S'$ if $S \supseteq S'$ and $S$ contains at most $k$ elements not in $S$ (or vice-versa).
\begin{theorem}\label{thm:brgroupprivacy}
    Any $\nu$-BR mechanism with respect to the add-or-remove-one adjacency is $k \nu$-BR with respect to $\simeq_{AR \leq k}$.
\end{theorem}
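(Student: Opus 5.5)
The plan is a chaining argument through intermediate datasets, tracking the \emph{pointwise} privacy-loss range rather than the per-event form. First I would reformulate $\nu$-BR equivalently in terms of the privacy loss function. For a fixed pair $S \simeq S'$, the condition in \cref{eq_def_br} (holding for all measurable $E$) is equivalent to the density ratio satisfying
\[
\log\frac{\Pr[M(S)=y]}{\Pr[M(S')=y]} \in [-t, -t+\nu]
\]
for almost every $y$, with some $t \in [0,\nu]$. One direction takes $E$ to be small sets or level sets of the ratio; the other integrates the pointwise bound over $E$. So $\nu$-BR says: the log-density-ratio between the output distributions on any neighbouring pair lies in an interval of length $\nu$ that contains $0$.

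Next, take $S \simeq_{AR\le k} S'$ and assume WLOG $S \supseteq S'$ with $S \setminus S' = \{x_1,\dots,x_j\}$, $j \le k$ (the reverse direction is symmetric, swapping roles and negating the loss). Build the chain
\[
S_0 = S',\quad S_i = S' \cup \{x_1,\dots,x_i\},\quad S_j = S,
\]
in which consecutive datasets are add-or-remove-one neighbours. Then telescope the pointwise privacy loss:
\[
\ell(y) \define \log\frac{p_S(y)}{p_{S'}(y)} = \sum_{i=1}^{j} \log\frac{p_{S_i}(y)}{p_{S_{i-1}}(y)},
\]
valid a.e. Absolute continuity follows from BR, since each ratio is finite and positive. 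By the pointwise form of BR, the $i$-th term lies in $[-t_i, -t_i+\nu]$ with $t_i \in [0,\nu]$. Hence $\ell(y) \in [-T, -T + j\nu]$ with $T = \sum_i t_i \in [0, j\nu] \subseteq [0, k\nu]$. Since $j\nu \le k\nu$, we can enlarge the interval to $[-T, -T+k\nu]$ with $T \in [0,k\nu]$, which is exactly the $k\nu$-BR witness.

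Finally, translate back to events: for any measurable $E$, $\Pr[M(S)\in E] = \int_E e^{\ell} \, dP_{S'}$ lies between $e^{-T}\Pr[M(S')\in E]$ and $e^{-T+k\nu}\Pr[M(S')\in E]$. This gives \cref{eq_def_br} with parameter $k\nu$ and $t=T$.

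The main obstacle I expect is the first step, the event-to-pointwise equivalence. The definition quantifies over events, and I must make sure the same $t$ works for all $E$, and handle measure-zero and zero-probability events. I would settle it via the Radon–Nikodym derivative: applying the event bound to $E=\{\ell > -t+\nu+\eta\}$ and $E = \{\ell < -t-\eta\}$ forces these sets to be null. In the discrete setting of the Exponential mechanism this is immediate from singletons. The chaining itself is routine.
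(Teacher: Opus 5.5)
Your argument is correct and is the same chaining proof the paper gives: interpolate add-or-remove-one neighbours between $S$ and $S'$, telescope the log-ratios, and take $t=\sum_i t_i$ as the $k\nu$-BR witness. The one difference is that the paper telescopes the event-level ratios $\Pr[M(S_i)\in E]/\Pr[M(S_{i+1})\in E]$ directly, which works because the BR witnesses $t_i$ do not depend on $E$; so your detour through the pointwise density ratio (the step you flagged as the main obstacle) is valid but unnecessary, and your explicit handling of $|S\setminus S'|=j<k$ is, if anything, slightly more careful than the paper's ``WLOG $|S\setminus S'|=k$''.
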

\begin{proof}
    Let $S \simeq_{AR \leq k} S'$, assuming without loss of generality $|S \setminus S'| = k$, and let $S_0 = S, S_1, S_2, \ldots, S_k = S'$ be a sequence of datasets where $|S_i - S_{i+1}| = 1$. As the Exponential mechanism is $\nu$-BR, for any output set $E$ and any $i$ we have $\log\left(\frac{\Pr[M(S_i) \in E]}{\Pr[M(S_{i+1}) \in E]}\right) \in [-t_i, -t_i + \nu]$ for some $t_0, t_1, \ldots, t_{k-1}$. In turn, we have:

    \begin{align*}
        \log\left(\frac{\Pr[M(S) \in E]}{\Pr[M(S') \in E]}\right) &= \log\left(\frac{\Pr[M(S_0) \in E]}{\Pr[M(S_k) \in E]}\right) \\
        &= \sum_{i = 0}^{k-1}\log\left(\frac{\Pr[M(S_i) \in E]}{\Pr[M(S_{i+1}) \in E]}\right), \\
    \end{align*}
    which is in $\left[-\sum_{i=0}^{k-1} t_i, -\sum_{i=0}^{k-1} t_i + k\nu\right].$
\end{proof}
\begin{theorem}\label{thm:brpartition}
    Let $g : \mathbb{D} \rightarrow [m]$ be any function from the data universe to $[m]$. Let $M$ be the following mechanism: Given the input $S$, partition $S$ into $\{S_1, S_2, \ldots S_m\}$ where $S_i = \{x \in S: g(x) = i\}$, then return $(M_1(S_1), M_2(S_2), \ldots, M_m(S_m))$ where each $M_i$ is an (independent) $\nu$-BR mechanism (with respect to add-or-remove-one). Then $M$ is $k \nu$-BR with respect to $\simeq_{AR \leq k}$.
\end{theorem}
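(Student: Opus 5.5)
The plan is to reduce the claim to a pointwise statement about privacy loss. In that form, independent components combine additively, and each component is handled by \cref{thm:brgroupprivacy}.

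\emph{Step 1 (pointwise reformulation of BR).} I would first show that \cref{eq_def_br} is equivalent to a bound on the log-density ratio. Fix $S \simeq S'$ and write $p, p'$ for the densities of $M(S), M(S')$ with respect to $M(S)+M(S')$. I claim $M$ is $\nu$-BR at this pair, with offset $t$, iff $\log(p(y)/p'(y)) \in [-t, -t+\nu]$ for a.e.\ $y$.
\begin{itemize}
    \item For the direction from densities to events: $\Pr[M(S)\in E]/\Pr[M(S')\in E] = \int_E (p/p')\, p' \,/ \int_E p'$. This is a weighted average of $p/p'$ over $E$, so it stays in $[e^{-t}, e^{-t+\nu}]$.
    \item For the converse: apply the event bound to $E = \{y : \log(p/p') > -t+\nu\}$ and to $E = \{y : \log(p/p') < -t\}$. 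If either set had positive mass, the ratio on that set would violate the bound.
\end{itemize}
Zero-probability events follow the paper's conventions. Since the ratio bound forces mutual absolute continuity, they cause no trouble. I expect this measure-theoretic bookkeeping to be the only real obstacle. Working with events directly fails because events in the product space need not be rectangles.

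\emph{Step 2 (per-block guarantees).} Take $S \simeq_{AR\le k} S'$ and assume without loss of generality $S \supseteq S'$. Let $k_i = |S_i \setminus S'_i|$. Because the partition is determined by the fixed function $g$, we have $S'_i = \{x \in S' : g(x) = i\}$ and $\sum_i k_i \le k$.
\begin{itemize}
    \item For blocks with $k_i = 0$, we have $S_i = S'_i$, so $M_i(S_i)$ and $M_i(S'_i)$ are identically distributed. The log-density ratio is $0$, which fits $t_i = 0$ with range $0$.
    \item For blocks with $k_i \ge 1$, \cref{thm:brgroupprivacy} makes $M_i$ $k_i\nu$-BR between $S_i$ and $S'_i$. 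By Step 1 this gives some $t_i \in [0, k_i\nu]$ with the log-density ratio in $[-t_i, -t_i + k_i\nu]$ a.e.
\end{itemize}

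\emph{Step 3 (composition).} By independence, the density of $M(S)$ is the product of the block densities, and likewise for $M(S')$. So the joint log-density ratio is $\sum_i \log(p_i/p_i')$. This sum lies a.e.\ in $[-T, -T + \sum_i k_i\nu]$, where $T = \sum_i t_i$ and $0 \le T \le \sum_i k_i\nu \le k\nu$. The interval is contained in $[-T, -T + k\nu]$, and $T \le k\nu$. Applying Step 1 in the reverse direction yields the event-level bound \cref{eq_def_br} with parameter $k\nu$.

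The reverse case $S \subseteq S'$ is symmetric: swap the roles of $S$ and $S'$ and replace the offset $T$ by $k\nu$-compatible reflection, as in the definition.
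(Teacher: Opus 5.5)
Your proposal is correct and follows essentially the same route as the paper: apply \cref{thm:brgroupprivacy} to each block with its own count $k_i$ of differing records (the partition being fixed by $g$), then use independence to sum the per-block log-likelihood ratios pointwise. Your Step 1 makes rigorous the move between event-level BR and pointwise density ratios, which the paper's proof uses implicitly by working directly with output probabilities $M_i(S)(o_i)$.
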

\begin{proof}
    Let $S \simeq_{AR \leq k} S'$, assuming without loss of generality $|S \setminus S'| = k$. Let $k_i$ be the number of elements in $S \setminus S'$ in $S_i$. By \cref{thm:brgroupprivacy}, for some $t_1, t_2, \ldots t_m$ and any $i$, output $o_i$ for $M_i$ we have \[\log\left(\frac{M_i(S)(o_i)}{M_i(S')(o_i)}\right) \in [-t_i, -t_i + k_i\nu].\]
    Since the mechanisms are run independently and the elements in $S'$ are assigned to the same subset of the partition whether we receive input $S$ or $S'$, we have for any output $o = (o_1, o_2, \ldots o_m)$ for $M$:
    \[\log\left(\frac{M(S)(o)}{M(S')(o)}\right) = \sum_{i=1}^m \log\left(\frac{M_i(S)(o_i)}{M_i(S')(o_i)}\right)\]
    \[\in \left[-\sum_{i=1}^m t_i, -\sum_{i=1}^m t_i + \nu \sum_{i=1}^m k_i\right] = \left[-\sum_{i=1}^m t_i, -\sum_{i=1}^m t_i + k \nu\right].\]
\end{proof}
\begin{theorem}
\cref{alg:quantiles} satisfies any privacy guarantee satisfied by the composition of a  $k\varepsilon_1$-BR, $k\varepsilon_2$-BR, \ldots, and $k\varepsilon_L$-BR mechanisms.    
\end{theorem}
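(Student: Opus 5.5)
We have not seen \cref{alg:quantiles}, but it is DP quantiles via recursive bisection \cite{kaplan2022differentially}. The algorithm runs $L$ levels. At level $\ell$, the current set of intervals, fixed by outputs of earlier levels, partitions the data domain into disjoint buckets. On each bucket it runs an Exponential mechanism with parameter $\varepsilon_\ell$ to pick a split point. The plan is to show three things: each level is a $k\varepsilon_\ell$-BR mechanism under $\simeq_{AR\le k}$; the levels compose adaptively; and the adaptive composition is covered by any guarantee of the non-adaptive composition of $k\varepsilon_1$-BR, $\ldots$, $k\varepsilon_L$-BR mechanisms, since such guarantees are stated for the class.

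\textbf{Step 1 (per-level BR).} Fix level $\ell$ and all previous outputs $o_{<\ell}$. These outputs determine a map $g_{o_{<\ell}}:\sD\to[m]$ that assigns each record to the interval containing it. Level $\ell$ then returns $(M_{\ell,1}(S_1),\ldots,M_{\ell,m}(S_m))$, with independent Exponential mechanisms on the parts. Each of these is $\varepsilon_\ell$-BR under add-or-remove-one, because the Exponential mechanism satisfies $\nu$-BR with $\nu=\varepsilon_\ell$ under the paper's parametrization. I need to check that the rank/quantile score has sensitivity $1$ within a bucket. \cref{thm:brpartition} then gives that, conditionally on $o_{<\ell}$, level $\ell$ is $k\varepsilon_\ell$-BR under $\simeq_{AR\le k}$. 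One subtlety: the partition function must not depend on the data. That holds here, since it depends only on $o_{<\ell}$.

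\textbf{Step 2 (adaptive composition).} The full algorithm is an adaptive composition in the standard sense. The number of levels $L$ and the parameters $\varepsilon_\ell$ are fixed in advance, and the $\ell$-th mechanism is chosen from prior outputs. Each chosen mechanism is $k\varepsilon_\ell$-BR for every history. So the mechanism is an adaptive composition of mechanisms from the classes $k\varepsilon_1$-BR, $\ldots$, $k\varepsilon_L$-BR.

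\textbf{Step 3 (transfer of guarantees), the main obstacle.} I must argue that adaptivity costs nothing relative to the guarantees we use. For PLD/dominating-pair guarantees this is immediate. By \cref{stmt:br-to-pld}, each level, conditionally on any history, is dominated by the fixed PLD $g_{k\varepsilon_\ell}$, which does not depend on the history. Adaptive composition of mechanisms with history-independent dominating pairs is dominated by the product pair \cite{zhu2022optimal}, so the convolution of the $g_{k\varepsilon_\ell}$ dominates.

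For the stronger reading, \emph{any} guarantee of the composition class (e.g., the optimal bound of \citet{dong2020optimal}), I would use that their optimal composition of BR mechanisms is proved for adaptive composition. Their reduction to generalized randomized response (\cref{stmt:br-to-beta}) is applied conditionally on each history, and the worst case over $t_\ell\in[0,k\varepsilon_\ell]$ may depend on the history. That is exactly the class they optimize over, since they take a supremum over adaptively chosen $t$'s. I would state this carefully: "any privacy guarantee satisfied by the composition" means any guarantee valid uniformly over adaptive compositions of such mechanisms.

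This is where I expect the hardest part. If the intended guarantee is only over non-adaptive compositions, I would fall back to the dominating-pair argument: every guarantee derived via the tight BR profile of \cref{stmt:br-to-pld} is inherited through Blackwell-type post-processing, conditioned level by level.
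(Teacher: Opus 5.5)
Your Steps 1 and 2 are the paper's proof. The paper likewise reorders the depth-first recursion of \cref{alg:quantiles} into a top-down, level-by-level execution. It notes that each level is a partition fixed by earlier outputs followed by independent Exponential mechanisms, applies \cref{thm:brpartition} to get $k\varepsilon_\ell$-BR under $\simeq_{AR\le k}$, and concludes by adaptive composition. On the check you flagged: in the algorithm's own parametrization $\propto\exp(\varepsilon_\ell u(v))$, the score has sensitivity $1/2$, which is what gives $\varepsilon_\ell$-BR per call.

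The part to correct is Step 3, where you say the optimal bound of \citet{dong2020optimal} is proved for adaptive composition. The exact optimal profile of theirs that this paper cites is for \emph{non-adaptive} homogeneous composition, with the offsets $t_\ell$ fixed in advance. That same work shows that adaptive composition of BR mechanisms can be strictly worse (their ``cost of adaptivity''), precisely because the offsets may be chosen based on the history. In \cref{alg:quantiles} the offsets do depend on the history. Which bucket contains the differing records, and where they sit relative to that bucket's target rank, is determined by earlier splits.

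So you cannot claim that non-adaptive bound for the algorithm. Your Blackwell fallback does not rescue the non-adaptive reading either: it only delivers PLD-derived guarantees, not ``any'' non-adaptive guarantee.

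The theorem holds, and the paper proves it, with ``composition'' taken in the paper's own sense from \cref{sec:background}: adaptive composition with parameters fixed ahead of time. Under that reading Step 3 is immediate. Valid consequences include:
\begin{itemize}
\item the convolution of the $g_{k\varepsilon_\ell}$ (your \citet{zhu2022optimal} argument here is right);
\item zCDP;
\item the GDP bound of \cref{stmt:br-to-gdp};
\item the adaptive bound of \citet{dong2020optimal}, but not its non-adaptive optimal formula.
\end{itemize}
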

\begin{proof}
     We can assume the mechanism is run in a ``top-down'' fashion, i.e. we run all mechanisms at $\ell = i$ before running any mechanism at level $\ell = i - 1$. Viewing the concatenation of all mechanisms being run at $\ell = i$ as a single mechanism, \cref{alg:quantiles} is then just an adaptive composition of the per-level mechanisms. Each per-level mechanism is exactly a partition of the dataset and a run of the exponential mechanism on each subset of the partition, exactly as described/analyzed by \cref{thm:brpartition}, giving the theorem.
\end{proof}

\section{Algorithms and Pipeline Details}
\label{app:extra-details}

In this section, we provide additional algorithmic and pipeline details for the methods introduced in \cref{sec:applications}, including AIM-GDP, DP preprocessing, and MixCal.

For reference, \cref{tab:mechanism_taxonomy} summarizes the DP mechanisms relevant to data preprocessing and tabular synthesis considered in our pipelines.

\begin{table*}[t!]
\centering
\scriptsize
\setlength{\tabcolsep}{6pt}
\renewcommand{\arraystretch}{1.15}
\begin{tabular}{lllll}
\toprule
\textbf{Mechanism} & \textbf{Param} & \textbf{Base Events} & \textbf{Fully Adaptive} & \textbf{Accounting} \\
\midrule
\multicolumn{5}{l}{\textit{\textbf{Preprocessing \& Domain Discovery}}} \\
DP Quantiles \citep{gillenwater} & $\varepsilon$ & Exponential & No & ADP \\
DP Quantiles \citep{kaplan2022differentially} & $\varepsilon_1, \dots, \varepsilon_l$ & Exponential & No & zCDP \\
Noisy Histogram & $\sigma$ & Gaussian & No & GDP \\
Threshold Gaussian \citep{gopi20setunion} & $\sigma, \delta$ & \begin{tabular}[t]{@{}l@{}}Gaussian \\ Failure\end{tabular} & No & GDP \\
Optimal Partition \citep{desfontaines2022differentially} & $\varepsilon, \delta$ & \begin{tabular}[t]{@{}l@{}}Laplace \\ Failure\end{tabular} & No & ADP \\
RDP Partition \citep{harrison2026optimal} & $\sigma, \delta$ & \begin{tabular}[t]{@{}l@{}}Gaussian \\ Failure\end{tabular} & No & RDP \\
\midrule
\multicolumn{5}{l}{\textit{\textbf{Base Mechanism}}} \\
AIM \citep{mckenna2022aim} & $\rho$ & \begin{tabular}[t]{@{}l@{}}Exponential \\ Gaussian\end{tabular} & Yes & zCDP \\
AIM-GDP (\cref{alg:aimgdp}) & $\mu$ & \begin{tabular}[t]{@{}l@{}}Exponential \\ Gaussian\end{tabular} & Yes & GDP \\
GEM \citep{liu2021iterative} & $\rho$ & \begin{tabular}[t]{@{}l@{}}Exponential \\ Gaussian\end{tabular} & No & zCDP \\
RAP \citep{aydore2021rap} & $\rho$ & \begin{tabular}[t]{@{}l@{}}Exponential \\ Gaussian\end{tabular} & No & zCDP \\
PrivSyn \citep{zhang2021privsyn} & $\rho$ & Gaussian & No & zCDP \\
MST \citep{mckenna2021relaxed} & $\rho$ & \begin{tabular}[t]{@{}l@{}}Exponential \\ Gaussian\end{tabular} & No & zCDP \\
\midrule
\multicolumn{5}{l}{\textit{\textbf{Language Model Finetuning}}} \\
Poisson DP-SGD~\cite{abadi2016deep} & $\sigma$ & Sub. Gaussian & No & RDP, PLD \\
Random Allocation DP-SGD~\cite{feldman2025privacy} & $\sigma$ & Sub. Gaussian & No & PLD \\
DP-BandMF~\cite{mckenna2024scaling} & $\sigma$ & Corr. Gaussian & No & Monte Carlo \\
\bottomrule
\end{tabular}
\caption{Classification of differentially private synthetic data mechanisms across domain discovery, tabular synthesis, and language model finetuning pillars.}
\label{tab:mechanism_taxonomy}
\end{table*}

\subsection{AIM-GDP}

\colorlet{zcdpcol}{orange!25}
\colorlet{gdpcol}{teal!20}
\newcommand{\zc}[1]{\colorbox{zcdpcol}{#1}}  %
\newcommand{\gd}[1]{\colorbox{gdpcol}{#1}}   %

\begin{figure*}[t!]
\begin{minipage}[t]{0.49\textwidth}
\centering
\resizebox{0.71\linewidth}{!}{%
\begin{minipage}[t]{\textwidth}
\begin{algorithm}[H]
    \caption{\label{alg:aim} \aim}
\begin{algorithmic}[1]
    \State \textbf{Input:} Dataset $D$, workload $W$, \zc{privacy parameter $\rho$ (zCDP)}
    \State \textbf{Output:} Synthetic dataset $\hat{D}$
    \State \textbf{Hyper-Parameters:} \textsf{MAX-SIZE}=80MB, $T=16d$, $\alpha=0.9$
    \State \zc{$\sigma_0 = \sqrt{T / (2 \, \alpha \, \rho)}$}
    \State $\rho_\mathrm{used} = 0$
    \State $t = 0$
    \State Initialize $\hat{p}_t$ using \cref{alg:init} \label{line:init}
    \State $w_r = \sum_{s \in W} c_s \mid r \cap s \mid$
    \State $\sigma_{t+1} \leftarrow \sigma_0$ \quad \zc{$\nu_{t+1} \leftarrow \sqrt{8 (1 - \alpha) \rho / T}$}
    \While{$\rho_\mathrm{used} < \rho$} \label{line:loop}
        \State $t = t+1$
        \State $\rho_\mathrm{used} \leftarrow \rho_\mathrm{used} + {}$\zc{$\frac{1}{8} \nu_t^2 + \frac{1}{2\sigma_t^2}$} \label{line:budget}
        \State $C_t = \set{ r_t \in W_+ \mid \textsf{JT-SIZE}(r_1, \dots, r_t) \leq \frac{\rho_\mathrm{used}}{\rho} \cdot \textsf{MAX-SIZE} }$ \label{line:candidates}
        \State \textbf{select} $r_t \in C_t$ using the exponential mechanism with:
$$ q_r(D) = w_r \Big( \norm{ M_r(D) - M_r(\hat{p}_{t-1}) }_1 - \sqrt{2/\pi} \cdot \sigma_t \cdot n_r \Big) $$ \label{line:quality}
        \State \textbf{measure} marginal on $r_t$:
$$ \tilde{y}_t = M_{r_t}(D) + \mathcal{N}(0, \sigma_t^2 \mathbb{I}) $$
        \State \textbf{estimate} data distribution using \pgm:
$$ \hat{p}_t = \argmin_{p \in S} \sum_{i=1}^t \frac{1}{\sigma_i} \norm{ M_{r_i}(p) - \tilde{y}_i }_2^2 $$
        \State anneal $\nu_{t+1}$ and $\sigma_{t+1}$ using \cref{alg:anneal} \label{line:anneal}
    \EndWhile
    \State \textbf{generate} synthetic data $\hat{D}$ from $\hat{p}_t$ using \pgm
    \State \textbf{return} $\hat{D}$
\end{algorithmic}
\end{algorithm}

\begin{algorithm}[H]
    \caption{\label{alg:init} Initialize $p_t$ (subroutine of \cref{alg:aim})}
\begin{algorithmic}[1]
    \For{$r \in \set{r \in W_+ \mid |r| = 1}$}
        \State $t = t+1$ \quad $\sigma_t \leftarrow \sigma_0$ \quad $r_t \leftarrow r$
        \State $\tilde{y}_t = M_r(D) + \mathcal{N}(0, \sigma_t^2 \mathbb{I})$
        \State $\rho_\mathrm{used} \leftarrow \rho_\mathrm{used} + {}$\zc{$\frac{1}{2 \sigma_t^2}$}
    \EndFor
    \State $\hat{p}_t = \argmin_{p \in S} \sum_{i=1}^t \frac{1}{\sigma_i} \norm{ M_{r_i}(p) - \tilde{y}_i }_2^2$
\end{algorithmic}
\end{algorithm}

\begin{algorithm}[H]
    \caption{\label{alg:anneal} Budget annealing (subroutine of \cref{alg:aim})}
\begin{algorithmic}[1]
    \If{$\norm{M_{r_t}(\hat{p}_t) - M_{r_t}(\hat{p}_{t-1})}_1 \leq \sqrt{2/\pi} \cdot \sigma_t \cdot n_{r_t}$}
        \State $\nu_{t+1} \leftarrow 2 \cdot \nu_t$
        \State $\sigma_{t+1} \leftarrow \sigma_t / 2$
    \Else
        \State $\nu_{t+1} \leftarrow \nu_t$
        \State $\sigma_{t+1} \leftarrow \sigma_t$
    \EndIf
    \If{$(\rho - \rho_\mathrm{used}) \leq 2 \big($\zc{$\frac{1}{2\sigma_{t+1}^2} + \frac{1}{8} \nu_{t+1}^2$}$\big)$} \label{line:endcond}
        \State \zc{$\nu_{t+1} = \sqrt{8 \cdot (1-\alpha) \cdot (\rho - \rho_\mathrm{used})}$}
        \State \zc{$\sigma_{t+1} = \sqrt{1 / (2 \cdot \alpha \cdot (\rho - \rho_\mathrm{used}))}$}
    \EndIf
\end{algorithmic}
\end{algorithm}
\end{minipage}}%
\end{minipage}%
\hfill
\begin{minipage}[t]{0.49\textwidth}
\centering
\resizebox{0.71\linewidth}{!}{%
\begin{minipage}[t]{\textwidth}
\begin{algorithm}[H]
    \caption{\label{alg:aimgdp} \aimgdp}
\begin{algorithmic}[1]
    \State \textbf{Input:} Dataset $D$, workload $W$, \gd{privacy parameter $\smash{\mu^2}$ (GDP)}
    \State \textbf{Output:} Synthetic dataset $\hat{D}$
    \State \textbf{Hyper-Parameters:} \textsf{MAX-SIZE}=80MB, $T=16d$, $\alpha=0.9$
    \State \gd{$\sigma_0 = \sqrt{T / (\alpha \, \smash{\mu^2})}$}
    \State $\smash{\mu^2}_\mathrm{used} = 0$
    \State $t = 0$
    \State Initialize $\hat{p}_t$ using \cref{alg:init-gdp} \label{gline:init}
    \State $w_r = \sum_{s \in W} c_s \mid r \cap s \mid$
    \State $\sigma_{t+1} \leftarrow \sigma_0$ \quad \gd{$\nu_{t+1} \leftarrow b^{-1}\big( (1 - \alpha) \smash{\mu^2} / T \big)$}
    \While{$\smash{\mu^2}_\mathrm{used} < \smash{\mu^2}$} \label{gline:loop}
        \State $t = t+1$
        \State $\smash{\mu^2}_\mathrm{used} \leftarrow \smash{\mu^2}_\mathrm{used} + {}$\gd{$b(\nu_t) + \frac{1}{\sigma_t^2}$} \label{gline:budget}
        \State $C_t = \set{ r_t \in W_+ \mid \textsf{JT-SIZE}(r_1, \dots, r_t) \leq \frac{\smash{\mu^2}_\mathrm{used}}{\smash{\mu^2}} \cdot \textsf{MAX-SIZE} }$ \label{gline:candidates}
        \State \textbf{select} $r_t \in C_t$ using the exponential mechanism with:
$$ q_r(D) = w_r \Big( \norm{ M_r(D) - M_r(\hat{p}_{t-1}) }_1 - \sqrt{2/\pi} \cdot \sigma_t \cdot n_r \Big) $$ \label{gline:quality}
        \State \textbf{measure} marginal on $r_t$:
$$ \tilde{y}_t = M_{r_t}(D) + \mathcal{N}(0, \sigma_t^2 \mathbb{I}) $$
        \State \textbf{estimate} data distribution using \pgm:
$$ \hat{p}_t = \argmin_{p \in S} \sum_{i=1}^t \frac{1}{\sigma_i} \norm{ M_{r_i}(p) - \tilde{y}_i }_2^2 $$
        \State anneal $\nu_{t+1}$ and $\sigma_{t+1}$ using \cref{alg:anneal-gdp} \label{gline:anneal}
    \EndWhile
    \State \textbf{generate} synthetic data $\hat{D}$ from $\hat{p}_t$ using \pgm
    \State \textbf{return} $\hat{D}$
\end{algorithmic}
\end{algorithm}

\begin{algorithm}[H]
    \caption{\label{alg:init-gdp} Initialize $p_t$ (subroutine of \cref{alg:aimgdp})}
\begin{algorithmic}[1]
    \For{$r \in \set{r \in W_+ \mid |r| = 1}$}
        \State $t = t+1$ \quad $\sigma_t \leftarrow \sigma_0$ \quad $r_t \leftarrow r$
        \State $\tilde{y}_t = M_r(D) + \mathcal{N}(0, \sigma_t^2 \mathbb{I})$
        \State $\smash{\mu^2}_\mathrm{used} \leftarrow \smash{\mu^2}_\mathrm{used} + {}$\gd{$\frac{1}{\sigma_t^2}$}
    \EndFor
    \State $\hat{p}_t = \argmin_{p \in S} \sum_{i=1}^t \frac{1}{\sigma_i} \norm{ M_{r_i}(p) - \tilde{y}_i }_2^2$
\end{algorithmic}
\end{algorithm}

\begin{algorithm}[H]
    \caption{\label{alg:anneal-gdp} Budget annealing (subroutine of \cref{alg:aimgdp})}
\begin{algorithmic}[1]
    \If{$\norm{M_{r_t}(\hat{p}_t) - M_{r_t}(\hat{p}_{t-1})}_1 \leq \sqrt{2/\pi} \cdot \sigma_t \cdot n_{r_t}$}
        \State $\nu_{t+1} \leftarrow 2 \cdot \nu_t$
        \State $\sigma_{t+1} \leftarrow \sigma_t / 2$
    \Else
        \State $\nu_{t+1} \leftarrow \nu_t$
        \State $\sigma_{t+1} \leftarrow \sigma_t$
    \EndIf
    \If{$(\smash{\mu^2} - \smash{\mu^2}_\mathrm{used}) \leq 2 \big($\gd{$\frac{1}{\sigma_{t+1}^2} + b(\nu_{t+1})$}$\big)$} \label{gline:endcond}
        \State \gd{$\nu_{t+1} = b^{-1}\big( (1-\alpha) \cdot (\smash{\mu^2} - \smash{\mu^2}_\mathrm{used}) \big)$}
        \State \gd{$\sigma_{t+1} = \sqrt{1 / (\alpha \cdot (\smash{\mu^2} - \smash{\mu^2}_\mathrm{used}))}$}
    \EndIf
\end{algorithmic}
\end{algorithm}
\end{minipage}}%
\end{minipage}
\caption{\label{fig:aim-side-by-side}
Left: the original \aim with zCDP accounting
(\smash{\colorbox{zcdpcol}{orange}}). Right: \aimgdp, the same algorithm with the
budget tracked in GDP units $\smash{\mu^2}$, which compose additively
(\smash{\colorbox{gdpcol}{teal}}). The Gaussian mechanism with noise scale $\sigma$
is exactly $\nicefrac{1}{\sigma}$-GDP, and the exponential mechanism with parameter
$\nu$ is a $\nu$-BR mechanism with exact squared GDP budget 
$b(\nu) = \big( 2 \Phi^{-1}\big( \tfrac{1}{e^{\nu/2} + 1} \big) \big)^2$ given by \cref{stmt:br-to-gdp}, with
$b^{-1}(\smash{\mu^2}) = 2 \ln \frac{\Phi(\nicefrac{\mu}{2})}{\Phi(-\nicefrac{\mu}{2})}$ denoting its
inverse used for calibrating the parameter to a given budget.}
\end{figure*}

\crefrange{alg:aim}{alg:anneal-gdp} present the original AIM and our AIM-GDP variant.
We introduce AIM-GDP as the base synthetic data mechanism in \cref{sec:applications} and evaluate it in \cref{sec:exp}, both in isolation and as a component of the end-to-end and multimodal pipelines.

\subsection{DP Preprocessing}

\subsubsection{DP Quantiles}
\cref{alg:quantiles} shows the recursive-median DP quantile mechanism from \citet{kaplan2022differentially} used to discretize numerical attributes in our end-to-end pipeline. We describe it in \cref{sec:applications}, and evaluate the pipeline in \cref{sec:exp}.

\subsubsection{DP Open-Set Partition Selection}
\cref{alg:open_set} presents the Threshold Gaussian~\cite{gopi20setunion} partition-selection mechanism used for private domain discovery of open-set categorical attributes. We describe it in \cref{sec:applications}, and evaluate the pipeline in \cref{sec:exp}.

\begin{algorithm}[t!]
\caption{DP Quantile Discretization via Recursive Medians}
\label{alg:quantiles}
\begin{algorithmic}[1]
\Require 1D histogram $\mathbf{c} = (c_0, \dots, c_{K-1}) \in \mathbb{N}^K$, tree depth $L$, per-level privacy budgets $\{\epsilon_\ell\}_{\ell=1}^L$.
\Ensure Set of discretized bin cutpoints $\mathcal{E} \subset \{0, \dots, K-1\}$.
\Statex
\Function{RecursiveBisection}{$\text{lo}, \text{hi}, \ell$}
    \If{$\ell = 0$}
        \State \Return $\emptyset$
    \EndIf
    \State $r \gets \frac{1}{2} \sum_{j=\text{lo}}^{\text{hi}-1} c_j$ \Comment{Target median rank within current sub-interval}
    \For{each candidate split $v \in \{\text{lo}, \dots, \text{hi}-1\}$}
        \State $L_v \gets \sum_{j=\text{lo}}^{v-1} c_j, \quad R_v \gets \sum_{j=\text{lo}}^v c_j$ \Comment{Cumulative rank bounds of cell $v$}
        \State $u(v) \gets -\max\left(0,\, L_v - r,\, r - R_v\right)$ \Comment{Distance to target median rank}
    \EndFor
    \State Sample $v^* \in \{\text{lo}, \dots, \text{hi}-1\}$ w.p.  $\propto \exp\left(\epsilon_\ell \cdot u(v)\right)$ \Comment{Exp.\ Mech.\ with $\Delta u = 1/2$}
    \State $\mathcal{E}_{\text{left}} \gets \Call{RecursiveBisection}{\text{lo}, v^*, \ell - 1}$
    \State $\mathcal{E}_{\text{right}} \gets \Call{RecursiveBisection}{v^*, \text{hi}, \ell - 1}$
    \State \Return $\mathcal{E}_{\text{left}} \cup \{v^*\} \cup \mathcal{E}_{\text{right}}$
\EndFunction
\Statex
\State \Return $ \Call{RecursiveBisection}{0, K, L}$
\end{algorithmic}
\end{algorithm}

\begin{algorithm}[t!]
\caption{DP Open-Set Partition Selection via Threshold Gaussian}
\label{alg:open_set}
\begin{algorithmic}[1]
\Require Partition counts $\mathbf{y} \in \mathbb{N}^{|\mathcal{V}|}$ over active keys $\mathcal{V} = \{k : y_k \ge \tau_{\min}\}$, noise scale $\sigma$, target failure probability $\delta$, minimum threshold $\tau_{\min}$.
\Ensure Discovered vocabulary $\widehat{\mathcal{V}}$, noisy partition counts $\tilde{\mathbf{y}}$, and closed-domain mapping $\mathcal{M}$.
\Statex
\State $T \gets \tau_{\min} + \sigma \Phi^{-1}(1 - \delta)$ \Comment{Threshold cutoff bounding false positives to $\le \delta$ per empty bin}
\State $\widehat{\mathcal{V}} \gets \emptyset$, \quad $\tilde{\mathbf{y}} \gets \emptyset$
\For{each active partition key $k \in \mathcal{V}$}
    \State Sample $\xi_k \sim \mathcal{N}(0, \sigma^2)$
    \State $\tilde{y}_k \gets y_k + \xi_k$ \Comment{Perturb count with Gaussian noise}
    \If{$\tilde{y}_k \ge T$}
        \State $\widehat{\mathcal{V}} \gets \widehat{\mathcal{V}} \cup \{k\}$ \Comment{Retain key in discovered vocabulary}
        \State $\tilde{\mathbf{y}}[k] \gets \tilde{y}_k$
    \EndIf
\EndFor
\State Closed vocabulary domain $\mathcal{D} \gets \widehat{\mathcal{V}} \cup \{\texttt{[OTHER]}\}$
\State Construct mapping $\mathcal{M}(k) = k$ if $k \in \widehat{\mathcal{V}}$ else $\texttt{[OTHER]}$ \Comment{Collapse unselected tail categories}
\State \Return Discovered vocabulary $\widehat{\mathcal{V}}$, noisy counts $\tilde{\mathbf{y}}$, mapping $\mathcal{M}$
\end{algorithmic}
\end{algorithm}

\subsubsection{Measurement Recycling for End-to-End Synthetic Data}
Beyond tight accounting, we can use an additional enhancement that allows us to skip the dedicated one-way marginal measurement step in \aimgdp by  reusing information measured during the preprocessing step of mixed tabular domains (open set discovery and quantile-based discretization). A na\"ive implementation would treat preprocessing as disjoint from the application of the base mechanism by spending a fraction of the privacy budget to establish discrete domains, and a separate fraction of the privacy budget measuring one-way marginals. This is wasteful, as it pays twice for single-attribute information and diminishes the remaining budget available for capturing multivariate interactions. Rather than measuring one-way marginals again, we eliminate this step entirely by recycling the private information obtained during preprocessing. 

Under the hood, synthetic data mechanisms like \aim rely on estimation algorithms like \pgm~\citep{mckenna2019graphical}. These attempt to estimate the discrete data distribution $\hat{p}$ given noisy linear measurements of its marginals of the form $y = Q p + \xi$, where $p$ denotes the true data vector, $Q$ is a linear query matrix, and $\xi$ is zero-centered noise with standard deviation $\sigma$. Whereas closed-set categorical attributes directly yield complete noisy histograms and thus total dataset size estimates, open-set and numerical attributes provide only partial information that we treat as follows:
\begin{itemize}[leftmargin=*]
    \item For open-set categories with discovered frequent items $\widehat{\mathcal{V}} = \{k_1, \dots, k_m\}$ and an unmeasured tail bucket at index $0$, partition selection provides noisy counts $\tilde{y} \in \mathbb{R}^m$. We incorporate these as a partial linear measurement with query matrix $Q_{\text{open}} = \begin{bmatrix} 0_{m \times 1} & I_m \end{bmatrix}$, which depends only on the observed categories while leaving the unmeasured tail unconstrained.
    \item For continuous attributes handled via quantile discretization, the distribution over bins is approximately uniform by definition of quantiles. Given an estimate of the total dataset size $\tilde{N}$, we construct an identity query $Q_{\text{num}} = I_K$ with target $y_{\text{num}} = \nicefrac{\tilde{N}}{K} \cdot \indic{}_K$ and (heuristic) uncertainty $\sigma_{\text{num}} \propto \nicefrac{1}{\sqrt{\rho}}$ calibrated to the $\rho$-zCDP quantile budget.
\end{itemize}
Accounting for precisely what was measured during preprocessing allows downstream models to initialize one-way marginals for free, preserving the entire synthesis budget for modeling higher-order dependencies.

\begin{figure}[t!]
\centering
    \includegraphics[width=0.66\textwidth]{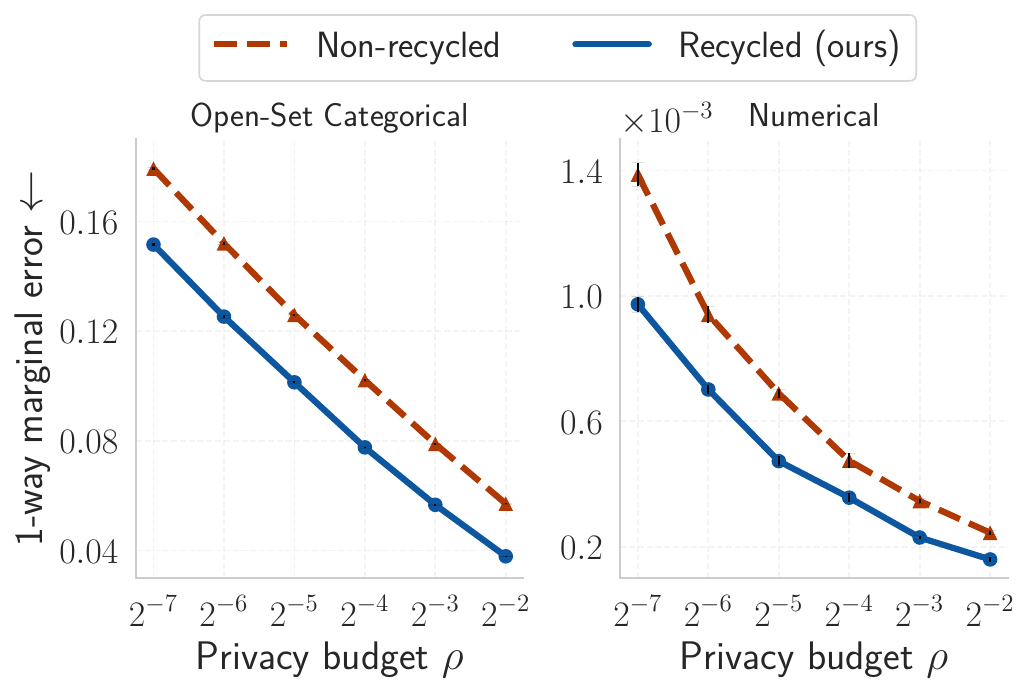}
    \caption{One-way marginal measurement recycling after preprocessing.}
    \label{fig:recycling}
\end{figure}

When applied to 1D data, the improvement from measurement recycling was easily observable and matches expectations, as shown in \cref{fig:recycling}.
When we applied this to an end-to-end synthetic data mechanism like \aim, however, we observed no meaningful difference in performance between the recycling approach and the naive baseline that splits the budget between preprocessing and one-way marginal measurement. We therefore consider this a \emph{negative} result. Future work, however, could investigate whether there exist settings where marginal recycling provides benefits.

\subsection{\mixcal}

\begin{algorithm}[t!]
\caption{\mixcal}
\label{alg:mixcal}
\begin{algorithmic}[1]
\Require Target guarantee $(\varepsilon^{\star}, \delta^{\star})$; mechanism family $M_i(\cdot)$ each
  parameterized by additive budget $b_i$;
  weights $w_1, \dots, w_k$ with $\sum_i w_i = 1$
\Statex
\Function{PLDAccounting$_{\delta}$}{$b$} \Comment{privacy cost of the composition at budget $b$}
  \For{$i = 1, \dots, k$}
    \State $b_i \gets w_i \, b$ \Comment{allocate the budget additively}
    \State Calibrate $M_i(\cdot)$ to satisfy $b_i$ privacy.
  \EndFor
  \State \Return $\varepsilon$ of $M_1 \circ \cdots \circ M_k$ at $\delta^{*}$
\EndFunction
\Statex
\State Binary search for the largest $b$ with $\textsc{PLDAccounting}_{\delta^\star}(b) \le \varepsilon^{\star}$.
\State \Return a calibrated mechanism $M = M_1 \circ \cdots \circ M_k$.
\end{algorithmic}
\end{algorithm}

\cref{alg:mixcal} formalizes \mixcal, which we introduce in \cref{sec:applications} to calibrate the component mechanisms of heterogeneous compositions while retaining tight PLD accounting.

In \mixcal, we introduce a budget parameter $b$ that should represent an additive privacy notation like zCDP or GDP. Suppose that a budget split for a given component is $b_i$. Then, in \mixcal, we calibrate the component to satisfy this privacy guarantee $b_i$. 

A practically useful property of \mixcal is that the correctness of the final guarantee only depends on the correctness of PLD accounting. Therefore, intermediate calibration steps to budget shares need only be approximate. For instance, when using GDP as the budget notion, we can calibrate a Gaussian mechanism exactly to its budget share $b_i = \mu_i$ by setting $\sigma = \nicefrac{1}{\mu_i}$, but when calibrating Threshold Gaussian with a relatively small probability of failure, e.g., $\tau < 10^{-5}$, using the results of \cref{app:extra-methods}, we can ignore the failure term and treat the mechanism as Gaussian for the purposes of budget allocation in \mixcal.

For instance, using a zCDP interpretation of $b_i$, we could use:
\begin{itemize}
\item Gaussian Mechanism: $\sigma = \sqrt{\frac{1}{2b_i}}$ (exact)
\item Exponential Mechanism $\nu = \sqrt{8 b_i}$ (pessimistic)
\item Threshold Gaussian: $\sigma = \sqrt{\frac{1}{2b_i}}$ (approximate)
\item DP-SGD noise multiplier: $\sigma = \sqrt{\frac{T}{2b_i}}$, where $T$ is the number of iterations (approximate; assuming no amplification).
\end{itemize}

\section{Additional Experiment Details}

\label{app:data}
\label{app:riskcal}
\label{app:audit}
\label{app:extra-exp}

\subsection{Details of the Accounting Comparisons}

In this section, we describe the setups behind \cref{fig:gdp-demo}. In both panels, we report $\varepsilon_{\text{zCDP}} - \varepsilon_{\text{ours}}$ at $\delta = 10^{-5}$, so positive values mean that our analysis is tighter. For the zCDP analysis of the Exponential mechanism, we use the tightest available expression due to \citet{harrison2025exact}:
$$\rho_{\text{BR}}(\nu) = \frac{\nu}{e^\nu + 1} + \log\left( \frac{e^\nu - 1}{\nu}\right) - 1,$$
and convert zCDP to $(\varepsilon, \delta)$-DP with the standard numerical conversion from \citet{canonne2020discrete}. The GDP parameter $\mu_{\text{BR}}(\nu)$ of a $\nu$-BR mechanism is given by \cref{stmt:br-to-gdp}.

\paragraph{Left: GDP vs. zCDP in fully adaptive compositions.}
We consider $T \in \{1, 5, 10, 20, 50, 100\}$ fully adaptive compositions of a Gaussian and an Exponential mechanism, which represents \aim-like workloads. We fix the zCDP cost of each pair to $\rho = 0.01$ and split it according to the proportion $p \in [0.1, 0.99]$ allocated to the Gaussian mechanism. 

To split the budget between the two mechanisms, we follow the \mixcal approach outlined in \cref{app:extra-details}. Using the standard analyses~\cite{cesar2021bounding,durfee2019practical} for simplicity, we set the Gaussian parameter to $\mu = \sqrt{2 \rho}$, and the Exponential mechanism parameter to $\nu = \smash{\sqrt{8 \rho (1 - p)}}$. 

For the final accounting, we use tight expressions. The GDP analysis composes the pairs as:
$$\mu_{\text{total}} = \sqrt{T \cdot (\mu^2 + \mu_{\text{BR}}(\nu)^2)},$$
and the zCDP analysis uses:
$$\rho_{\text{total}} = T \cdot \left(\frac{\mu^2}{2} + \rho_{\text{BR}}(\nu)\right).$$
We find that the GDP analysis is tighter when the majority of the budget of each pair is consumed by the Gaussian mechanism (e.g., in \aim, the default is $p = 0.9$). This directly translates into improvements in the performance of \aimgdp, as we show in \cref{sec:exp-aimgdp}.

\paragraph{Right: PLD vs. zCDP for preprocessing workloads.}
We consider a preprocessing-like workload with $T \in \{1, \dots, 49\}$ features, composed non-fully-adaptively. Each feature is processed by one Exponential mechanism that satisfies $\nu$-BR, and one Threshold Gaussian mechanism. For simplicity, we set $\nu = \mu \in \{0.05, 0.1, 0.15, 0.2\}$ for all features, where $\mu$ is the GDP parameter of the Gaussian part, so a larger $\mu$ means less noise. We model the Threshold Gaussian mechanism as a composition of the Gaussian mechanism and a $(0, \tau)$-DP mechanism with $\tau = 10^{-8}$, which accounts for the failure probability of the threshold (see \cref{app:extra-methods}).

For our analysis, we convolve the PLDs of all mechanisms: $T$ copies of the PLD from \cref{stmt:br-to-pld} with parameter $\nu$, and $T$ copies of the Gaussian PLD composed with the $(0, \tau)$-DP event as per the analysis in \cref{app:extra-methods}.
For the zCDP baseline, the total budget is $\rho = T \cdot (\nicefrac{\mu^2}{2} + \rho_{\text{BR}}(\nu))$, with a corresponding accounting for the failure probability.

\subsection{Datasets}

For our discrete and end-to-end synthetic data experiments, we use four standard datasets from the machine learning literature that have been extensively used in prior work on DP synthetic data~\cite[see, e.g.,][]{mckenna2022aim, tao2022benchmarking, annamalai2024what, ganev2025importance}: Adult, Fire, Page blocks, and Gas.
A summary is provided in \cref{tab:datasets}.
We use two versions of each dataset: 1) a fully discretized version, using uniform discretization into 16 bins, for our discrete experiments, and 2) the original, unprocessed version for our end-to-end experiments.

For our multimodal experiments, we use the bioRxiv dataset, a corpus of scientific abstracts ($n = 34k$, average number of
tokens per sample around 300). We use the filtered dataset from \cite{hu2026actg}, that is, we keep only the abstracts that appear after the knowledge cutoff date of Gemma-3 family models (Aug 2024). We perform train/validation/test split and obtain a train set of size $n = 28,846$. 

\begin{table}[t!]
\footnotesize
\centering
\caption{Datasets used in our discrete/end-to-end synthetic data experiments. $n$ is the number of training records, $d$ the number of columns, and $d_{\mathrm{num}}$ and $d_{\mathrm{cat}}$ the numbers of numerical and categorical columns, respectively. The final column reports the categorical column with the largest domain and its domain size.}
\label{tab:datasets}
\begin{tabular}{lrrrrl}
\toprule
\textbf{Dataset}
& $n$
& $d$
& $d_{\mathrm{num}}$
& $d_{\mathrm{cat}}$
& \textbf{Largest categorical domain} \\
\midrule
Adult  & 32,561  & 11 &  0 & 11 & native\_country (42) \\
Fire             & 244,106 & 10 &  0 & 10 & Station Area (46) \\
Page blocks      & 4,378   & 11 & 10 &  1 & block (2) \\
Gas              & 22,191  & 11 & 10 &  1 & TEY (2) \\
\bottomrule
\end{tabular}
\end{table}

\subsection{Discrete Synthetic Data with \aimgdp}
\label{app:emp_b}

Detailed results for all four (discrete) datasets are shown in \cref{fig:aimgdp_full}, with additional diagnostic experiments on the Adult dataset presented in \cref{fig:aimgdp_diagnostics,fig:aimgdp_network}.
We discuss these experiments in \cref{sec:exp-aimgdp}.

\begin{figure}[t!]
    \centering
    \includegraphics[width=0.45\linewidth]{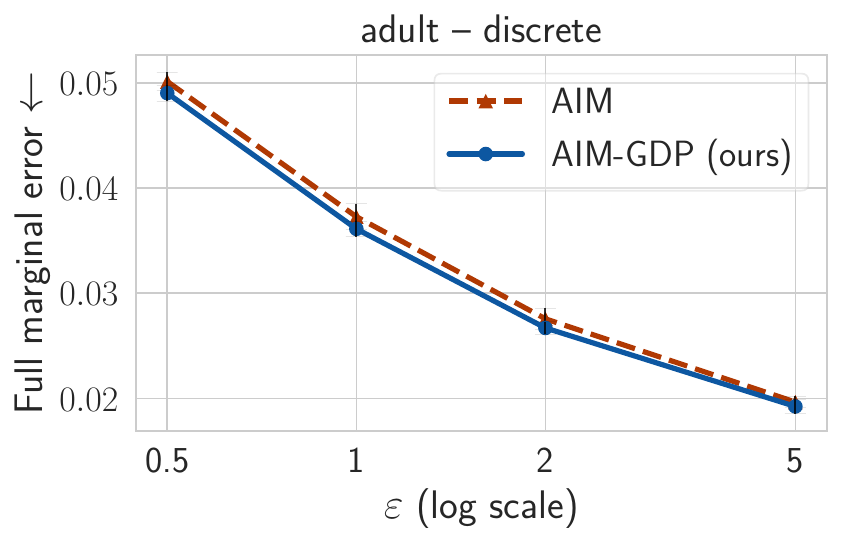}
    \vspace{0.5em}
    \includegraphics[width=0.45\linewidth]{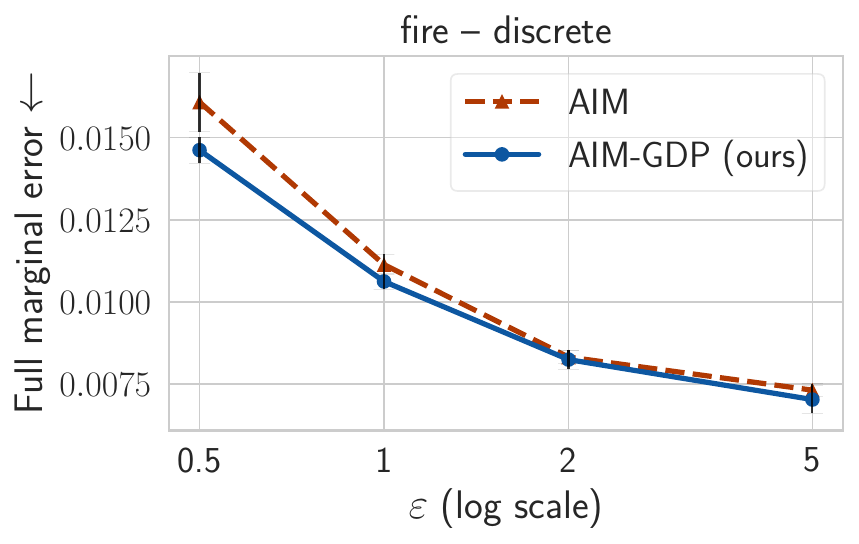}
    \hfill
    \includegraphics[width=0.45\linewidth]{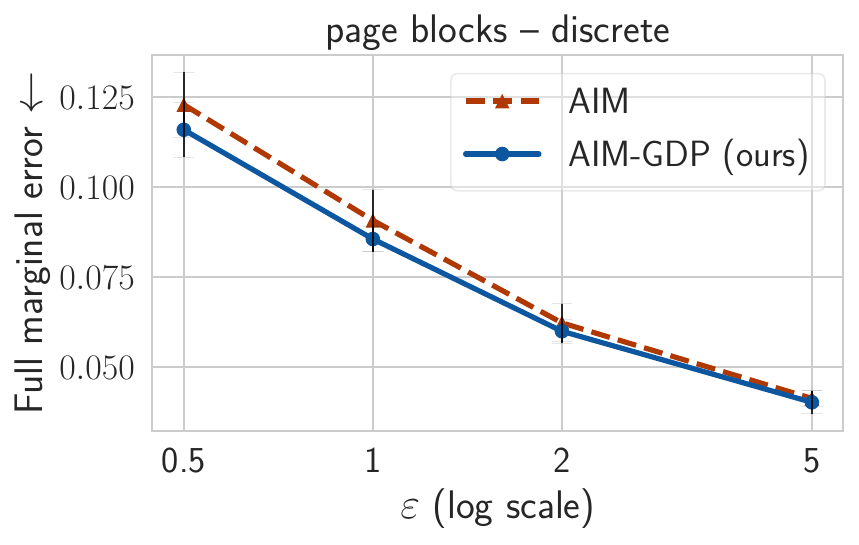}
    \vspace{0.5em}
    \includegraphics[width=0.45\linewidth]{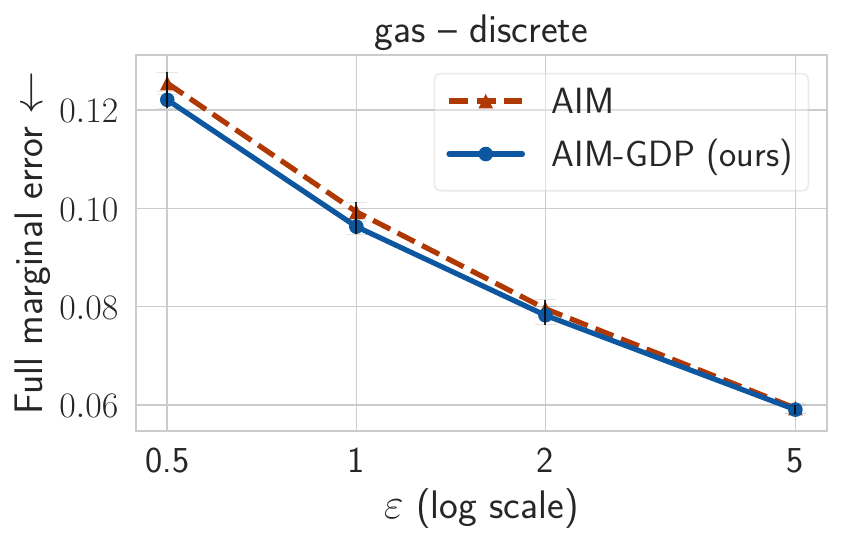}
    \caption{\aimgdp vs. \aim across full marginal error across four datasets in a discrete pipeline. We report 95\% bootstrap CI over the replications.}
    \label{fig:aimgdp_full}
\end{figure}

\begin{figure}[t!]
    \centering
    \includegraphics[width=0.45\linewidth]{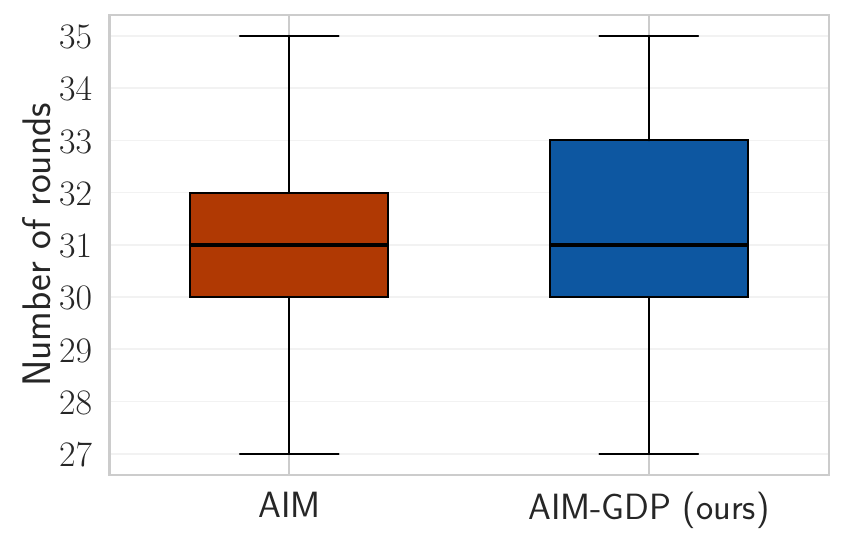}
    \hfill
    \includegraphics[width=0.45\linewidth]{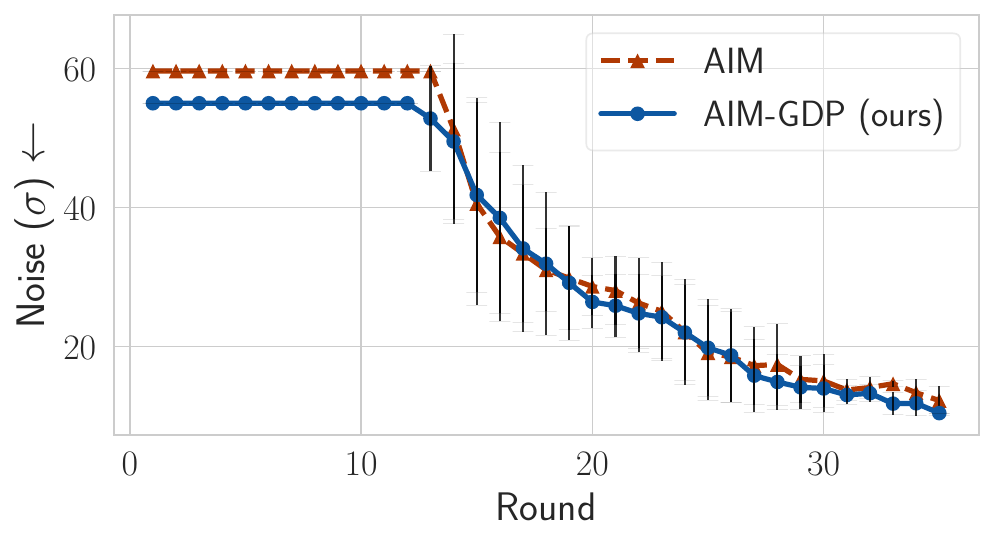}
    \caption{\aimgdp vs. \aim across number of rounds and noise ($\sigma$) per round averaged across 25 runs on Adult in a discrete pipeline.}
    \label{fig:aimgdp_diagnostics}
\end{figure}

\begin{figure}[t!]
    \centering
    \begin{subfigure}{0.90\linewidth}
        \centering
        \includegraphics[width=\linewidth]{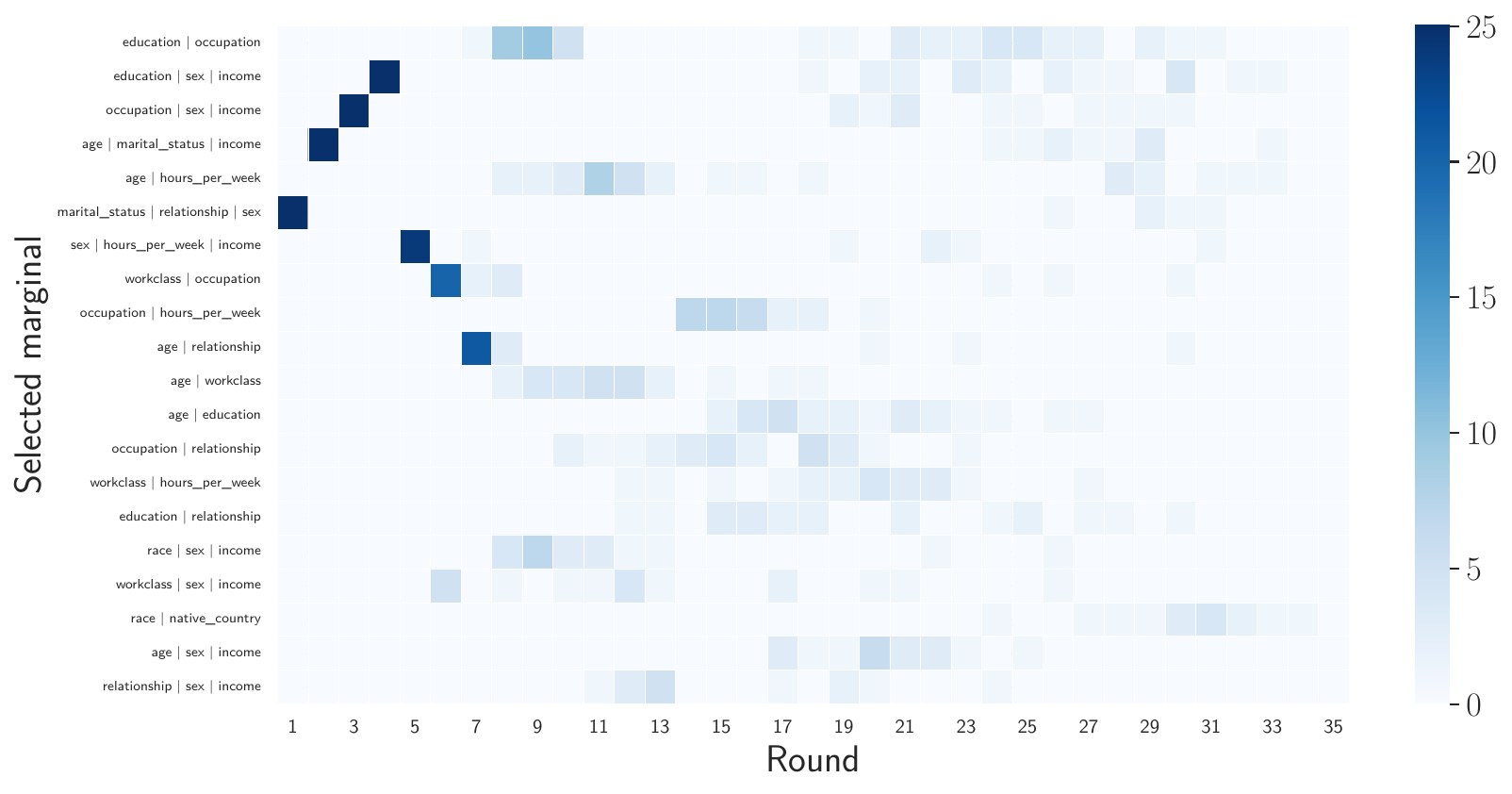}
        \caption{\aim}
        \label{fig:aim-marginals}
    \end{subfigure}
    \vspace{1em}
        \begin{subfigure}{0.90\linewidth}
        \centering
        \includegraphics[width=\linewidth]{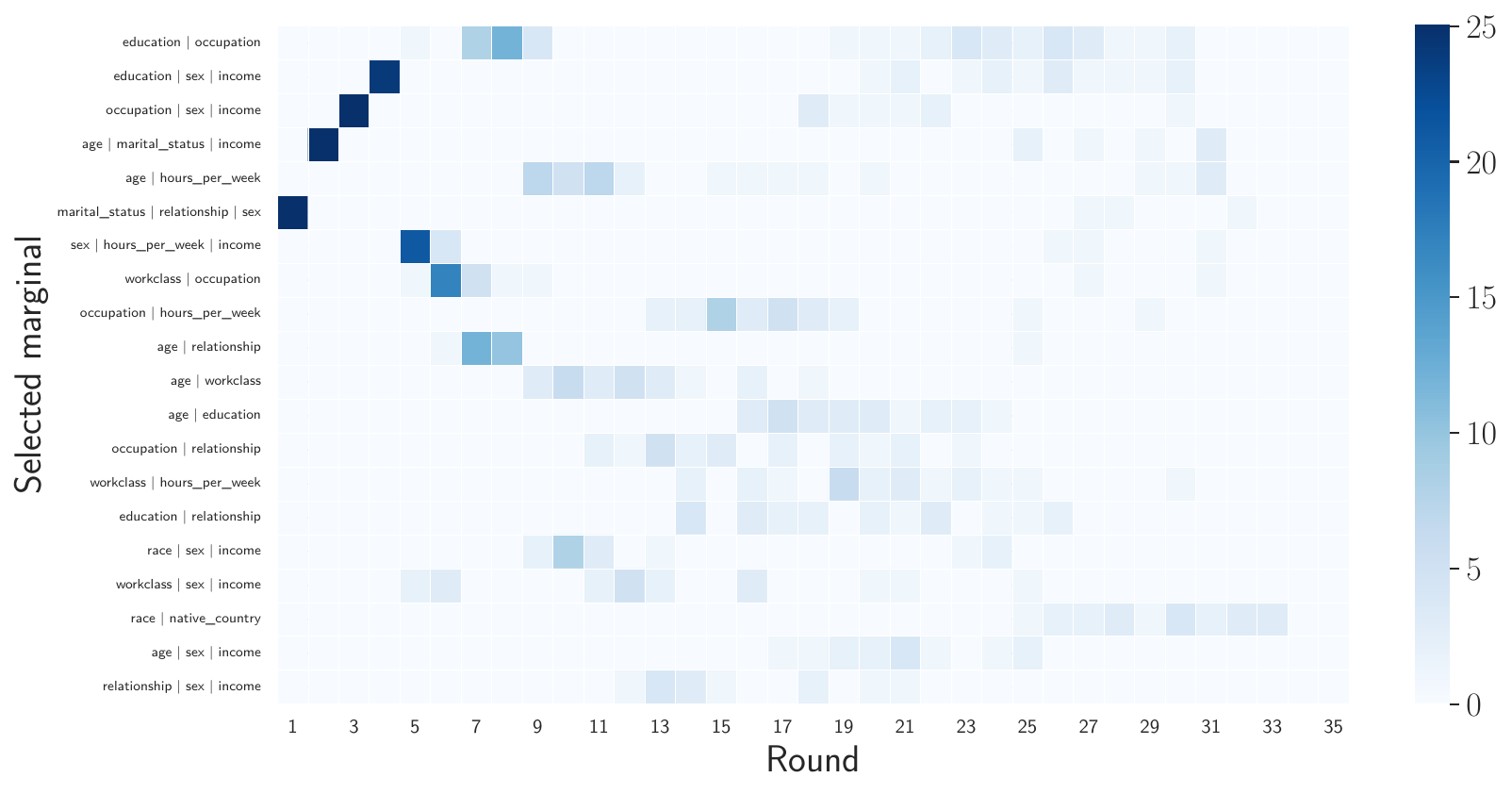}
        \caption{\aimgdp (ours)}
        \label{fig:aim-gdp-marginals-diagnostics}
    \end{subfigure}
    \vspace{1em}
    \begin{subfigure}{0.66\linewidth}
        \centering
        \includegraphics[width=\linewidth]{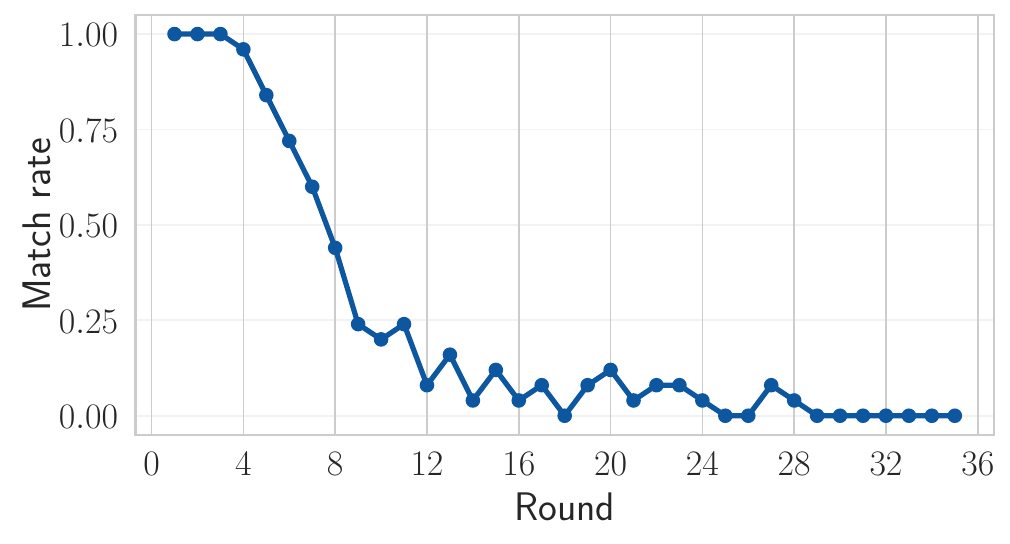}
        \caption{Selected marginal match rate}
        \label{fig:aim-gdp-marginals-match}
    \end{subfigure}
    \caption{\aimgdp vs. \aim across the selected marginal per round averaged across 25 runs on Adult in a discrete pipeline.}
    \label{fig:aimgdp_network}
\end{figure}

\subsubsection{Attack-Aware Calibration}
In the standard approach to calibrating DP mechanisms, we start with a target $(\varepsilon^\star, \delta^\star)$-DP guarantee for a small $\delta^\star$, usually smaller than the size of the dataset~\cite[see, e.g.,][]{ponomareva2023dp}. We can then evaluate the risk due to MIAs~\cite{kulynych2024attack} or other attacks~\cite{kulynych2025unifying} by converting $(\varepsilon^\star, \delta^\star)$-DP guarantee to $f$-DP via a standard conversion~\cite{dong2019gaussian}:
\begin{equation}\label{eq:dp-to-f}
    f_{\varepsilon, \delta}(\alpha) = \max\{0, 1 - e^{\varepsilon}\alpha - \delta, e^{-\varepsilon}(1 - \alpha - \delta)\}.
\end{equation}

Respectively, when we want to calibrate $\varepsilon^\star$ to a given level of risk in terms of $f$-DP, i.e., $(\alpha^\star, \beta^\star)$, we solve \cref{eq:dp-to-f} for a given small $\delta^\star$. We refer to this as \emph{standard risk calibration}.

In \emph{attack-aware calibration}~\cite{kulynych2024attack}, we start with the $f$-DP target $(\alpha^\star, \beta^\star)$, and calibrate the parameters of the DP mechanism directly to satisfy the $f$-DP level. 
In the case of a GDP mechanism, we can then solve the direct $f$-DP expression to derive the $\mu$ parameter that satisfies the target $(\alpha^\star, \beta^\star)$. In the case of PLD analysis, we can can run search, e.g., bisection, for the mechanism parameters until the $f$-DP curve dervied from the PLD~\cite{kulynych2024attack} satisfies the target requirement.

\subsubsection{Auditing}
DP auditing empirically tests whether the privacy leakage of an implementation is consistent with its theoretical guarantee~\cite{nasr2023tight,annamalai2024what}.
Following \citet{ganev2026tight}, we formulate auditing as a hypothesis-testing problem: a challenger repeatedly runs the mechanism on two neighboring datasets, and an adversary attempts to distinguish which dataset was used from the mechanism's output.
The resulting FPR and TPR define an empirical point on the
privacy trade-off curve.
As \aimgdp directly satisfies GDP, we convert this point into a statistical lower bound $\mu_{\mathrm{emp}}$ on the privacy leakage and compare it directly with the theoretical $\mu$.

We use the worst-case neighboring datasets $D_{\mathrm{out}}=\{[0,0,0]\}^{10}$ and $D_{\mathrm{in}}=D_{\mathrm{out}}\cup\{[1,1,1]\}$, chosen to maximize the target record's influence on the measured marginals.
We train 10,000 independent instances of \aimgdp, equally split between $D_{\mathrm{out}}$ and $D_{\mathrm{in}}$, and generate synthetic records from each model.
Unlike prior audits of \aim, we run the full \aimgdp model, including its adaptive/data-dependent marginal selection.

We consider black-box attacks using features computed from the released synthetic data, white-box attacks using the mechanism's internal measurements, and an ensemble using both.
We split the 10,000 labeled runs into 4,000 training,
2,000 validation, and 4,000 test examples, train an XGBoost classifier~\cite{chen2016xgboost}, and select the decision threshold on the validation set to maximize attack advantage ($\mathrm{TPR}-\mathrm{FPR}$).
We evaluate the selected attack once on the held-out test set and use the resulting FPR/TPR with the Bayesian procedure~\cite{zanella2023bayesian, ganev2026tight} to obtain a 90\%-posterior lower bound $\mu_{\mathrm{emp}}$.
The resulting audit is discussed in \cref{sec:exp-aimgdp}.

\subsection{End-to-End Synthetic Data with \aimgdp, DP preprocessing, and \mixcal}
\label{app:emp_c}

We show detailed results for all four (original/mixed-type) datasets in an end-to-end pipeline in \cref{fig:aimgdp_full_e2e}.
Additionally, we evaluate the pipeline under Poisson subsampling in simulation (\cref{fig:subsampled_accounting}).
We discuss these in \cref{sec:exp-aimgdp-e2e}.

\begin{figure}[t!]
    \centering
    \includegraphics[width=0.45\linewidth]{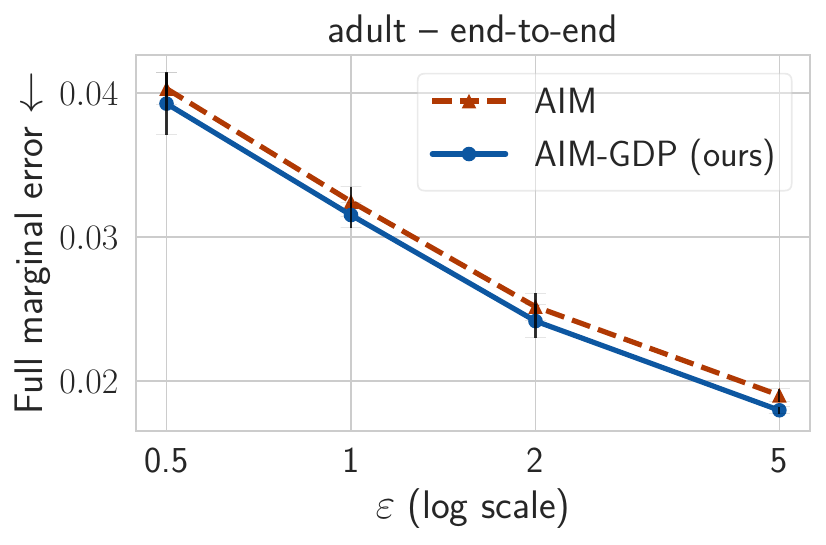}
    \vspace{0.5em}
    \includegraphics[width=0.45\linewidth]{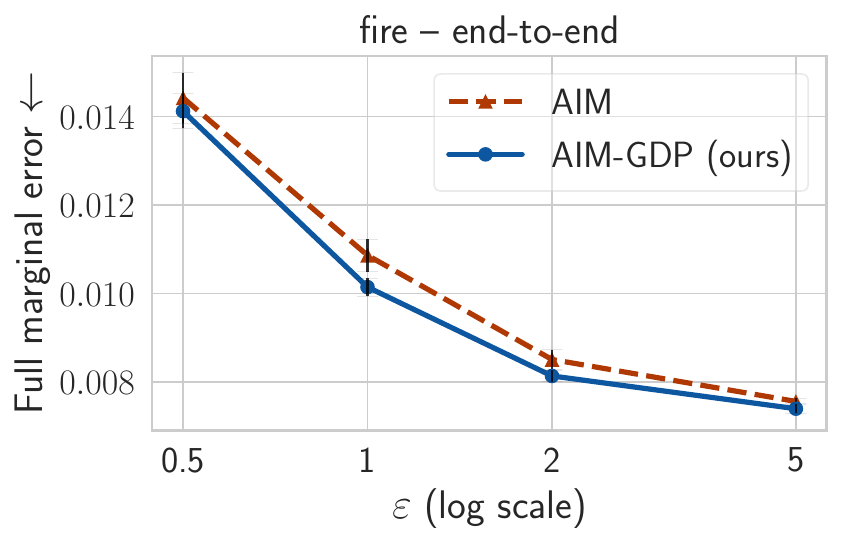}
    \includegraphics[width=0.45\linewidth]{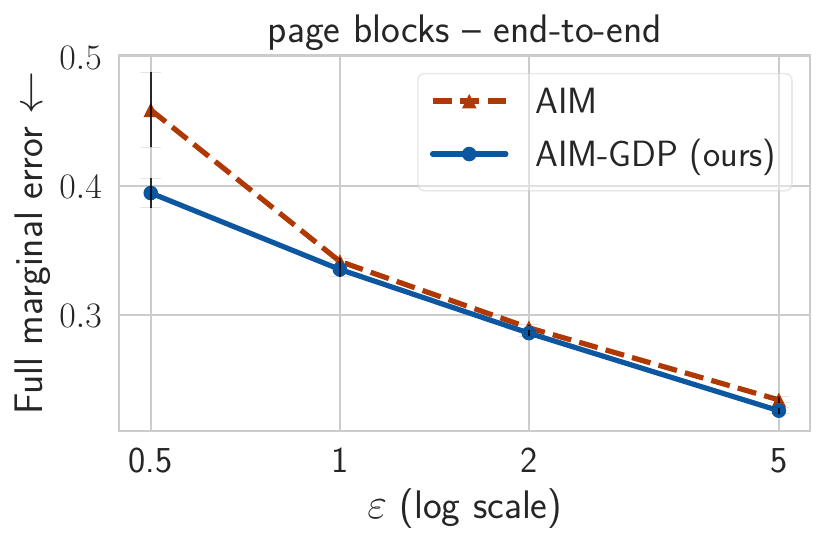}
    \vspace{0.5em}
    \includegraphics[width=0.45\linewidth]{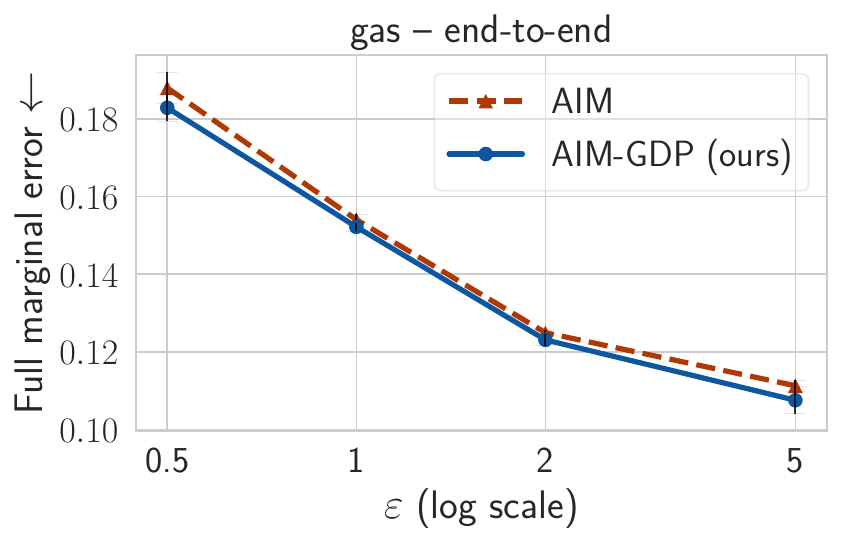}
    \caption{\aimgdp vs. \aim across full marginal error across four datasets in an end-to-end pipeline. We report 95\% bootstrap CI over the replications.}
    \label{fig:aimgdp_full_e2e}
\end{figure}

\begin{figure}[t!]
\centering
\includegraphics[width=0.55\textwidth]{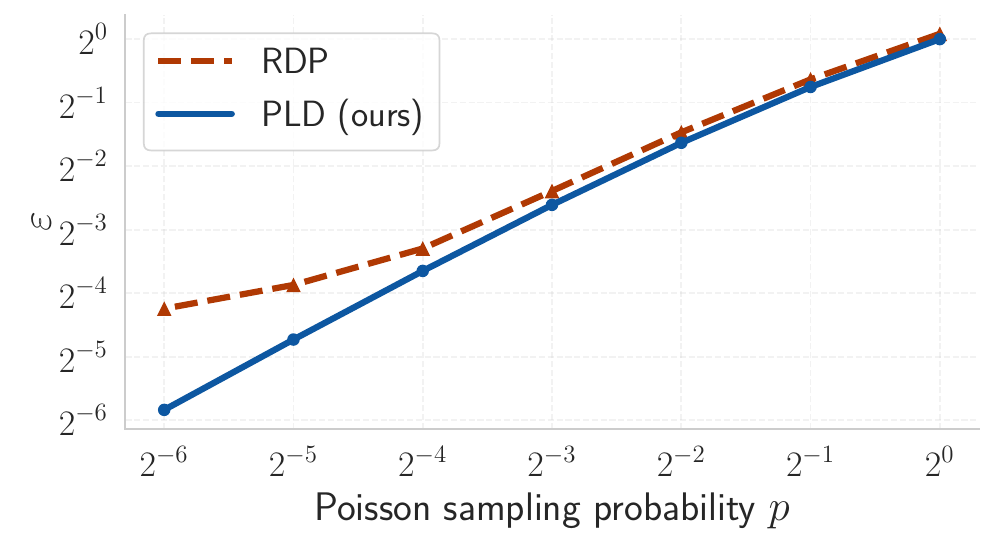}
\caption{DP $\varepsilon$ at $\delta = 10^{-6}$ across Poisson sampling probabilities $p = 2^{-k}$ ($k = 0, \dots, 6$), with $\rho$ calibrated to an un-amplified $\varepsilon = 1.0$ at $p = 1$.}
\label{fig:subsampled_accounting}
\end{figure}

\subsection{Generating Synthetic Text Data with LLMs and Tabular Conditioning}\label{app:actg}

We follow the conditional generation approach of \citet{hu2026actg}, which we summarize here. The goal is to generate a synthetic version of the \emph{bioRxiv} dataset. First, \citeauthor{hu2026actg} fed each abstract in the \emph{bioRxiv} dataset to Gemini-2.5-flash-lite to extract structured features. In particular, the following 8 features were extracted: primary-research-area, model-organism, 
experimental-approach, dominant-data-type, research-focus-scale, disease-mention, sample-size, research-goal. Each feature is categorical. See Figure 7 in \citet{hu2026actg} for the allowed values of each category. The result is 8 categorical features for each abstract in the dataset. 

These features are treated as sensitive, hence \aim is run on them to generate a list of synthetic features. The next step is to condition a DP-finetuned language model on these privatized features. Here, Gemma 3 1B is fine tuned with differential privacy on the feature/abstract pairs. We used a batch size of 448 and iterations $T = 512$ (8 epochs). We set clipping norm $c$ to 1. We tune the learning rate $
\eta \in \{\num{1e-4}, \num{3e-4}, \num{1e-3}\}$, with $\num{3e-4}$ converging the fastest. We used a cosine decay learning rate schedule. We perform LoRA finetuning with a LoRA rank of 8, $\alpha = 16$ and dropout of 0.05. Training took 3 hours on 2 A100 GPUs, where data parallel was used to parallelize training. Since the 95\% quantile of the abstracts in bioRxiv had 512 tokens, we used a context length of 512. We use the MAUVE scores~\cite{pillutla2023mauve} as the main evaluation metric for this demonstration.

We remark that we trained for a tenth as long as \citeauthor{hu2026actg} did, as they trained for 80 epochs whereas we trained for 8. Moreover, we used a smaller learning rate of $\num{3e-4}$ whereas they used $\num{1e-3}$. While our evaluation loss values are within a tenth of \cite{hu2026actg}, our MAUVE scores are smaller than \cite{hu2026actg} reported scores, likely due to this reason. Namely, we report a MAUVE score of 60\% at $\varepsilon=1$ for the RDP baseline, which corresponds to vanilla ACTG, where as \cite{hu2026actg} reports a little under 80\%. Nevertheless, we are confident that, had we run for the full 80 epochs, we would still see improvement in our PLD approach since our approach allows for training at a smaller DP-SGD noise multiplier as reported in \cref{tab:actg_noise_multipliers}.

Once the model was fine tuned, we used nucleus decoding to sample conditioned on the features from \aim. We used temperature $T = 1.0, \text{top-p} = 0.95, \text{top-k} = 0$, and a batch size of 64. We sampled $N_\text{syn} = \num{5000}$ abstracts. Sampling took 3.5 hours on 2 A100 GPUs, where the parallelization was done by feeding half the features to condition on (2,500) to each GPU. 

The MAUVE score from \cref{fig:utility} was computed between the $\num{5000}$ synthetic examples sampled from the fine tuned LLM and the $\num{5000}$ real heldout examples from the test data from bioRxiv. MAUVE scores use an embedding model. Following \cite{hu2026actg}, we used Spector2~\cite{singh2023scirepeval}, a specialized sentence embedding model finetuned on abstracts of scientific papers. \cite{hu2026actg} showed that this domain-specific embedding model improved the reliability of the MAUVE scores. 

We repeated the ACTG training pipeline twice, once using  \citeauthor{hu2026actg}'s original accounting approach, which used RDP accounting with \aim, and our proposed technique, which uses \aimgdp and PLD accounting. The \mixcal algorithm was used in the following manner. We discuss the PLD approach here. The RDP approach follows analogously. Suppose our target is for the composition of \aimgdp and DP-SGD to satisfy $(\varepsilon^\star, \delta^\star)$-DP. Given a starting dummy budget $(\varepsilon_\text{dummy}, \delta_\text{dummy})$, we allocate 71\% to \aim and 29\%  to DP-SGD following \cite{hu2026actg}. 
By this, we mean we calibrate \aimgdp to satisfy $(0.71 \cdot \varepsilon_\text{dummy}, 0.71 \cdot \delta_\text{dummy})$-DP, and DP-SGD to satisfy $(0.29 \cdot \varepsilon_\text{dummy}, 0.29 \cdot \delta_\text{dummy})$-DP. We then run PLD composition on the mechanism. As mentioned in the main body,  the resulting PLD $\varepsilon$ will larger than our target $\varepsilon^\star$, hence we run a binary search over the dummy value of $\varepsilon_\text{dummy}$ until the composed mechanisms under PLD  composition satisfies our target $(\varepsilon^\star, \delta^\star)$.

\end{appendices}

\end{document}